\documentclass{article}

\usepackage{arxiv}

\usepackage[utf8]{inputenc} 
\usepackage[T1]{fontenc}    
\usepackage{hyperref}       
\usepackage{url}            
\usepackage{booktabs}       
\usepackage{amsfonts}       
\usepackage{nicefrac}       
\usepackage{microtype}      
\usepackage{lipsum}

\usepackage{graphicx}
\usepackage{algorithm}
\usepackage[noend]{algorithmic}
\usepackage{bbm}
\usepackage{amsmath, amsthm}
\usepackage{stmaryrd}
\usepackage{amssymb}
\usepackage{paralist}
\usepackage{bm}
\usepackage[numbers]{natbib}
\usepackage[switch]{lineno} 
\usepackage{eurosym}
\usepackage{nccmath} 

\usepackage{chngcntr}
\usepackage{apptools}
\usepackage{caption,subcaption}
\AtAppendix{\counterwithin{lemma}{section}}
\AtAppendix{\counterwithin{theorem}{section}}

\newtheorem{definition}{Definition}
\newtheorem{theorem}{Theorem}
\newtheorem{proposition}{Proposition}

\newtheorem{lemma}{Lemma}
\newtheorem{remark}{Remark}

\newcommand{\GainD}{U^d}
\newcommand{\LossU}{U^u}

\newcommand{\Vset}{\mathcal{V}}
\newcommand{\Cset}{\mathcal{C}}
\newcommand{\Pna}{P_0}
\newcommand{\fna}{f_0}

\newcommand{\pa}{p_a}
\DeclareMathOperator*{\argmax}{arg\,max}
\newcommand{\Gb}{\bm{G}}
\newcommand{\Gbmax}{\Gb_{\max}}

\newcommand{\Gmax}{G_{\max}}

\newcommand{\Ind}{\mathbbm{1}}

\newcommand{\Sset}{\mathcal{S}}

\newcommand\cfa{C_{\text{fa}}}
\newcommand\cd{c_{\textrm{d}}}
\newcommand{\alphaeq}{\alpha^*}
\newcommand{\betaeq}{\beta^*}
\newcommand{\factor}{\ell}
\newcommand{\amount}{A}

\usepackage{color}

\title{Scalable Optimal Classifiers for Adversarial Settings under Uncertainty\thanks{This work was supported by the French National Research Agency through the ``Investissements d'avenir'' program (ANR-15-IDEX-02) and through grant ANR-16-TERC0012; by the DGA; by the Alexander von Humboldt Foundation.}}

\author{
  Benjamin Roussillon\\
  Univ. Grenoble Alpes, Inria, CNRS, Grenoble INP, LIG\\
  \texttt{benjamin.roussillon@inria.fr} \\
   \And
 Patrick Loiseau \\
 Univ. Grenoble Alpes, Inria, CNRS, Grenoble INP, LIG\\
  \texttt{patrick.loiseau@inria.fr} \\
}

\begin{document}
\maketitle

\begin{abstract}
We consider the problem of finding optimal classifiers in an adversarial setting where the class-1 data is generated by an attacker whose objective is not known to the defender---an aspect that is key to realistic applications but has so far been overlooked in the literature. To model this situation, we propose a Bayesian game framework where the defender chooses a classifier with no \emph{a priori} restriction on the set of possible classifiers. The key difficulty in the proposed framework is that the set of possible classifiers is exponential in the set of possible data, which is itself exponential in the number of features used for classification. To counter this, we first show that Bayesian Nash equilibria can be characterized completely via functional threshold classifiers with a small number of parameters. We then show that this low-dimensional characterization enables to develop a training method to compute provably approximately optimal classifiers in a scalable manner; and to develop a learning algorithm for the online setting with low regret (both independent of the dimension of the set of possible data). We illustrate our results through simulations. 
\end{abstract}

\section{Introduction}

Detecting attacks such as spam, malware, or fraud is a key part of security. This task is usually approached as a \emph{binary classification} problem where the defender classifies incoming data (login pattern, text features, or other data depending on the application) as legitimate (non-attack, modeled as class 0) or malicious (attack, modeled as class 1) \cite{Tsai09a,Caruana12a}.


It is well known that using standard classification algorithms for this task leads to poor performance because attackers are able to avoid detection by adjusting the data that they generate while crafting their attacks \cite{Nelson09a,Sommer10a,Thomas13a,Wang14a}. There is a vast literature on adversarial classification (see Section~\ref{sec.related}), but these works often propose ad-hoc defense methods optimized against specific attacks without fully modeling the attacker's adaptiveness. This leads to an arms race between attack and defense papers.

To better take into account the interaction between attacker and defender, several game-theoretic models of adversarial classification have emerged over the last decade (see Section~\ref{sec.related}). 
Most of them, however, have two crucial limitations. First, they restrict the possible classifiers to a specific set of known parameterized classifiers and assume that the defender only selects those parameters. Second, they assume complete information about the attacker's objective,\footnote{with the exception \cite{Grosshans13a}, but which considers regression.} which is often too strong in practice \cite{Vorobeychik18a}.

In a recent paper, Dritsoula et al. \cite{Dritsoula17a} propose a model where the defender can select \emph{any} classifier (i.e., function from the set of data to $\{0, 1\}$). A key difficulty lies in the exponential size of the resulting set of classifiers. The authors show that it is possible to restrict it to a small set of \emph{threshold} classifiers on a function that appears in the attacker's payoff. The classifiers identified, however, have no parameter and their solution method is ad-hoc for the restrictive model chosen---with complete information and simplistic payoffs---, hence it cannot extend to more realistic scenarios. In realistic adversarial classification scenarios with uncertainty on the attacker's payoff, this leaves open the questions: \emph{What classifiers shoud the defender use at equilibrium?} \emph{And how to compute optimal classifiers in a scalable manner?}

In this paper, we answer both questions through the following contributions:
\begin{trivlist}
    \item[1.] We introduce structural extensions to the model of \cite{Dritsoula17a} where the defender can choose any function from a set of data $\Vset$ to $\{0, 1\}$ as a classifier: we model the uncertainty of the defender on the attacker's payoff as a Bayesian game and use generalized payoffs (see Sec.~\ref{sec.model.disc}).
    

    \item[2.] We characterize the equilibrium of the game and exhibit a set of optimal threshold classifiers depending on a small number of parameters (in number independent of $|\Vset|$). Our method first uses a classical technique in resource allocation games (see Sec.~\ref{sec.related}) to establish a link between a mixed strategy on the set of classifiers and a `random classifier' that assigns a probability in $[0,1]$ to every data vector $v\in\Vset$. The set $\Vset$, however, is still exponentially large---this is the key challenge in our work. We then show that the `random classifier' used by the defender at equilibrium has a specific form described with a small number of parameters, and that finding it is equivalent to maximizing a piecewise linear function of the previously mentioned parameters. This low-dimensional characterization has many interesting consequences: it enables using classical stochastic programming and online optimization techniques for efficient learning both online and offline in our game. 
    

    \item[3.]  We show that our parametric expression of equilibrium classifiers allows the defender to train parameters on a labeled dataset with access to only limited information. In particular, our training method, which leverages classical stochastic programming techniques combined with our low-dimensional characterization, produces error bounds independent of $|\Vset|$ and does not require knowledge of the non-attack distribution. This gives much desired scalability since $|\Vset|$ is exponential in the number of features and might be large.

    \item[4.] We illustrate our results through numerical simulations on different games, in particular a credit card fraud game built from the distributions in the publicly available real-world dataset \cite{fraud_set} introduced in \cite{dal2015calibrating}.
    
        \item[5.] We also show that our parametric expression of equilibrium classifiers allows the defender to learn in an online setting---where they update the classifier and receive feedback from the classification at each time step---with very little regret (in particular, independent of $|\Vset|$). 
    
    
\end{trivlist}

Our results provide a basis for designing provably robust classifiers for adversarial classification problems. 
Our characterization of equilibrium strategies also emphasizes the potential of randomized operating point methods from \cite{Lisy14a}: our final classification algorithm can be seen as a randomized operating point on a non-trivial class of optimal threshold classifiers (see discussion below Theorem~\ref{thm.threshold}).
Interestingly, we find that the set of optimal randomized defenses is of low pseudo-dimension. This highlights recent results by Cullina et al.  \cite{pac_adversaries}: in our model, facing adversaries simplifies the learning process as worst-case attacks are predictable while classical learning is chaotic (see discussion below Proposition~\ref{thm.eql}).
This is further supported by our finding that the set of optimal threshold classifiers is of VC dimension $1$ \cite{Shalev-Shwartz14a}; hence our result could be interpreted in hindsight as showing that a reduction to classifiers of VC dimension 1 would come at no loss to the defender. Yet, we emphasize that there is no reason \emph{a priori} why this set would be sufficient, it is a consequence of our results.


\subsection{Related work}
\label{sec.related}

\paragraph{Adversarial learning:}
The literature on adversarial learning usually studies two types of attacks: `poisoning attacks', where the attacker can alter the training set to tamper the classifier's training \cite{Dalvi04a,Globerson06a,Barreno10a,Laskov10a,Huang11a,Zhou14a}; and `evasion attacks', where the attacker tries to reverse engineer a fixed classifier to find a negative instance of minimal cost \cite{Lowd05a,Nelson10a,Li15a}. This literature, however, does not fully model the attacker's adaptiveness, which often leads to an arms race. 
In recent years, the adversarial learning research focused on evasion attacks called adversarial examples that affect deep learning algorithms beyond attack detection applications \cite{Goodfellow15a,Papernot16a,Papernot18a}. These works, however, follow the same pattern. \cite{pac_adversaries} extend PAC theory to adversarial settings and show that fundamental learning bounds can be extended to this setting and that the adversarial VC dimension can be either larger or smaller than the standard one. 

\paragraph{Game-theoretic models of adversarial classification:}
A number of game-theoretic models of adversarial classification have been proposed, with various utility functions and hypotheses on the attacker's capabilities.  
Most of them, however, restrict a priori the possible classifiers: \cite{Zhou12a,Zhou14a} rely on kernel methods; \cite{Kantarcioglu11a} assumes that the defender uses a single type of classifier (though unspecified in the model); \cite{Dalvi04a} focuses on naive Bayes classifiers (and only compute one-stage best responses); \cite{Bruckner11a,Bruckner12a} constrain the classifier to a specific form and look for the (pure) equilibrium value of the parameters; \cite{Li14a} uses a different model but also restrict to linear classifiers; \cite{dasgupta2020improving} restricts the defender to a set of adversarially trained classifiers of different strengths; \cite{Li15a} uses a more general classifier, but restricts for most results to a family of classifiers constructed on a given basis (their model of the attacker is also more constrained than ours); and \cite{Lisy14a} abstracts away the classifier through a ROC curve (attacker and defender only select thresholds). 
In contrast, the objective of our work is to derive the optimal form of the classifiers so we do not make any restriction a priori on the classifiers used. 

At the exception of \cite{Lisy14a, Li15a}, the aforementioned  papers build deterministic classifiers while recent papers tend to advocate for randomization: \cite{bulo2016randomized} introduces random strategies on top of \cite{Bruckner12a} while \cite{Juan19a} highlights the importance of randomized attacks and \cite{pinot2020randomization} of randomized defenses (albeit without being able to characterize the equilibrium). In our work, we completely characterize the equilibrium and naturally find that it must involve randomized attack and defense strategies. 

It is important to understand that these works consider two main types of model. \cite{Bruckner11a, Bruckner12a, pinot2020randomization} study \emph{adversarial learning} problems where the learning problem is defined even without attackers (e.g., image recognition), whereas \cite{Dalvi04a, Lisy14a, Zhou12a, Zhou14a, Kantarcioglu11a, Li14a, Li15a} study \emph{adversarial classification} where the learning problem is to detect attacks and exists only because there are attackers (e.g., spam filtering). These models lead to different attack methods and defenses. Our work belongs to the second category, of adversarial classification problems.

\paragraph{Security games:}
Our game has similarities with \emph{security resource allocation} games \cite{Chen09a,Kiekintveld09a,Bosansky11a,Marecki12a,Fang13a,Balcan15a-short,Schlenker18a, Brown16a} used in applications such as airport security \cite{Pita2009UsingGT}. These works consider a defender with limited resources (e.g. guards) to be allocated to the defense of critical targets. In these settings, problems are at a relatively low scale and are usually entirely described via loss in case of attack of an undefended target. The challenge is the management of the limited amount of resources, which produces NP-hard problems \cite{Korzhyk10a} preventing these models to be transferred to very large scale settings.
Our work studies a similar setting applied to classification, where targets would correspond to attack vectors in $\Vset$. In contrast to the security games literature, we do not impose limited resources (the defender self-restricts its detection to limit false alarm costs), which eliminates the combinatorial issue. We are then able to provide a very different characterization of the solutions with applicability to classification as well as to scale to very large sets $\Vset$ that is never studied in classical security games and is the major challenge in our model.

\paragraph{Exponential zero-sum games:}
Our game reparametrization with `randomized classifiers' to reduce the dimension of the set of classifiers from $2^{|\Vset|}$ to $|\Vset|$ borrows ideas classical in security games.
This technique is also studied for more generic zero-sum games \cite{Immorlica11a}; but with objectives and limitations similar to security games.
%

\section{Model}

In this section we present our game-theoretic model. We introduce utility functions from the defender's viewpoint as we focus on optimal classifiers. We then introduce the probability of detection function as a tool to reduce complexity and discuss the model's assumptions and applicability.

\subsection{Setting and notation}

Consider the following situation. A defender receives data samples that can be either attacks (class 1) or non-attacks (class 0) and wants to predict the class of incoming data. We assume that a data example is represented by a feature vector $v$ that belongs to the same set $\Vset$ regardless of the class.  This vector is typically a simplified representation of the actual attack/non-attack (e.g., spam/non-spam) in a feature space used to perform the classification. We assume that the probability that a data example is an attack, denoted $\pa$, is fixed. 

Vectors corresponding to non-attacks follow a fixed probability distribution $\Pna$ on $\Vset$ whereas vectors corresponding to attacks are generated by attackers. Attackers choose the vector they generate to maximize a utility function (see below) depending on the classification of the defender. To model the uncertainty of the defender, we assume that strategic attackers are endowed with a type $i \in  \llbracket 1, m \rrbracket$ that encodes their utility. The defender does not know the type of the attacker but holds a prior  $(p_i)_{i\in \llbracket 1, m \rrbracket}$ on the possible types. 

The defender chooses a classifier in $\Cset = 2^{\Vset}$, that maps a vector to a predicted class. The defender maximizes a utility function balancing costs/gains in different cases as follows. A \emph{false negative} incurs a loss $\LossU_i(v)$  when facing a type-$i$  attacker. A \emph{true positive} incurs a gain $\GainD_i(v)$  when facing a type-$i$  attacker. A \emph{false positive} incurs a false alarm cost $\cfa(v)$. A \emph{true negative} incurs no cost.
The attacker's gain is the opposite of the defender's for each classification outcome.

Summarizing the above discussion, the utilities of the attacker and defender, when the attacker is of type $i$, are defined as follows: 
\begin{align}
    \label{eq.payoffs}
    U_i^A(v, c) = & \LossU_i(v)\Ind_{c(v) = 0} - \GainD_i(v)\Ind_{c(v) = 1},\\[-1mm]
    U_i^D(v, c) = & - \pa U_i^A(v, c) - (1 - \pa)  \sum_{v' \in \Vset}  \cfa(v^{\prime}) \Pna(v') \Ind_{c(v') = 1}.\nonumber
\end{align}

We assume that $\Vset$ is finite and all functions of $v$ are arbitrary. Our main result, however, extends to $\Vset$ compact (details in Appendix~\ref{sec.continuous} due to space constraints).

The above primitives define a Bayesian game that we denote by $\mathcal{G}$. Note that we assume that all parameters of the game including $\pa$, $P_0$, and the utility functions (but not the attacker's type) are known to both players. (We will discuss later how to relax this assumption.)
As we will see, in this game, equilibria exist only in mixed strategy (intuitively, both players have an incentive to be unpredictable). For the defender, a mixed strategy $\beta$ is a probability distribution on $\Cset$. A mixed strategy of the attacker is a function $\alpha: \llbracket 1, m \rrbracket \to \Delta (\Vset)$ such that for all $i \in \llbracket 1, m \rrbracket$, $\alpha^i_{.}$ is a probability distribution over $\Vset$ chosen by a type-$i$ attacker.
Throughout the paper, we will use the standard solution concept of Bayesian Nash equilibrium, which intuitively prescribes that no player can gain from unilateral deviation.
\begin{definition}
\label{def.BNE}
$(\alpha^*, \beta^*)$ is a Bayesian Nash equilibrium (BNE) of the game $\mathcal{G}$ if and only if, for all $\alpha, \beta$, 
\begin{subequations}
\label{eq.BNE}
\abovedisplayskip=2mm
\begin{align}
\label{eq.BNE-D}
    \sum_{i \in \llbracket 1, m \rrbracket} p_i U^D_i(\alpha^*,\beta^*) &\ge \sum_{i \in \llbracket 1, m \rrbracket} p_i U^D_i(\alpha^*,\beta), \textrm{ and } \\[-1mm]
\label{eq.BNE-A}
    \sum_{i \in \llbracket 1, m \rrbracket} p_i U_i^A(\alpha^*, \beta^*) &\ge \sum_{i \in \llbracket 1, m \rrbracket} p_i U_i^A(\alpha, \beta^*).
\end{align}
\end{subequations}
\end{definition}

The defender's utility depends on the attacker they face. With the belief the defender holds on the probability of each attacker type, it is natural that the defender tries to maximize their average utility. The equilibrium is also described with the average utility of the different attacker types, but as the actions of different attacker types are unrelated it is equivalent to each type maximizing its own utility.

Finally, for all $i \in \llbracket 1, m \rrbracket$, we define $
    \underline{G}_i = \max_{v \in \Vset}\left(-\GainD_i(v) \right) \textrm{ and } 
    \overline{G}_i = \max_{v \in \Vset} \left( \LossU_i(v)\right),
$
which respectively represent the minimum possible gain of the attacker (even if all vectors are always detected they can gain this quantity) and  their maximum possible gain. Note that all results in our paper assume knowledge of these bounds (even when knowledge of utilities is limited). While finding these intervals is challenging if the utilities are arbitrary, they are easy to find in many applications from reasonable monotonicity assumptions on the utilities, as they simply represent the most damage an undetected/detected attack can cause.

\subsection{Preliminary: reduction of dimensionality}

A first difficulty of the model we study is the exponential size of $\Cset$ in $\Vset$. This issue is commonly found in resource allocation games (similar reparametrizations are found in other games such as dueling algorithms) and circumvented through the use of a probability of allocation function: only the probability that an abstract resource is allocated to a target is considered thus ignoring the actual allocation and removing combinatorial complexity (assuming that one can compute this function at equilibrium). In our case, in the spirit of \cite{Dritsoula17a}, we define a probability of detection $\pi$, for any strategy $\beta$ of the defender, as $\pi^{\beta}(v) = \sum_{c\in \Cset} \beta_c \Ind_{c(v) = 1}$.

This transformation exploits the fact that, as long as a vector is detected, the actual classifier used for the detection is not important. Thus, with this probability of detection function, we can rewrite the payoffs independently of classifiers: 
\begin{align}
\nonumber U_i^A(\alpha, \beta) &= \sum_{v \in \Vset}  \alpha^i_{v}\left[\LossU_i(v) -  \pi^{\beta}(v) \cdot \left(\LossU_i(v) + \GainD_i(v)\right)\right];\\[-1mm]
\label{eq.pi} U_i^D(\alpha, \beta) &= - \pa U_i^A(\alpha, \beta) - (1 - \pa) \sum_{v \in \Vset} \cfa(v)\Pna(v)\pi^{\beta}(v).
\end{align}

Any probability of detection function can be attained through simple threshold classifiers crafted for this function.
To see this, consider the set of threshold classifier $c(v) = \Ind_{\pi^{\beta}(v) \ge t}$ for some $t \in[0, 1]$. Then, picking a random threshold uniformly on $[0, 1]$ defines a strategy achieving detection probability $\pi^{\beta}(\cdot)$.

\subsection{Model discussion}
\label{sec.model.disc}

The main motivating scenarios for our model are detection of malicious behaviors such as spam (in emails, social media, etc.), fraud (e.g., bank or click fraud), or illegal intrusion. In such scenarios, the attacker is the spammer, fraudster or intruder while the non-attacker represents a normal user (e.g., non-spam message). The vector $v$ is a representation of the observed behavior on which the classification is done. For spam filtering, it can be a simplified representation of the messages obtained by extracting features such as number of characteristic words. The distribution $\Pna$ represents the distribution over those features for normal messages (not chosen with any adversarial objective). In our basic model, we assume that it is known by both players. It is reasonable in applications where it can be estimated from observation of a large number of easily obtainable messages (e.g., in social medias they are public). We relax it in Section~\ref{sec.stochastic} and Section~\ref{sec.online} where we show that the defender can learn well without a priori knowledge of $\Pna$, $\pa$ and $p_i$.

In our model the defender is uncertain of its own utility as soon as they have uncertainty regarding the attacker they face. Although not the most classical setting, it is meaningful and well studied in Bayesian games (see \cite{Forges92a}). It is well justified in our case. For instance, if a fraudster manages to get access to sensitive information or to an account, the amount of harm may differ depending on the skills and resources of the fraudster. In these fraud settings it makes sense that the attacker's gain is the defender's loss or a fraction of it (e.g. a bank must reimburse its clients or pay higher insurance fees if it is victim of fraud). We note here that our model is still valid for this last case as we rely on zero-sum min-max properties which are robust to small changes such as multiplying factors.

The interaction between classifier and attacker is often modeled as a Stackelberg game where the attacker observes and reacts to the defender's strategy. We focus on the (Bayesian) Nash equilibrium which makes sense if the attacker cannot have perfect information about the defender's strategy. More generally though, we will see that in our game the defender's strategy at BNE must be min-max; hence, any strategy of the defender in a Stackelberg equilibrium would have the same property. We use the Stackelberg model in the online setting where there would be a bigger difference. Note that this min-max property also yields robustness. 
%


Our payoff function generalizes that of \cite{Dritsoula17a} in a practically important way. In their model, a reward $R(v)$ is granted to an attack with vector $v$ regardless of the outcome and a fixed detection cost $\cd$ is paid if the attack is detected. This is unreasonable in many applications such as bank fraud. In our model, the utility in case of detected and undetected attacks are arbitrary unrelated functions of $v$ (which is equivalent to letting the detection cost $c_d$ depend on $v$). This alone breaks the ad-hoc method of \cite{Dritsoula17a} to compute the equilibrium. We also generalize to a Bayesian game (The complete information game is the case where $m=1$), and consider training and online learning problems of practical importance.


\section{BNE characterization and computation}
\label{sec.results}

In this section, we first characterize the equilibrium entirely and exhibit a class of threshold classifiers which are sufficient to define an optimal classifier. Leveraging this characterization, we then show how to compute approximately optimal strategies through training with limited knowledge.

\subsection{Equilibrium characterization}

Finding a Bayesian Nash equilibrium is often hard in general games. 
A key property is that our game is essentially zero-sum and can be reduced to a min-max problem.
Compiling this with the action space reduction via the probability of detection we are able to completely characterize the BNE.

Using the payoffs defined in \eqref{eq.payoffs}, we can see that adding the false alarm term to the payoff of the attacker gives an equivalent Bayesian zero-sum game (as this term is independent from the action of the attacker this addition does not change their strategy). This transformation does not change the defender's payoff. This implies that at equilibrium they maximize their minimum average gain and gives the following lemma (whose proof can be found in Appendix~\ref{proof.lem.minmax}):
\begin{lemma}
\label{lem.minmax}
Let $(\alpha^*, \beta^*)$ be a BNE. Then
\begin{equation}
\label{eq.min-max}
\beta^* \in \argmax_{\beta} \min_{\alpha}\sum_i p_i U_i^D(\alpha, \beta).
\end{equation}
\end{lemma}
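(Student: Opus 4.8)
The plan is to cash in the ``essentially zero-sum'' remark made just before the statement. First I would introduce the auxiliary Bayesian game $\tilde{\mathcal{G}}$ that has the same player sets and strategy sets as $\mathcal{G}$, the same defender payoffs $U_i^D$, but attacker payoffs $\tilde{U}_i^A(v,c) := \pa\, U_i^A(v,c) + (1-\pa)\sum_{v'\in\Vset}\cfa(v')\Pna(v')\Ind_{c(v')=1}$, chosen precisely so that $\tilde{U}_i^A = -U_i^D$ and $\tilde{\mathcal{G}}$ is genuinely zero-sum. The one thing to verify is that passing from $\mathcal{G}$ to $\tilde{\mathcal{G}}$ changes no one's incentives: against a fixed defender strategy $\beta$, the added term equals $(1-\pa)\sum_{v'}\cfa(v')\Pna(v')\,\pi^{\beta}(v')$, which does not depend on the attacker's strategy $\alpha$, and $\pa>0$ is merely a positive rescaling; hence $\argmax_\alpha\sum_i p_i \tilde{U}_i^A(\alpha,\beta) = \argmax_\alpha\sum_i p_i U_i^A(\alpha,\beta)$. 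Since the defender's payoff is literally untouched, a BNE $(\alpha^*,\beta^*)$ of $\mathcal{G}$ is also a BNE of $\tilde{\mathcal{G}}$, i.e. a saddle point of $V(\alpha,\beta):=\sum_i p_i U_i^D(\alpha,\beta) = -\sum_i p_i \tilde{U}_i^A(\alpha,\beta)$ (defender maximizes $V$, attacker minimizes $V$).

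Second, I would extract \eqref{eq.min-max} from this saddle-point property by a short sandwich rather than quoting the minimax theorem. The attacker's equilibrium condition in $\tilde{\mathcal{G}}$ says $\alpha^*$ minimizes $\alpha\mapsto V(\alpha,\beta^*)$, so $\min_\alpha V(\alpha,\beta^*) = V(\alpha^*,\beta^*)$. For an arbitrary defender strategy $\beta$ we have $\min_\alpha V(\alpha,\beta) \le V(\alpha^*,\beta) \le V(\alpha^*,\beta^*)$, where the last inequality is exactly the defender's BNE condition \eqref{eq.BNE-D} (that $\beta^*$ maximizes $V(\alpha^*,\cdot)$). Taking the supremum over $\beta$ gives $\max_\beta\min_\alpha V(\alpha,\beta) \le V(\alpha^*,\beta^*) = \min_\alpha V(\alpha,\beta^*)$, while the reverse inequality is immediate; hence $\min_\alpha V(\alpha,\beta^*) = \max_\beta\min_\alpha V(\alpha,\beta)$, i.e. $\beta^*\in\argmax_\beta\min_\alpha\sum_i p_i U_i^D(\alpha,\beta)$. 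Finiteness of $\Vset$ and $\Cset$ ensures every max and min above is over a compact simplex of a continuous (indeed piecewise-linear) function, so all these extrema are attained and the expressions are well defined.

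The only real content, and the step I would be most careful to state cleanly, is the very first observation: that absorbing the false-alarm cost into the attacker's objective leaves both best-response correspondences intact. It rests on two small but essential facts -- that this term depends on $\beta$ alone (via $\pi^{\beta}$) and that $\pa\in(0,1)$ -- after which everything reduces to the textbook fact that in a zero-sum game every equilibrium strategy is a maxmin (security) strategy, reproved inline above. I would also remark, for use later, that the symmetric reading of the same argument shows $\alpha^*$ is a minmax strategy; this is not needed for the lemma but is convenient for the equilibrium characterization that follows.
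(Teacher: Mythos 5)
Your proof is correct and follows essentially the same route as the paper's: absorb the false-alarm term (which is independent of $\alpha$) into the attacker's payoff to obtain an equivalent zero-sum game, then use the saddle-point/sandwich argument to conclude that $\beta^*$ is a max-min strategy. The only difference is presentational — you spell out the sandwich inequality that the paper leaves implicit, and you explicitly note the symmetric statement for $\alpha^*$, which the paper records as a separate remark.
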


Computing the min-max strategies of Lemma~\ref{lem.minmax} can be done via a classical transformation to a linear program, but this ``naive'' program would be of size exponential in $|\Vset|$. Even by expressing it in terms of $\pi^{\beta}$, the program would remain of size $|\Vset|$, which may be too large.
Instead, we will leverage the min-max property to show that the equilibrium can be described compactly using a small number of parameters $\Gb = (G_1, \cdots, G_m)$ that can be interpreted as the utility of the attacker for each type.
Formally, we define:
\begin{definition}[Optimal probability of detection]
\label{proba.optim}
For any $\Gb \in [\underline{G}_1, \overline{G}_1]\times ... \times[\underline{G}_m, \overline{G}_m]$, let 
\abovedisplayskip=1mm
\begin{equation}
\label{eq.piG}
\pi_{\Gb}(v) = \max \left\{0, \max_i\left\{\frac{\LossU_i(v) - G_i}{\LossU_i(v) + \GainD_i(v)}\right\}\right\}, \; \forall v\in \Vset. 
\end{equation}
\end{definition}
As we will see, this quantity is the unique probability of detection that guarantees attacker utility below $\Gb$ while minimizing the false alarms, so it plays a key role in the BNE strategy.
In particular, it allows us to express the strategy of the defender as the maximum of a concave function of $\Gb$: 
\begin{definition}[Minimum gain function $U^D$]
\label{def.mingain}
For all $\Gb \in [\underline{G}_1, \overline{G}_1]\times ... \times[\underline{G}_m, \overline{G}_m]$, let $
U^D(\Gb) = -\pa\sum_i p_i G_i - (1 - \pa)\sum_{v \in \Vset} \cfa(v)\Pna(v) \pi_{\Gb}(v).$
\end{definition}

This function represents the minimum utility of the defender assuming they use a probability of detection function $\pi_{\Gb}(\cdot)$ for some $\Gb$. %
It allows us to state our parametrization result which is the main tool we use to prove all our core results.
\begin{proposition}
\label{thm.eql}
For any 
    $\Gbmax \in \argmax_{\Gb \in [\underline{G}_1, \overline{G}_1]\times ... \times[\underline{G}_m, \overline{G}_m]}(U^D(\Gb))$,
 any strategy of the defender that yields a probability of detection function $\pi_{\Gbmax}(v)$ for all $v \in \Vset$ is a min-max strategy and $\max_{\beta} \min_{\alpha}\sum_i p_i U_i^D(\alpha, \beta) = U^D(\Gbmax)$.
\end{proposition}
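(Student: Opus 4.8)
The plan is to push the argument through the probability-of-detection reparametrization of \eqref{eq.pi} and reduce the proposition to a scalar identity between a maximization over detection functions $\pi\in[0,1]^{\Vset}$ and one over the $m$-dimensional box $[\Glow_1,\Gup_1]\times\cdots\times[\Glow_m,\Gup_m]$. First, by Lemma~\ref{lem.minmax} it suffices to analyze $\max_\beta\min_\alpha\sum_i p_i U_i^D(\alpha,\beta)$, which by \eqref{eq.pi} depends on $\beta$ only through $\pi^\beta$. For fixed $\pi=\pi^\beta$, the inner minimization over $\alpha$ decouples across types (types' mixed strategies are unrelated and the objective is additive in $i$), and each type maximizes a linear functional over $\Delta(\Vset)$, whose optimum is attained at a vertex; hence
\[
\min_\alpha\sum_i p_i U_i^D(\alpha,\beta)=-\pa\sum_i p_i\,g_i(\pi^\beta)-(1-\pa)\sum_{v\in\Vset}\cfa(v)\Pna(v)\pi^\beta(v)=:F(\pi^\beta),
\]
where $g_i(\pi):=\max_{v\in\Vset}\bigl[\LossU_i(v)-\pi(v)(\LossU_i(v)+\GainD_i(v))\bigr]$ is the type-$i$ attacker's best-response value. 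Since every $\pi\in[0,1]^{\Vset}$ is realized by some $\beta$ (the random-threshold construction following \eqref{eq.pi}), this gives $\max_\beta\min_\alpha\sum_i p_i U_i^D(\alpha,\beta)=\max_{\pi\in[0,1]^{\Vset}}F(\pi)$, and $\Gbmax$ exists since $U^D$ is continuous on the compact box.

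Next I would match $\max_{\pi}F(\pi)$ with $\max_{\Gb}U^D(\Gb)$ by two inequalities. For ``$\ge$'': for any $\Gb$ in the box, the defining inequalities of $\pi_{\Gb}$ in Definition~\ref{proba.optim} — using the sign condition $\LossU_i(v)+\GainD_i(v)>0$, which is also what makes $\pi_{\Gb}$ well defined in $[0,1]$ on the box — say exactly that the type-$i$ attacker's value at every vector under $\pi_{\Gb}$ is $\le G_i$, so $g_i(\pi_{\Gb})\le G_i$; since $\pa,p_i\ge0$ this yields $F(\pi_{\Gb})\ge U^D(\Gb)$. For ``$\le$'': fix $\pi\in[0,1]^{\Vset}$ and set $\Gb^{\pi}:=(g_1(\pi),\dots,g_m(\pi))$. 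Since $\LossU_i+\GainD_i>0$, detection only lowers the attacker's per-vector value below $\LossU_i(v)$, so $g_i(\pi)\le\Gup_i$; evaluating at a maximizer of $-\GainD_i$ gives $g_i(\pi)\ge\Glow_i$, so $\Gb^{\pi}$ lies in the box. Rearranging $g_i(\pi)\ge\LossU_i(v)-\pi(v)(\LossU_i(v)+\GainD_i(v))$ for all $v,i$ shows $\pi(v)\ge\pi_{\Gb^{\pi}}(v)$ pointwise; since $\cfa,\Pna\ge0$ the false-alarm term under $\pi$ is then no smaller than under $\pi_{\Gb^{\pi}}$, so $F(\pi)\le U^D(\Gb^{\pi})\le\max_{\Gb}U^D(\Gb)$. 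Combining the two bounds, $\max_{\pi}F(\pi)=\max_{\Gb}U^D(\Gb)=U^D(\Gbmax)$.

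Finally, any $\beta$ with $\pi^\beta=\pi_{\Gbmax}$ satisfies $\min_\alpha\sum_i p_i U_i^D(\alpha,\beta)=F(\pi_{\Gbmax})$, which is $\ge U^D(\Gbmax)$ by the first inequality above and $\le\max_\pi F(\pi)=U^D(\Gbmax)$; hence it equals $U^D(\Gbmax)=\max_\beta\min_\alpha\sum_i p_i U_i^D(\alpha,\beta)$, which proves both assertions. I expect the crux to be the ``$\le$'' direction of the second step: one must recognize that an arbitrary detection function can be ``compressed'' to $\pi_{\Gb^{\pi}}$ — this leaves every attacker best-response value unchanged (they are already best responses) while only decreasing the false-alarm cost — so that nothing is lost by restricting to the $m$-parameter family $\{\pi_{\Gb}\}$; the verification that $\Gb^{\pi}$ stays in the box is where the payoff sign conditions genuinely enter and deserves to be spelled out.
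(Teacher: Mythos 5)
Your proof is correct and follows essentially the same route as the paper's: the lower bound comes from the defining inequality of $\pi_{\Gb}$ (every type's best-response value is at most $G_i$), and the upper bound from mapping an arbitrary $\pi$ to $\Gb^{\pi}=(g_1(\pi),\dots,g_m(\pi))$ and using the pointwise domination $\pi\ge\pi_{\Gb^{\pi}}$. Your added checks that $\Gb^{\pi}$ lies in the box and that the sign condition $\LossU_i(v)+\GainD_i(v)>0$ is what makes everything well defined are worthwhile details that the paper's proof leaves implicit, but they do not change the argument.
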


A proof of Proposition~\ref{thm.eql} can be found in Appendix~\ref{proof.thm.eql}. The proof relies on the min-max property of the problem which implies that the defender must maximize their minimum gain. We show that for a given utility profile $\Gb$, the minimum gain of the defender as defined in \eqref{eq.min-max} is at least  $U^D(\Gb)$. 
However, the key difficulty is that not all utility profiles $\Gb \in [\underline{G}_1, \overline{G}_1]\times ... \times[\underline{G}_m, \overline{G}_m]$ are feasible and the set of feasible utility profiles needs not be convex due to our Bayesian game and arbitrary functions; hence $U^D(\Gb)$ could be meaningless. Our proof bypasses this difficulty by showing that $\pi_{\Gb_{\max}}(\cdot)$ is a min-max strategy in any case and shows as a corollary that $\Gb_{\max}$ is a feasible utility profile.

Proposition~\ref{thm.eql} states that in order to find the equilibrium strategy, the defender should only find $m$ parameters ($G_1, \cdots, G_m$), corresponding to the maximum utility that it should let each attacker type gain. From those parameters, the probability of detection function is naturally defined. This has multiple consequences.

First, from this characterization we deduce that one does not need to know all the parameters of the problem to find a good strategy. Finding ``good enough'' parameters for the utility of the different attacker types allows the defender to fully define its strategy. This is the main tool allowing us to define strategies which can generalize to unknown vectors in Section~\ref{sec.stochastic}.
In particular, in Theorem~\ref{thm.exponent} we prove that near-optimal (and even optimal with high probability) classifiers can be computed by training the model on a labeled dataset with very limited information. 
Note that this is a key difference between our work and security games where the probability of allocation is computed directly using a linear program. There, the lack of a simple expression for the allocation probability prevents the definition of strategies that can generalize. It is also worth noting that unlike linear programs, our method can be generalized to a continuous vector set---we refer to Appendix~\ref{sec.continuous} for details about that.


Second, the result from Proposition~\ref{thm.eql} shows that the presence of strategic adversaries \emph{simplifies} learning in our problem. Indeed, the class of real valued functions $\{\pi_{\Gb}\}$ which contains the optimal strategy is of low pseudo-dimension (e.g., if there exist $v_1$ (resp. $v_0$) of class 1 (resp $0$) with $\LossU(v_0) > \LossU(v_1)$ and $\GainD(v_0) < \GainD(v_1)$, these two points cannot be shattered). 
This can be explained by the predictable aspect of adversaries acting according to their best-response. On the contrary, when facing non-strategic adversaries the optimal strategy would be a cost-sensitive adaptation of the naive Bayes classifier, which can potentially be any arbitrary function of $2^{\Vset}$ (since we make no assumption on $\Pna$). 
This is noteworthy as such a possibility was hinted at by Cullina et al. \cite{pac_adversaries} who show that, for adversaries who can modify vectors in some neighborhood, the adversarial VC dimension can be either lower or higher than the standard one---i.e., the complexity can either increase or decrease in the presence of adversaries. In our adversarial classification model, the complexity drastically decreases. This suggests that classifiers relying on simply adapting classical training might be inefficient as they do not take into account the fundamental complexity differences between classical and adversarial learning.

With Proposition~\ref{thm.eql} describing the probability of detection function at equilibrium, we can deduce a characterization in terms of threshold classifiers. 
\begin{definition}[Generalized threshold classifiers]
For all $\Gb \in \mathbb{R}^m$, define
\begin{align*}
\Cset^T_{\Gb} = \{c \in \Cset: c(v) = \Ind_{\pi_{\Gb}(v) \ge t},
\forall v \in \Vset \textrm{ for some } t \in [0, 1]\}.
\end{align*}
\end{definition}
\begin{theorem}
\label{thm.threshold}
There exists $\Gb  \in  \mathbb{R}^m$ such that the defender can achieve equilibrium payoff using only classifiers from $\Cset^T_{\Gb}$.
\end{theorem}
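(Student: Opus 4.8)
The plan is to take the vector $\Gb=\Gbmax$ provided by Proposition~\ref{thm.eql} and show that the single threshold-classifier family $\Cset^T_{\Gbmax}$ already lets the defender reproduce the equilibrium probability of detection $\pi_{\Gbmax}(\cdot)$ exactly, hence (again by Proposition~\ref{thm.eql}, together with Lemma~\ref{lem.minmax}) achieve the equilibrium payoff. The construction is exactly the randomized-threshold device already sketched in Section~\ref{sec.results}: fix $\Gb=\Gbmax$, and for each $t\in[0,1]$ let $c_t\in\Cset$ be defined by $c_t(v)=\Ind_{\pi_{\Gbmax}(v)\ge t}$, so that $c_t\in\Cset^T_{\Gbmax}$ by definition. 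Let the defender play the mixed strategy $\beta^*$ that draws $t$ uniformly on $[0,1]$ and then uses $c_t$.

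First I would verify that $\beta^*$ is supported on $\Cset^T_{\Gbmax}$ (immediate, since every $c_t$ lies in that set) and that it induces the desired probability of detection. For any fixed $v\in\Vset$,
\begin{equation*}
\pi^{\beta^*}(v)=\sum_{c\in\Cset}\beta^*_c\,\Ind_{c(v)=1}
=\int_0^1 \Ind_{c_t(v)=1}\,dt
=\int_0^1 \Ind_{\pi_{\Gbmax}(v)\ge t}\,dt
=\pi_{\Gbmax}(v),
\end{equation*}
where the last equality uses $\pi_{\Gbmax}(v)\in[0,1]$, which holds by Definition~\ref{proba.optim} and the fact that $\Gb_i\in[\Glow_i,\Gup_i]$ forces each ratio $\frac{\LossU_i(v)-G_i}{\LossU_i(v)+\GainD_i(v)}$ to lie in $[0,1]$ (so the outer $\max\{0,\cdot\}$ and the implicit cap at $1$ are consistent). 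Thus $\beta^*$ yields exactly the probability of detection $\pi_{\Gbmax}$.

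Next I would invoke Proposition~\ref{thm.eql} with $\Gb=\Gbmax$: any strategy of the defender yielding probability of detection $\pi_{\Gbmax}(\cdot)$ is a min-max strategy and attains value $U^D(\Gbmax)=\max_\beta\min_\alpha\sum_i p_i U_i^D(\alpha,\beta)$. Since the defender's payoff in \eqref{eq.pi} depends on $\beta$ only through $\pi^\beta$, and $\pi^{\beta^*}=\pi_{\Gbmax}$, the strategy $\beta^*$ attains this min-max value. Finally, by Lemma~\ref{lem.minmax} the defender's BNE payoff equals this min-max value (and one completes $\beta^*$ to a BNE by pairing it with any best response of the attacker, which exists since the attacker faces a finite decision problem against fixed $\pi_{\Gbmax}$). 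Hence the defender achieves equilibrium payoff using only classifiers from $\Cset^T_{\Gbmax}$, and $\Gb=\Gbmax$ witnesses the statement.

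The only real subtlety — and the step I would be most careful about — is the consistency of the thresholding with the range of $\pi_{\Gbmax}$: one must check that $\pi_{\Gbmax}(v)$ never exceeds $1$, so that the uniform-threshold integral returns $\pi_{\Gbmax}(v)$ rather than $\min\{1,\pi_{\Gbmax}(v)\}$ or $\mathbf{1}_{\{\pi_{\Gbmax}(v)>0\}}$. This is where the constraint $\Gbmax\in[\Glow_1,\Gup_1]\times\cdots\times[\Glow_m,\Gup_m]$ is essential, since $\Glow_i=\max_v(-\GainD_i(v))$ guarantees $\LossU_i(v)-G_i\le \LossU_i(v)+\GainD_i(v)$ for all $v$. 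Everything else is the routine bookkeeping of translating between mixed strategies over $\Cset$ and detection probabilities, already established in Section~\ref{sec.results}.
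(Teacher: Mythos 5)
Your proof is correct and follows exactly the route the paper intends: take $\Gb=\Gbmax$ from Proposition~\ref{thm.eql}, realize $\pi_{\Gbmax}$ via a uniformly random threshold over the classifiers $c_t(v)=\Ind_{\pi_{\Gbmax}(v)\ge t}$ (the device already sketched in Section~2.2), and conclude via Lemma~\ref{lem.minmax} that the resulting min-max value is the equilibrium payoff. Your care about $\pi_{\Gbmax}(v)\le 1$ (from $G_i\ge\Glow_i$) is exactly the right check, and the argument matches the paper's.
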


This theorem settles our first main question: ``which classifiers should the defender use at the equilibrium?''. These are threshold classifiers on a non-standard function with threshold $t$ representing a probability of detection. A threshold $t$ can be interpreted as classifying as attack if, even when being detected with probability $t$, at least one type of attacker gains at least $G_i$ on average. 
Interestingly, $\Cset^T_{\Gb}$ has a VC dimension of only $1$ as the set comprised of $v_1$ (resp. $v_0$) of class $1$ (resp. $0$) with $\pi_{\Gb}(v_1) < \pi_{\Gb}(v_0)$ cannot be shattered. This strengthens our previous remark on the complexity of adversarial classification. Efficient randomized classification for adversarial settings does not require high capacity classifiers but rather classifiers tailored to the players payoffs. Then, our threshold classifiers may be linear classifier if payoffs are linear as the condition $\pi_{\Gb}(v) \ge t$ can be rewritten as $\max_i\left\{\LossU_i(v) - G_i - t (\LossU_i(v) + \GainD_i(v)) \right\} \ge 0$. Thus, in the linear setting, our threshold classifiers correspond to the defender picking a linear classifier for each type of attacker and outputting class $1$ if at least one of the linear classifiers outputs it. In general however, linear classifiers may perform sub optimally.


The fact that the defender uses specifically threshold classifiers is noteworthy as there is already a literature on the choice of threshold and on this choice in an adversarial setting as in \cite{Lisy14a}. However, the random choice of the threshold in our setting is surprisingly simple -- it is a threshold on the probability of detection and choosing a threshold uniformly over $[0, 1]$ gives the desired strategy. This emphasizes that randomization is necessary to defend against an adversary but also that the choice of the set of classifiers to use is crucial to obtain good results.

Having characterized the equilibrium, we must now answer our second main question ``How can the defender compute optimal strategies in a scalable manner?''. Before presenting a scalable training procedure exploiting our equilibrium parametrization to compute an approximate equilibrium (Section~\ref{sec.stochastic}), let us notice that the equilibrium characterization naively leads to a linear programming solution polynomial in $|\Vset|$ to compute an exact equilibrium as function $U^D$ is piecewise linear. This is presented in Proposition~\ref{lem.incomplete.defender}; note that a similar program could be obtained without our equilibrium characterization. We give in Appendix~\ref{sec.attackerstrategy} a linear program that allows computing the attacker's strategy in time polynomial in $|\Vset|$.

%
\begin{proposition}
\label{lem.incomplete.defender}
Maximizing $U^D(\Gb)$ is equivalent to solving the linear program:
\useshortskip
\begin{equation*}
\begin{array}{ll}
\underset{\pi, \Gb}{\text{maximize}} & -\pa \displaystyle\sum\limits_{i=1}^{m} p_i G_i - (1 - \pa)\sum_{v \in \Vset} \cfa(v)\Pna(v) \pi_v\\
\text{subject to:} & G_i \ge \LossU_i(v) - \pi_v (\LossU_i(v) + \GainD_i(v)), \forall i, \forall v\\
&\pi_v \le 1, \forall v.
\end{array}
\end{equation*}
\end{proposition}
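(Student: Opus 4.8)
The plan is to prove the equivalence directly, by showing that the linear program and the problem $\max_{\Gb\in[\underline{G}_1,\overline{G}_1]\times\cdots\times[\underline{G}_m,\overline{G}_m]}U^D(\Gb)$ of Proposition~\ref{thm.eql} have the same optimal value, with the substitution $\pi_v=\pi_{\Gb}(v)$ turning an optimizer of either problem into an optimizer of the other. I will use throughout that $\pi$ is a probability of detection, so $0\le\pi_v\le1$ for all $v$ (the program only spells out the nontrivial bound $\pi_v\le1$), together with the assumption $\LossU_i(v)+\GainD_i(v)>0$ that is implicitly needed for $\pi_{\Gb}$ in Definition~\ref{proba.optim} to be well defined; under it the constraint $G_i\ge\LossU_i(v)-\pi_v(\LossU_i(v)+\GainD_i(v))$ is equivalent to $\pi_v\ge\frac{\LossU_i(v)-G_i}{\LossU_i(v)+\GainD_i(v)}$ for every $i,v$.

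The first, and main, thing to check is that although the program does not explicitly restrict $\Gb$, its optimal $\Gb$ can be taken in the box, which is where $U^D$ is defined. For any feasible $(\pi,\Gb)$ and any $v$, since $0\le\pi_v\le1$ we have $\LossU_i(v)-\pi_v(\LossU_i(v)+\GainD_i(v))-(-\GainD_i(v))=(1-\pi_v)(\LossU_i(v)+\GainD_i(v))\ge0$, so the first constraint forces $G_i\ge-\GainD_i(v)$ for all $v$, hence $G_i\ge\underline{G}_i$. For the upper end, if $G_i>\overline{G}_i=\max_v\LossU_i(v)$, replacing $G_i$ by $\overline{G}_i$ preserves feasibility — because $\LossU_i(v)-\pi_v(\LossU_i(v)+\GainD_i(v))\le\LossU_i(v)\le\overline{G}_i$ using $\pi_v\ge0$ — while increasing the objective by $\pa p_i(G_i-\overline{G}_i)>0$; so an optimal solution has $G_i\le\overline{G}_i$, and imposing the box constraints on $\Gb$ leaves the optimal value unchanged.

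It then remains to compare the two optimal values. For ``$\le$'': given any feasible $(\pi,\Gb)$ with $\Gb$ in the box, the constraints give $\pi_v\ge\frac{\LossU_i(v)-G_i}{\LossU_i(v)+\GainD_i(v)}$ for each $i$ and $\pi_v\ge0$, hence $\pi_v\ge\pi_{\Gb}(v)$; since $\cfa(v)\ge0$, $\Pna(v)\ge0$ and $1-\pa\ge0$, the objective is at most $-\pa\sum_ip_iG_i-(1-\pa)\sum_v\cfa(v)\Pna(v)\pi_{\Gb}(v)=U^D(\Gb)$, which is at most the maximum of $U^D$ over the box by Definition~\ref{def.mingain}. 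For ``$\ge$'': let $\Gbmax=(G_1,\dots,G_m)$ maximize $U^D$ over the box and set $\pi_v:=\pi_{\Gbmax}(v)$; then $\pi_v\ge0$ by definition, $\pi_v\le1$ since $G_i\ge\underline{G}_i\ge-\GainD_i(v)$ makes $\frac{\LossU_i(v)-G_i}{\LossU_i(v)+\GainD_i(v)}\le1$, and $\pi_v\ge\frac{\LossU_i(v)-G_i}{\LossU_i(v)+\GainD_i(v)}$ for each $i$, so $(\pi,\Gbmax)$ is feasible for the program and has objective value $U^D(\Gbmax)$ by Definition~\ref{def.mingain}.

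The only delicate point is the box-membership argument; the rest is a routine rewriting of the linear constraints into the definition of $\pi_{\Gb}$, together with the observation that false-alarm costs enter the objective with a nonpositive coefficient so that the maximizing $\pi$ is pushed down to exactly $\pi_{\Gb}$ — which is also what exhibits the claimed correspondence between optimizers.
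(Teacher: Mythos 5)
Your proof is correct and follows essentially the same route as the paper's: both directions of the sandwich rest on the correspondence $\pi_v=\pi_{\Gb}(v)$, with feasibility of $(\pi_{\Gb},\Gb)$ giving $\max U^D(\Gb)\le OPT$ and the domination $\pi_v\ge\pi_{\Gb}(v)$ for feasible points giving the reverse. You are in fact somewhat more careful than the paper, which leaves the constraints $\pi_v\ge 0$ and $\Gb\in[\underline{G}_1,\overline{G}_1]\times\cdots\times[\underline{G}_m,\overline{G}_m]$ implicit and asserts box membership of the optimal $G^*_i$ in one line, whereas you justify the lower end from the constraints together with $\pi_v\le 1$ and the upper end by a without-loss-of-generality monotonicity argument.
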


\subsection{Scalable approximate computation}
\label{sec.stochastic}

Our previous results allow computing the equilibrium in time polynomial in $|\Vset|$. Yet, two major challenges remain: ($i$) $|\Vset|$ may be too large, in particular it grows exponentially with the number of features $k$; and ($ii$) computing the equilibrium requires knowledge of all parameters of the game and in particular of $\Pna$, which can be hard to evaluate. In this section, we propose a training method that solves both issues by leveraging stochastic programming techniques. To do so, we first express $U^D(\Gb)$ as an expected value as follows: $U^D(\Gb) = E[U^D(\Gb, \xi)]$ where $U^D(\Gb, \xi) = G_i$ with probability $\pa p_i$ and $U^D(\Gb, \xi) = \cfa(v)\pi_{\Gb}(v)$ with probability $(1 - \pa)\Pna(v)$ for all $v \in \Vset$. Leveraging the specific form of this stochastic function, we apply a stochastic programming technique called sample average approximation (SAA) \cite{shapiro2003monte,kim2015guide,verweij2003sample,linderoth2006empirical} to obtain our training method, Algorithm~\ref{algo.stoc}.

 \begin{algorithm}
\caption{Sample average approximation}
\begin{algorithmic}
\label{algo.stoc}
\STATE Sample $\xi_1 \dots, \xi_N$
\STATE Define $\tilde{U}^D(\Gb) = 1/N \sum_{i = 1}^N U^D(\Gb, \xi_i)$
\STATE Maximize $\tilde{U}^D(\Gb)$ on $[\underline{G}_1, \overline{G}_1]\times ... \times[\underline{G}_m, \overline{G}_m]$
\end{algorithmic}
\end{algorithm}

The maximization step in Algorithm~\ref{algo.stoc} can be done exactly through a linear program in the spirit of Proposition~\ref{proba.optim}, in time polynomial in $N$ since $\tilde{U}^D(\Gb)$ is piecewise linear. Thus the complexity of this algorithm depends only on the sample size and not on the problem dimension. Additionally, very little information is required: the defender only needs to have access to $N$ samples, which may correspond to a labeled dataset, as well as to the parameters $\cfa(v)$, $\LossU_i(v)$, $\GainD_i(v)$ for those samples, and $\underline{G}_i$, $\overline{G}_i $. Yet the following theorem shows that Algorithm~\ref{algo.stoc} outputs an very good approximation of the defender's min-max strategy.
\begin{theorem}
\label{thm.exponent}
Let $\hat{\mathcal{S}}$ be the set of maximizers of $\tilde{U}^D(\Gb)$ from Algorithm~\ref{algo.stoc} and $p_N = Pr [\hat{\mathcal{S}} \subseteq \argmax U^D(\Gb)]$. We have \useshortskip
\begin{equation*}
    \limsup_{N \rightarrow \infty} \frac{1}{N} \log(1 - p_N) < 0.
\end{equation*}
\end{theorem}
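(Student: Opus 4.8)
The plan is to invoke the large-deviations theory for sample average approximation (SAA) with a finite feasible set of \emph{solutions}, in the spirit of Shapiro--Homem-de-Mello and related references already cited. The key structural observation is that, although $\Gb$ ranges over a continuous box $\prod_i [\underline G_i, \overline G_i]$, the objective $U^D(\Gb)$ (and each sample realization $U^D(\Gb,\xi)$) is piecewise linear in $\Gb$ with breakpoints determined by the finitely many affine pieces $\frac{\LossU_i(v)-G_i}{\LossU_i(v)+\GainD_i(v)}$ for $v\in\Vset$, $i\in\llbracket 1,m\rrbracket$, and by the ``$\max\{0,\cdot\}$'' truncation. Consequently both $U^D$ and $\Ut^D$ are concave (by Definition~\ref{def.mingain} and Proposition~\ref{thm.eql} the true objective is concave; the sample version is a finite average of concave pieces), and the maximizers of a concave piecewise-linear function over a polytope form a face of an arrangement whose vertices lie in a \emph{finite} set $\mathcal{F}$ of candidate points depending only on the problem data, not on $N$. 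So I would first reduce the statement to the following: for every vertex $\Gb'\in\mathcal{F}$ that is \emph{not} optimal for $U^D$, the probability that $\Gb'$ is a maximizer of $\Ut^D$ decays exponentially in $N$.

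Second, I would make that reduction precise. Let $\mathcal S^\star=\argmax_{\Gb} U^D(\Gb)$ and let $\delta>0$ be the optimality gap between $\max U^D$ and the best value of $U^D$ over vertices outside $\mathcal S^\star$ (this $\delta$ is strictly positive because $\mathcal F$ is finite; this is where finiteness of $\Vset$, hence of $\mathcal F$, is essential). If $\Sset\not\subseteq\mathcal S^\star$ then some vertex $\Gb'\notin\mathcal S^\star$ maximizes $\Ut^D$, which forces $\Ut^D(\Gb')\ge \Ut^D(\Gb^\star)$ for a fixed optimal $\Gb^\star\in\mathcal S^\star$, i.e.
\begin{equation*}
\frac1N\sum_{j=1}^N \big(U^D(\Gb',\xi_j)-U^D(\Gb^\star,\xi_j)\big)\ \ge\ 0 ,
\end{equation*}
while the expectation of the summand equals $U^D(\Gb')-U^D(\Gb^\star)\le-\delta$. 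Each summand is bounded (the $G_i$ lie in compact intervals and $\cfa$, $\pi_{\Gb}$ are bounded), so Hoeffding's inequality gives $\Pr\!\big[\tfrac1N\sum_j(\cdots)\ge0\big]\le e^{-c_{\Gb'} N}$ for a constant $c_{\Gb'}>0$ proportional to $\delta^2$. A union bound over the finitely many $\Gb'\in\mathcal F\setminus\mathcal S^\star$ yields $1-p_N\le |\mathcal F|\,e^{-cN}$ with $c=\min_{\Gb'} c_{\Gb'}>0$, whence $\limsup_{N\to\infty}\frac1N\log(1-p_N)\le -c<0$, which is the claim.

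The two steps that require genuine care, and which I expect to be the main obstacle, are (a) justifying rigorously that ``$\Sset$ is not contained in $\mathcal S^\star$'' implies the existence of a \emph{vertex} $\Gb'\in\mathcal F\setminus\mathcal S^\star$ with $\Ut^D(\Gb')=\max\Ut^D$, and (b) showing that one can take the candidate set $\mathcal F$ independent of the sample. For (a) I would use that $\Ut^D$ is concave and piecewise linear on the box, so its set of maximizers is a nonempty face of the induced polyhedral subdivision; if that face is not contained in $\mathcal S^\star$, then since $\mathcal S^\star$ is itself a union of faces of a coarser-or-equal subdivision (being the argmax of the piecewise-linear concave $U^D$), the maximizing face of $\Ut^D$ must contain a vertex of the common refinement lying outside $\mathcal S^\star$; taking $\mathcal F$ to be the vertex set of that common refinement of arrangements (which depends only on the hyperplanes $G_i=\LossU_i(v)$, $\LossU_i(v)-G_i= t(\LossU_i(v)+\GainD_i(v))$, etc., hence only on the data) handles (b) as well. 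An alternative, and perhaps cleaner, route for (a)--(b) is to appeal directly to the SAA large-deviations theorem for problems with a finite solution set (e.g., the result underlying \cite{shapiro2003monte,kim2015guide}): reformulate the maximization of the piecewise-linear concave $\Ut^D$ as an LP, observe its optimal basic feasible solutions come from a data-determined finite list, and apply the known exponential-convergence statement for the event that every SAA-optimal solution is true-optimal. Either way, once the reduction to a finite candidate set is in place, the Hoeffding-plus-union-bound argument is routine and gives the stated negative $\limsup$.
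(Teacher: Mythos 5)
Your proposal is correct, but it takes a more self-contained route than the paper. The paper's proof is essentially a two-line verification: it checks that the feasible set $[\underline{G}_1,\overline{G}_1]\times\cdots\times[\underline{G}_m,\overline{G}_m]$ is a closed convex polyhedron, that each sample function $U^D(\cdot,\xi)$ is piecewise linear and convex (up to sign), and that $\xi$ has finite support of size $|\Vset|+m$, and then invokes Theorem~15 of \cite{shapiro2003monte} as a black box to conclude the exponential rate. What you have done is unpack that black box: your reduction to a finite, sample-independent vertex set $\mathcal{F}$ of the common polyhedral refinement, followed by Hoeffding's inequality at each non-optimal vertex and a union bound, is precisely the mechanism by which the cited SAA theorem is established in the finite-support, piecewise-linear regime, and your handling of the delicate step (a) is sound: since $\mathcal S^\star$ is convex and a union of cells of the refinement, a maximizing face of $\tilde U^D$ not contained in $\mathcal S^\star$ cannot have all its vertices in $\mathcal S^\star$, which yields the needed vertex $\Gb'\in\mathcal F\setminus\mathcal S^\star$ with $\tilde U^D(\Gb')\ge\tilde U^D(\Gb^\star)$. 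The strict positivity of $\delta$ follows from finiteness of $\mathcal F$, and boundedness of the summands from compactness of the box and finiteness of $\Vset$, so the Hoeffding step is legitimate. The trade-off is the expected one: the paper's citation-based argument is shorter and inherits the exact hypotheses of the reference, while your version is longer but makes explicit where finiteness of $\Vset$ and the polyhedral structure of $\pi_{\Gb}$ enter, and even produces an explicit (if pessimistic) rate constant $c$ proportional to $\delta^2$ and a prefactor $|\mathcal F|$, which the paper's statement deliberately leaves implicit. Your second suggested route (reformulate as an LP and cite the finite-solution-set SAA theorem) is essentially what the paper does.
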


A proof of Theorem~\ref{thm.exponent} can be found in Appendix~\ref{proof.thm.training}. It relies on a strong result for sample average approximation (Theorem~$15$ of \cite{shapiro2003monte}), which fully exploits the structure of our problem as it requires the optimized stochastic function to be piecewise linear and to depend on random variables with finite support (extensions to continuous supports are possible under mild assumptions). This result is then enabled by the polyhedral structure of the problem.

Theorem~\ref{thm.exponent} states that Algorithm~\ref{algo.stoc} will find an exact maximum of $U^D(\Gb)$ with probability exponentially close to one (where the randomness is in the draw of the training set from unknown $\Pna$, $\pa$ and $p_i$). Then, from Theorem~\ref{thm.eql}, this immediately gives an exact min-max strategy of the defender. The rate of the exponential convergence of $p_N$ to $1$ is not given by Theorem~\ref{thm.exponent}. It is possible to state a stronger result that gives the rate if the problem is ``well conditionned''---which roughly means that $\argmax U^D(\Gb)$ is a singleton and the function is not flat around the optimum---, but this is not guaranteed in any instance of our game, and such a result is anyways impractical because it depends on the true optimal value. 
From the high-probability result of Theorem~\ref{thm.exponent}, it is easy to derive that the output of Algorithm~\ref{algo.stoc} is exponentially close to the true optimum since the function is bounded; although the exponential rate may be arbitrarily low if the problem is not well conditioned. In that case, though, worst case bounds show convergence of expected value at least in $N^{-1/2}$ and depending only on $Var [U^D(\Gbmax, \xi)]$ \cite{shapiro2003monte}.


Theorem~\ref{thm.exponent} combined with Theorem~\ref{thm.eql} shows that using SAA on top of our equilibrium characterization solves the key difficulties of our problem: we are able to compute an exact min-max strategy for the defender with high probability from a labeled training set without knowledge of $\Pna$, $\pa$ and $p_i$. It is remarkable that we do not need to estimate $\Pna$ from the training set, this is automatically done within the stochastic approximation procedure. Other stochastic approximation algorithms (e.g., as stochastic gradient descent) could be used but without strong convexity property (which is our case since our function is piecewise linear), they only have convergence guarantees in $N^{-1/2}$.



\subsection{Numerical illustration}
\label{sec.experiments}

We performed numerical experiments with different games to illustrate various aspects of our results. In particular, we performed experiments on controlled artificial setups to illustrate the convergence of our training method, the (in)dependence on the number of features, and the form of the equilibrium with multiple attacker types. Due to space constraints, the results are deferred to Appendix~\ref{app.experiments}, along with details on the experimental setup for reproducibility (all our code will be made public upon acceptance). We present here the results for a  game defined with a real feature distribution from a credit card fraud dataset \cite{fraud_set}, to illustrate the form of the equilibrium for simple payoffs.

The dataset \cite{fraud_set} contains transactions made by European cardholders in September 2013. A data vector is composed of $31$ features: the amount of the transaction (in \euro{}) denoted $\amount$, the time since the first transaction in the dataset, whether the transaction was malicious (i.e., the label), and 28 anonymized features coming from a PCA. We instantiate our game with this static data set by replacing each attack in the data set by an abstract adaptative attack in our model. For simplicity, we focus only on the amount of the transaction and consider a single attacker type with the following gains: $\LossU(v) = \amount$, $\GainD(v) = 0$, and $\cfa(v) = \factor \times \amount$ for a given $\factor>0$. This models an attacker that gains the transaction's amount if successful (and the bank loses it), but gains nothing if detected. On the other hand, when a valid transaction is blocked, the bank pays a fraction $\factor$ of the transaction as false alarm cost. This choice of utility functions is meant to illustrate the equilibrium in a reasonable and simple scenario and not to represent a practical ready-to-implement setting. In the dataset, the fraction of attacks is $\pa = 0.00172$, the maximum transaction is $25,691.16$\euro{} with an average of $88.35$\euro{}. There are $N=284,807$ transactions in total.

Figure~\ref{empirical} represents the histogram of valid transaction amounts in $[0, 700]$ (where the majority of transactions occur) and the probability of detection function $\pi_G$ obtained through our training for different values of $\ell$ ($G_{\ell}$ denotes the parameter trained on the dataset with false alarm cost factor $\ell$). When $\ell$ is small, the defender classifies ``aggressively'' as fraud by accepting a  high false alarm rate. When $\ell$ increases, the probability of detection functions show that the defender flags as fraud less often. For example, transactions of $700$\euro{} are flagged with probability $\sim 0.9$ by the most aggressive strategy ($\ell = 0.006$) but only with probability $\sim 0.1$ for the least aggressive strategy ($\ell = 0.074$).

\begin{figure}
\centering
\includegraphics[width = 0.4\columnwidth]{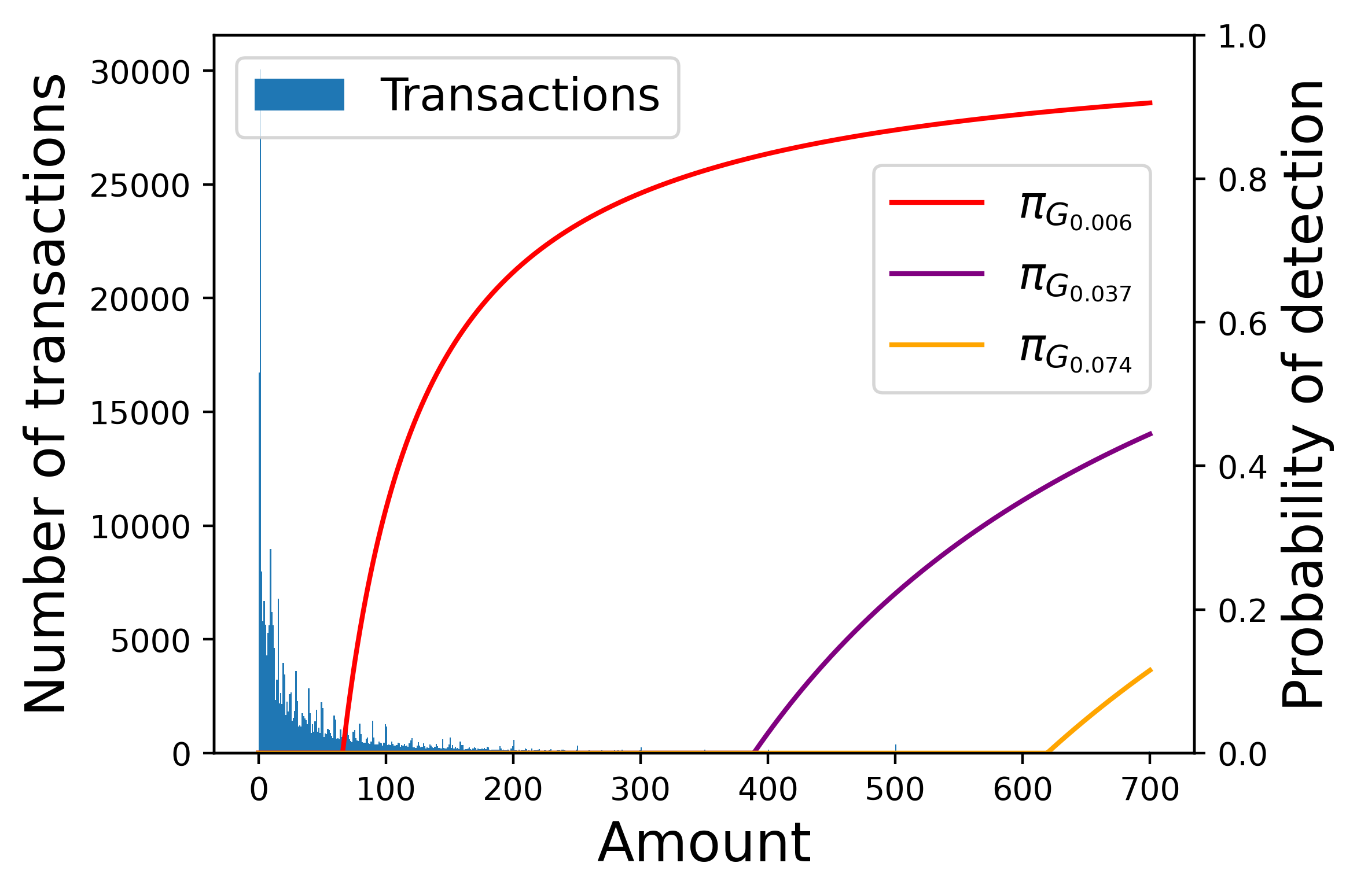}
\caption{Empirical distribution of transaction amounts and representation of defender min-max strategies for various $l$.}
\label{empirical}
\end{figure}
The results presented here are computed through our training method in Algorithm~\ref{algo.stoc} and may not be exact. We evaluate the quality of our approximation on games based on artificial distributions (as we only have access to the empirical distribution). The results suggest that the approximation is good even for much smaller training sets as hinted by the theoretical guarantee. Computation times (not exceeding $15$min) can be found in Appendix~\ref{app.experiments}.


\section{Online learning}
\label{sec.online}

In the previous section, we showed how the defender can compute an approximate min-max strategy from a training set. Yet, such historical data is not always available. We now show how our low-dimensional characterization of the min-max strategy also allows the defender to learn a good strategy \emph{on-line}, without a priori knowledge of $\Pna$, $p_a$ and $p_i$, while incurring low loss as captured by the regret. 

We consider the following setting. At each time step $t = 1, \dots, T$, the defender chooses a probability of detection function $\pi_t$ and receives a vector $v_t$ that is classified as an attack with probability $\pi_t(v_t)$. They incur a loss $l(v_t)$ that is $\cfa(v_t)$ in case of false positive and $0$ in case of true negative if facing a non-attacker; and $- U_i^d(v_t)$ and $U_i^u(v_t)$ in case of true positive and false negative respectively when facing a type $i$ attacker. We assume that after classification, the defender can observe the type of attack (for convenience, we denote by type $i=0$ non-attacks) and that they can compute $\cfa(v_t)$ and $\LossU_i(v_t), \GainD_i(v_t)$ for all $i$. Finally, as in \cite{Chen20a}, we assume that attackers act according to best responses to $\pi_t$ in a Stackelberg fashion, i.e., if the defender faces an attacker of type $i$ at time $t$ we have $v_t \in \arg\max_v\{U^u_i(v) (1 - \pi_t(v)) - U_i^d(v)\pi_t(v)\}$. The defender seeks to minimize the Stackelberg regret:
\begin{definition}[Stackelberg regret]
The Stackelberg regret for a sequence of vectors $(v_1, \dots, v_T)$ is: $ R(T) = \sum_{t=1}^T E_{\pi_t}[ l(v_t)] - \min_{\pi}\sum_{t=1}^T E_{\pi}[l(v_t)].$
\end{definition}

The notion of Stackelberg regret implies that the sequence of vectors depends on the probabilities of detection used. In particular, $\min_{\pi}\sum_t E_{\pi}[l(v_t)]$ must be computed using the best response of the attacker to $\pi$. It is also key to remember that in our setting, the unknown quantities are $\Pna$, $p_a$ and $p_i$. The attacker's strategy is assumed to be known as it is best-response to the utilities $\LossU_i, \GainD_i$. 

It is possible to achieve low regret in $T$ using naively the  online gradient descent algorithm of  \cite{zinkevich2003online}---see Appendix~\ref{app.gradient.descent}---to learn $\pi$ directly. This gives, however, a bound on the Stackelberg regret of  \useshortskip
\begin{equation}
\label{eq.regret-bound}
    R(T) \le \frac{D^2\sqrt{T}}{2} + \left(\sqrt{T} - \frac{1}{2}\right)L^2, 
\end{equation}
with $L = \max(\max_v \{\cfa(v)\}, \max_{v,i}\{|\LossU_i(v) + \GainD_i(v)|\})$ (maximum gradient) and $D^2 = |\Vset|$ (maximum $L_2$ distance between two $\pi$ functions)---see a proof in Appendix~\ref{App.online_naive}. This bound is meaningless if the number of features $k$ is large as $|\Vset| = \Omega (2^k)$. The full strategy $\pi$ also may not fit into memory. 


Building on our characterization of the min-max strategy, we parametrize the defender's strategy by $\Gb$ to propose an alternate learning scheme as Algorithm~\ref{algo.efficient} (where $\Pi_{\mathcal{S}}$ denotes the euclidian projection on a set $\mathcal{S}$). 

\begin{algorithm}
\caption{Efficient online gradient descent}
\begin{algorithmic}
\label{algo.efficient}
\STATE Choose $\Gb_1 \in [\underline{G}_1, \overline{G}_1]\times ... \times[\underline{G}_m, \overline{G}_m]$ arbitrarily
\FOR{$t = 1, \dots, T$}
\STATE Predict $\pi_{\Gb_t}$ and receive vector $v_t$ and type $i$
\IF{$v_t$ came from a non-attacker}
\STATE $\text{grad} \in \partial (\pi_{\Gb_t}(v_t) \cfa(v_t))$
\ELSIF{$v_t$ came from an attacker of type $i$}
\STATE $\text{grad} = e_i$ ($i^{th}$ vector of the canonical base of $\mathbb{R}^m$)
\ENDIF
\STATE $\Gb_{t + 1} = \Pi_{[\underline{G}_1, \overline{G}_1]\times ... \times[\underline{G}_m, \overline{G}_m]}(\Gb_t - \frac{1}{\sqrt{t}} \text{grad})$
\ENDFOR
\end{algorithmic}
\end{algorithm}

Algorithm~\ref{algo.efficient} exploits the fact that each attacker best responds to the defender's strategy, hence only strategies of the form $\pi_{\Gb}(\cdot)$ are worth using. Thus, instead of learning directly $\pi$, the defender learns the parameters $\Gb$. Note that this implies that the defender must be able to evaluate the bounds on the attackers gain they can impose. Algorithm~\ref{algo.efficient} presents two major advantages: 
\emph{First}, the defender's strategy is compactly represented with a small number $m$ of parameters, independent of $|\Vset|$. \emph{Second}, we get a much better regret bound:
\begin{theorem}
\label{thm.online}
Algorithm~\ref{algo.efficient} gives Stackelberg regret bound \eqref{eq.regret-bound} with\\
 $L = \max\{1, \max_{v, i} \{\frac{\cfa(v)}{\GainD_i(v) + \LossU_i(v)}\}\}$ and $D = ||\overline{\Gb} - \underline{\Gb}||_2$. 
\end{theorem}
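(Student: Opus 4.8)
The plan is to recognize Algorithm~\ref{algo.efficient} as online gradient descent (OGD) \cite{zinkevich2003online} run over the box $\mathcal{B} = [\underline{G}_1,\overline{G}_1]\times\cdots\times[\underline{G}_m,\overline{G}_m]$ on an explicit \emph{surrogate} loss sequence, and then to transfer the resulting regret bound back to the Stackelberg regret. For the vector $v_t$ observed at time $t$, define $\tilde{l}_t(\Gb) = \cfa(v_t)\,\pi_{\Gb}(v_t)$ when $v_t$ came from a non-attacker, and $\tilde{l}_t(\Gb) = G_i$ when $v_t$ came from a type-$i$ attacker; write $g_t$ for the vector \texttt{grad} produced at step $t$. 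I would first check that each $\tilde{l}_t$ is convex on $\mathcal{B}$: by \eqref{eq.piG}, $\Gb\mapsto\pi_{\Gb}(v)$ is the pointwise maximum of $0$ and of functions affine in $\Gb$, hence convex, multiplying by $\cfa(v_t)\ge 0$ preserves convexity, and $\Gb\mapsto G_i$ is linear. I would then check that $g_t\in\partial\tilde{l}_t(\Gb_t)$ — immediate from the chain rule in the non-attack branch, and equal to $\nabla G_i = e_i$ in the attack branch — so that Algorithm~\ref{algo.efficient} is exactly OGD with step sizes $\eta_t = 1/\sqrt{t}$ on the convex losses $(\tilde{l}_t)_t$. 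Zinkevich's bound then yields $\sum_t\tilde{l}_t(\Gb_t) - \min_{\Gb\in\mathcal{B}}\sum_t\tilde{l}_t(\Gb) \le \tfrac{D^2\sqrt{T}}{2} + (\sqrt{T}-\tfrac12)L^2$ provided $D$ upper bounds the $\ell_2$-diameter of $\mathcal{B}$ and $L$ upper bounds $\|g_t\|_2$. The diameter is exactly $\|\overline{\Gb}-\underline{\Gb}\|_2 = D$. For the gradient bound: on an attack round $\|e_i\|_2 = 1$; on a non-attack round the chain rule shows $g_t$ is a convex combination of the vectors $-\cfa(v_t)\bigl(\LossU_i(v_t)+\GainD_i(v_t)\bigr)^{-1}e_i$ over the indices $i$ active in the max of \eqref{eq.piG} (or $g_t=0$ when $\pi_{\Gb_t}(v_t)=0$), whose norm is at most $\max_{v,i}\cfa(v)/(\GainD_i(v)+\LossU_i(v))$; hence $\|g_t\|_2\le L$ with the stated $L$.

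The heart of the argument is relating this surrogate regret to the true Stackelberg regret, using the property stated just after Definition~\ref{proba.optim}: $\pi_{\Gb}$ is the \emph{smallest} probability of detection keeping every type-$i$ attacker's utility at most $G_i$. I would extract two facts. First, for the algorithm's own plays $E_{\pi_{\Gb_t}}[l(v_t)] \le \tilde{l}_t(\Gb_t)$: equality on non-attack rounds, and on a type-$i$ attack round the attacker best responds to $\pi_{\Gb_t}$, so the defender's expected loss equals $\max_v U_i^A(v,\pi_{\Gb_t}) \le G_i$. Second, the comparator is unchanged by restricting to strategies $\pi_{\Gb}$: $\min_\pi\sum_t E_\pi[l(v_t)] = \min_{\Gb\in\mathcal{B}}\sum_t\tilde{l}_t(\Gb)$. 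The inequality ``$\le$'' follows by taking $\pi=\pi_{\Gb}$ for the minimizing $\Gb$ and applying the previous bound termwise. For ``$\ge$'', given an arbitrary $\pi$ I set $G_i^\pi = \max_v U_i^A(v,\pi)$ and $\Gb^\pi = (G_1^\pi,\dots,G_m^\pi)$; one checks $\Gb^\pi\in\mathcal{B}$ (an easy consequence of the definitions of $\underline{G}_i,\overline{G}_i$, parallel to the feasibility claim in Proposition~\ref{thm.eql}), the attack contributions of the surrogate at $\Gb^\pi$ match those of $\pi$ by definition, and since $\pi$ keeps each type $i$ below $G_i^\pi$ we get $\pi_{\Gb^\pi}(v)\le\pi(v)$ for all $v$, so (as $\cfa\ge 0$) its non-attack contributions are no larger. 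Combining, $R(T) = \sum_t E_{\pi_{\Gb_t}}[l(v_t)] - \min_\pi\sum_t E_\pi[l(v_t)] \le \sum_t\tilde{l}_t(\Gb_t) - \min_{\Gb\in\mathcal{B}}\sum_t\tilde{l}_t(\Gb)$, which the Zinkevich bound above controls; this is precisely \eqref{eq.regret-bound} with the stated $L$ and $D$.

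I expect the main obstacle to be this last reduction: bounding a comparator quantified over \emph{all} (exponentially many) real-valued functions $\pi$ by an $m$-dimensional surrogate. The slogan ``against best-responding attackers only strategies $\pi_{\Gb}$ are worth using'' hides a two-sided argument, in particular the pointwise domination $\pi_{\Gb^\pi}\le\pi$ and the fact that $\Gb^\pi$ genuinely lands in $\mathcal{B}$; a further subtlety is that, because the attack vectors entering the comparator themselves depend on the strategy being evaluated, the attack terms must be compared strategy-by-strategy rather than vector-by-vector. Everything else — convexity of $\tilde{l}_t$, identification of the subgradient, the diameter and gradient-norm estimates, and the invocation of Zinkevich's bound with $\eta_t = 1/\sqrt{t}$ (using $\sum_{t=1}^T 1/\sqrt{t}\le 2\sqrt{T}-1$) — is routine and parallels the naive analysis behind \eqref{eq.regret-bound}, only with the box diameter replacing $\sqrt{|\Vset|}$.
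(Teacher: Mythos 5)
Your proposal is correct and follows essentially the same route as the paper: define the surrogate losses $c_t(\Gb)=G_i$ (attack of type $i$) and $c_t(\Gb)=\cfa(v_t)\pi_{\Gb}(v_t)$ (non-attack), bound the Stackelberg regret by the surrogate OGD regret via the two facts $E_{\pi_{\Gb_t}}[l(v_t)]\le c_t(\Gb_t)$ and $\min_\pi\sum_t E_\pi[l(v_t)]=\min_{\Gb}\sum_t c_t(\Gb)$ (the latter using $\Gb^\pi$ and the domination $\pi_{\Gb^\pi}\le\pi$), and then invoke Zinkevich's bound. Your write-up is in fact slightly more explicit than the paper's on the convexity, subgradient, and constant-$L$ verifications, which the paper leaves as ``easily verified.''
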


Theorem~\ref{thm.online} is proved in Appendix~\ref{proof.online}; the proof leverages our characterization of the min-max strategy with parameters $\Gb$. The result formalizes the intuition that learning $\Gb$ rather than $\pi$ allows a much smaller regret ($D$ is now independent in $|\Vset|$). Parameter $L^2$ now represents the change in false alarm cost one can expect at worst when changing parameters $\Gb$; which is different from $L^2$ in the naive procedure that corresponded to a gradient wrt $\pi$. 
We performed numerical experiments that illustrate the result of Theorem~\ref{thm.online} (in particular the independence in $|\Vset|$) in Appendix~\ref{app.experiments}. In addition, we observe that $\Gb_t$ converges towards $\Gbmax$.


\section{Concluding remarks}

We provided a low-dimensional characterization of the min-max strategy in adversarial classification games with general payoffs and showed that this characterization enables efficient training and online learning in practice. Our characterization also allows extending our results to continuous (compact) sets of data $\Vset$---see the details in Appendix~\ref{sec.continuous}.

We considered here only strategic attackers. It is possible to extend the model to include non-strategic attackers that follow fixed strategy, through a redefinition of the false alarm cost that preserves the game structure and allows all our results to be transferred. This can model attacks that are the result of a fixed algorithm. Attacks that are the result of an adaptive algorithm are outside the scope of the current work, but we note that for a wide class of adaptive algorithm this may be modeled in the long run through a utility function. 


\newpage

\bibliography{references-all,references}
\bibliographystyle{plain}

\newpage

\fontsize{10pt}{11pt} \selectfont
\appendix

\onecolumn

\begin{center}
    \Large \bf
    Appendix
\end{center}

\section{Omitted proofs}

In this section, we provide the proofs of all claims in the body of the paper and make a few additional remarks on the proofs.

\subsection{Proof of Lemma~\ref{lem.minmax}}
\label{proof.lem.minmax}

\begin{proof}[Proof of Lemma~\ref{lem.minmax}]
Let $(\alpha^*, \beta^*)$ be a BNE. From the definition of a BNE (see \eqref{eq.BNE-A}), we have $\alpha^* \in \argmax_{\alpha} \sum_i p_i U_i^A(\alpha, \beta^*)$; that is, $\alpha^*$ is a best response of the attacker to $\beta^*$ (for each attacker type). By observing that the average gain of the defender scaled by a factor $\pa$ which does not change the equilibrium strategies can be written as 
\begin{equation*}
\sum_i p_i U_i^D(\alpha, \beta) = - \sum_i p_i U_i^A(\alpha, \beta) + f(\beta)
\end{equation*} 
where $f(\beta) = \frac{1 - \pa}{\pa}\sum_v \cfa(v)\Pna(v)) \pi^{\beta}(v)$ does not depend on $\alpha$, we deduce that $\alpha^* \in \arg \min_{\alpha} \sum_i p_i U_i^D(\alpha, \beta^*)$. Then from the definition of a BNE again (see \eqref{eq.BNE-D}), we conclude that $\beta^* \in \argmax_{\beta} \min_{\alpha} \sum_i p_i U_i^D(\alpha, \beta)$.
\end{proof}

\begin{remark}
Note that by symmetry we also have $\alpha^* \in \arg \max_{\alpha}\min_{\beta} - \sum_i p_i U_i^D(\alpha, \beta)$. Hence, for any $\alpha^*$ such that  $\alpha^* \in \arg \max_{\alpha}\min_{\beta} - \sum_i p_i U_i^D(\alpha, \beta)$ and $\beta^*$ such that $\beta^* \in \argmax_{\beta} \min_{\alpha} \sum_i p_i U_i^D(\alpha, \beta)$, $(\alpha^*, \beta^*)$ is a BNE.
\end{remark}

\subsection{Proof of Theorem~\ref{thm.eql}}
\label{proof.thm.eql}

\begin{proof}[Proof of Theorem~\ref{thm.eql}]
In this proof, to simplify the exposition, we assimilate $\beta$ and $\pi^{\beta}$ and write by abuse of notation $U^A_i(\alpha, \pi)$ and $U^D_i(\alpha, \pi)$ to denote the attacker and defender payoff \eqref{eq.pi} for any defender strategy $\beta$ such that $\pi^{\beta} = \pi$.

Let $\Gbmax \in \argmax_{\Gb} U^D(\Gb)$ and let $\pi_{\Gbmax}(.)$ be the associated probability of detection function \eqref{eq.piG}. We show that $\pi_{\Gbmax}(.)$ is a min-max strategy in two steps.

\begin{itemize}
\item[\underline{Step 1:}] Let $\Gb$ be any arbitrary vector in $[\underline{G}_1, \overline{G}_1]\times \cdots \times[\underline{G}_m, \overline{G}_m]$ and let $\pi_{\Gb}(.)$ be the associated probability of detection function \eqref{eq.piG}. By definition of $\pi_{\Gb}$, we have \begin{equation}
\label{eq.ineqG}
\max_v \{\LossU_i(v) - \pi_{\Gb}(v)\cdot (\LossU_i(v) + \GainD_i(v))\} \le G_i, \quad \forall i \in \llbracket 1, m \rrbracket;
\end{equation}
that is, every type of attacker $i$ can have at most $G_i$ payoff if the defender uses strategy $\pi_{\Gb}$. From the definition of the utility \eqref{eq.pi}, this implies that
\begin{equation*}
\min_\alpha \sum_i p_i U^D_i(\alpha, \pi_{\Gb}) \ge -\pa \sum_i p_i G_i + (1 - \pa)\sum_v \cfa(v) \Pna(v) \pi_{\Gb}(v). 
\end{equation*}
Finally, noting that the rhs of the above inequality is exactly $U^D(\Gb)$ and applying it to $\Gb = \Gbmax$, we obtain
\begin{equation}
\label{eq.ineq1}
    \min_\alpha \sum_i p_i U^D_i(\alpha, \pi_{\Gbmax}) \ge U^D(\Gbmax).
\end{equation}

\item[\underline{Step 2:}] Conversely, let $\pi$ be any arbitrary probability of detection function and define $\Gb^{\pi}$ as the vector with components
\begin{equation}
\label{eq.gain.attacker}
G^{\pi}_i = \max_v\{\LossU_i(v) - \pi(v)\cdot (\LossU_i(v) + \GainD_i(v))\}, \quad (i \in \llbracket 1, m \rrbracket).
\end{equation}
Again, from the definition of the utility \eqref{eq.pi}, we have 
\begin{equation*}
    \min_\alpha \sum_i p_i U^D_i(\alpha, \pi) = -\pa\sum_i p_i G_i^{\pi} - (1 - \pa)\sum_v \cfa(v) \Pna(v) \pi(v);
\end{equation*}
that is that the minimum payoff of the defender is achieved when each attacker type maximizes its gain. Using \eqref{eq.gain.attacker}, we have, for all type $i$ and vector $v$, $\pi(v) \ge \frac{\LossU_i(v) - G_i^{\pi}}{\LossU_i(v) + \GainD_i(v)}$, hence $\pi(v) \ge \max \left\{0, \max_i \left\{ \frac{\LossU_i(v) - G_i^{\pi}}{\LossU_i(v) + \GainD_i(v)} \right\}\right\} = \pi_{\Gb^{\pi}}(v)$ for all $v$. Plugging this inequality in the above equation gives
\begin{equation*}
    \min_\alpha \sum_i p_i U^D_i(\alpha, \pi) \le  -\pa\sum_i p_i G_i^{\pi} - (1 - \pa)\sum_v \cfa(v) \Pna(v) \pi_{\Gb^{\pi}}(v) = U^D(\Gb^{\pi}).
\end{equation*}
Since $U^D(\Gb^{\pi}) \le U^D(\Gbmax)$ for all $\pi$ by definition of $\Gbmax$ as a maximum of function $U^D$, we finally get
\begin{equation}
\label{eq.ineq2}
    \min_\alpha \sum_i p_i U^D_i(\alpha, \pi) \le U^D(\Gbmax).
\end{equation}
\end{itemize}

To conclude, observe that combining \eqref{eq.ineq1} and \eqref{eq.ineq2} gives that, for all $\pi$, $\min_\alpha \sum_i p_i U^D_i(\alpha, \pi_{\Gbmax}) \ge \min_\alpha \sum_i p_i U^D_i(\alpha, \pi)$; hence $\pi_{\Gbmax}$ is a min-max strategy. 
\end{proof}

\begin{remark}
From the proof above, we observe that $\min_{\alpha}(\sum_i p_i U_i^D(\alpha, \pi_{\Gbmax})) = U^D(\Gbmax)$, which implies that, for all $i$ and any maximizer $\Gbmax$ of the function $U^D(\cdot)$, we have for all $i$
\begin{equation*}
    \max_v\{\LossU_i(v) - \pi_{\Gbmax}(v)\cdot (\LossU_i(v) + \GainD_i(v))\} = (\Gmax)_i;
\end{equation*}
that is, when the defender uses strategy $\pi_{\Gbmax}$ the attacker gets a payoff of exactly $(\Gmax)_i$ for all type. 
It is important to note that this is not obvious and it is not the case for all vectors $\Gb$ in $[\underline{G}_1, \overline{G}_1]\times \cdots \times[\underline{G}_m, \overline{G}_m]$. In particular, the set $\mathcal{S} = \{\Gb: \exists \pi, \forall i, G_i = \max_v\{\LossU_i(v) - \pi(v)\cdot (\LossU_i(v) + \GainD_i(v))\}\}$ of all $\Gb$ that are ``best response for each type'' to a strategy $\pi$ is not equal to $[\underline{G}_1, \overline{G}_1]\times \cdots \times[\underline{G}_m, \overline{G}_m]$ and may not even be convex. For a $\Gb$ outside this set $\mathcal{S}$, the maximum payoff of the attacker against strategy $\pi_{\Gb}$ will not be $G_i$ for all $i$, hence the interpretation of $U^D$ as the minimum utility of the defender no longer holds outside $\mathcal{S}$. On the other hand, maximizing $U^D$ on $\mathcal{S}$ directly is not possible as it may not be convex. Our proof bypasses this difficulty by using inequality \eqref{eq.ineqG} that is valid for all $\Gb$ in $[\underline{G}_1, \overline{G}_1]\times \cdots \times[\underline{G}_m, \overline{G}_m]$.
\end{remark}

\subsection{Proof of  Proposition~\ref{lem.incomplete.defender}}
\label{proof.lem.incomplete.defender}

\begin{proof}[Proof of  Proposition~\ref{lem.incomplete.defender}]
Let $OPT$ be the optimal objective value of the linear program.
First note that, for any $\Gb$, the parameters $\pi_{\Gb}$ and $\Gb$ form a valid solution of the linear program by definition. Thus, $\max_{\Gb} U^D(\Gb) \le OPT$.
Conversely, for any optimal solution of the linear program $\pi^*$, $\Gb^*$,  we have $\pi^*_v = \pi_{\Gb^*}(v)$ as it is the probability of detection achieving utility profile $\Gb$ while minimizing false alarms. We also trivially have $G^*_i = \max_v\{\LossU_i(v) - pi_v(\LossU_i(v) + \GainD_i(v))\} \in [\underline{G}_i, \overline{G}_i]$ thanks to the first constraint of the linear program and the fact that we want to minimize the objective function. Thus, $OPT \le \max U^D(\Gb)$.

Combining the two inequalities, we obtain $OPT = \max_{\Gb} U^D(\Gb)$ and $\Gb^* \in \argmax U^D(\Gb)$.
\end{proof}

\subsection{Proof of Theorem~\ref{thm.exponent}}
\label{proof.thm.training}

\begin{proof}[Proof of Theorem~\ref{thm.exponent}]
We start by recalling the setup, assumptions and the main theorem we use from~\cite{shapiro2003monte}.
Consider a stochastic optimization problem of the form
\begin{equation*}
    \min_{x \in X} \{f(x)\} = E[F(x, \xi)],
\end{equation*}
where $\xi$ is a random vector with support $\Xi$, with the following assumptions:
\begin{itemize}
    \item[\textbf{(C1)}] The set $X$ is a convex closed polyhedron;
    \item[\textbf{(C2)}] For every $\xi \in \Xi$ the function $F(\cdot, \xi)$ is proper convex and lower semi continuous and piecewise linear on its domain;
    \item[\textbf{(C3)}] The support $\Xi$ of $\xi$ is finite. 
\end{itemize}
Then the following theorem holds:
\begin{theorem}[\cite{shapiro2003monte}]
Let $\hat{\mathcal{S}}$ be the set of optimal solutions of the sample average approximation problem (Algorithm~\ref{algo.stoc}, with $N$ samples) and $\mathcal{S}$ the set of optimal solution from the true problem. Let $p_N = Pr [\hat{\mathcal{S}} \subseteq \argmax U^D(\Gb)]$. We have \useshortskip
\begin{equation*}
    \limsup_{N \rightarrow \infty} \frac{1}{N} \log(1 - p_N) < 0.
\end{equation*}
\end{theorem}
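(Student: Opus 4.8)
The plan is to recognize the maximization of $U^D$ as an instance of the abstract stochastic program recalled above, and then to invoke Theorem~15 of \cite{shapiro2003monte} essentially verbatim; the only real work is to check that its three hypotheses \textbf{(C1)}--\textbf{(C3)} hold for our instance. First I would rewrite the problem as a minimization: maximizing $U^D(\Gb)$ over the box $X = [\underline{G}_1, \overline{G}_1]\times \cdots \times[\underline{G}_m, \overline{G}_m]$ is equivalent to minimizing $-U^D(\Gb) = E[F(\Gb, \xi)]$, where the random function $F(\cdot, \xi)$ equals $G_i$ in the scenario of type $i \in \llbracket 1, m \rrbracket$ (occurring with probability $\pa p_i$) and equals $\cfa(v)\pi_{\Gb}(v)$ in the scenario of vector $v \in \Vset$ (occurring with probability $(1 - \pa)\Pna(v)$), so that $\xi$ ranges over the finite set of $m + |\Vset|$ scenarios. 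The sample average approximation of this program is precisely the maximization step of Algorithm~\ref{algo.stoc}, its optimal set is $\hat{\Sset}$, and the optimal set of the true program is $\argmax U^D(\Gb)$.

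I would then verify the hypotheses. For \textbf{(C1)}: $X$ is a box, hence a bounded, closed, convex polyhedron. For \textbf{(C3)}: since $\Vset$ is finite, $\xi$ has finite support. For \textbf{(C2)}: fix a realization $\xi$. In a type-$i$ scenario, $F(\cdot, \xi) = G_i$ is affine on $\mathbb{R}^m$, hence finite-valued, convex, continuous (so lower semicontinuous) and piecewise linear. In a vector-$v$ scenario, write
\begin{equation*}
\pi_{\Gb}(v) = \max\Bigl\{0,\ \max_{i \in \llbracket 1, m \rrbracket}\bigl(a_i^v - b_i^v\, G_i\bigr)\Bigr\}, \qquad a_i^v := \tfrac{\LossU_i(v)}{\LossU_i(v) + \GainD_i(v)}, \quad b_i^v := \tfrac{1}{\LossU_i(v) + \GainD_i(v)}.
\end{equation*}
Under the standing non-degeneracy assumption $\LossU_i(v) + \GainD_i(v) > 0$ (implicit already in Definition~\ref{proba.optim}) each $b_i^v > 0$, so $\pi_{\Gb}(\cdot)$ is a pointwise maximum of $m + 1$ affine functions of $\Gb$, hence finite-valued, convex, continuous and piecewise linear; multiplying by the nonnegative constant $\cfa(v)$ preserves all of these properties. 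Thus $F(\cdot, \xi)$ is proper, convex, lower semicontinuous and piecewise linear on its domain $\mathbb{R}^m$ for every $\xi$, which is \textbf{(C2)}. I would also note that $X$ compact together with continuity of $-U^D$ and of its sample average $\tilde{U}^D$ guarantees that both optimal sets are nonempty and contained in $X$, so the theorem's conclusion about $\hat{\Sset}$ is non-vacuous. With \textbf{(C1)}--\textbf{(C3)} established, Theorem~15 of \cite{shapiro2003monte} gives $\limsup_{N\to\infty} \tfrac{1}{N}\log(1 - p_N) < 0$, which is the claim.

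The main obstacle --- and the only step where anything specific to our model enters --- is the verification of \textbf{(C2)}, i.e., seeing that each realized scenario contributes a convex, piecewise-linear function of the \emph{low-dimensional} parameter $\Gb$. This is exactly what Proposition~\ref{thm.eql} buys: had we worked with the defender's mixed strategy $\beta$, or even with the detection profile $\pi^{\beta}$, the objective would have no such polyhedral structure on a space whose dimension is independent of $|\Vset|$, and the sharp large-deviation rate of \cite{shapiro2003monte} would be unavailable. I would close with the remark (consistent with the discussion in the body) that \textbf{(C3)} can be relaxed to continuous support under mild integrability conditions, at the price of a weaker, polynomial-in-$N$ rate rather than the exponential one above.
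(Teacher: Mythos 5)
Your proposal is correct and follows essentially the same route as the paper: cast the maximization of $U^D(\Gb)$ as a finite-scenario stochastic program over the box $[\underline{G}_1, \overline{G}_1]\times \cdots \times[\underline{G}_m, \overline{G}_m]$, check the polyhedrality/convexity/finite-support hypotheses, and invoke Theorem~15 of the sample average approximation literature. The only difference is that you spell out the verification of the hypotheses (in particular that $\cfa(v)\pi_{\Gb}(v)$ is a nonnegative multiple of a pointwise maximum of affine functions of $\Gb$, hence convex and piecewise linear) where the paper declares them to hold trivially; this added detail is welcome and introduces no gap.
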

The result of Theorem~\ref{thm.exponent} then immediately follows by observing that Assumptions \textbf{(C1)}-\textbf{(C3)} trivially hold for the function $U^D(\Gb)$ defined in Definition~\ref{def.mingain} and written as an expectation as explained above Algorithm~\ref{algo.stoc} (with $X = [\underline{G}_1, \overline{G}_1]\times \cdots \times[\underline{G}_m, \overline{G}_m]$ and $\Xi$ is the set of all possible vectors in $\Vset$ and all possible attacker types, hence $|\Xi| = |\Vset| +m$).
\end{proof}

We then justify the fact that bounds in expected value are also relevant for our setting by proving the following lemma: 

\begin{lemma}
\label{lemma.bound.min}
Let $\Gb \in [\underline{G}_1, \overline{G}_1]\times \cdots \times[\underline{G}_m, \overline{G}_m]$. Then, $\min_{\alpha}\sum_i p_i U_i^D(\alpha, \pi_{\Gb}) \ge U^D(\Gb)$.
\end{lemma}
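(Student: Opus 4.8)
The plan is to observe that this lemma is precisely the content of Step~1 in the proof of Proposition~\ref{thm.eql} (the inequality~\eqref{eq.ineq1}), stated for an arbitrary $\Gb$ in the box $[\underline{G}_1,\overline{G}_1]\times\cdots\times[\underline{G}_m,\overline{G}_m]$ rather than only for the maximizer $\Gbmax$; the argument there carries over verbatim. First I would unpack Definition~\ref{proba.optim}: for every $v\in\Vset$ and every type $i$ we have $\pi_{\Gb}(v)\ge \frac{\LossU_i(v)-G_i}{\LossU_i(v)+\GainD_i(v)}$. Multiplying through by $\LossU_i(v)+\GainD_i(v)$ — which is licit since this quantity is positive, as is already needed for $\pi_{\Gb}$ in~\eqref{eq.piG} to be well defined — and rearranging yields, for all $v$ and all $i$,
\[
\LossU_i(v)-\pi_{\Gb}(v)\bigl(\LossU_i(v)+\GainD_i(v)\bigr)\le G_i,
\]
which is exactly inequality~\eqref{eq.ineqG}.

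Next I would feed this pointwise bound into the attacker's payoff~\eqref{eq.pi}: for any attacker strategy $\alpha$ and any type $i$, $U_i^A(\alpha,\pi_{\Gb})=\sum_{v\in\Vset}\alpha^i_v\bigl[\LossU_i(v)-\pi_{\Gb}(v)(\LossU_i(v)+\GainD_i(v))\bigr]$ is a convex combination of quantities each bounded by $G_i$, hence $U_i^A(\alpha,\pi_{\Gb})\le G_i$. Plugging this into $U_i^D(\alpha,\pi_{\Gb})=-\pa U_i^A(\alpha,\pi_{\Gb})-(1-\pa)\sum_v\cfa(v)\Pna(v)\pi_{\Gb}(v)$ gives $U_i^D(\alpha,\pi_{\Gb})\ge -\pa G_i-(1-\pa)\sum_v\cfa(v)\Pna(v)\pi_{\Gb}(v)$. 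Averaging over the prior $(p_i)$ and recognizing the right-hand side as $U^D(\Gb)$ from Definition~\ref{def.mingain}, I obtain $\sum_i p_i U_i^D(\alpha,\pi_{\Gb})\ge U^D(\Gb)$; since this holds for every $\alpha$, taking the minimum over $\alpha$ on the left preserves the inequality, which is the claim.

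There is essentially no genuine obstacle: the statement is a direct consequence of the fact that $\pi_{\Gb}$ is, by construction, exactly large enough to cap each attacker type's payoff at $G_i$. The only point worth a word is the rearrangement step, which relies on the standing positivity of $\LossU_i(v)+\GainD_i(v)$; one also checks that when $\pi_{\Gb}(v)=0$ (the case where the inner maximum in~\eqref{eq.piG} is nonpositive) the displayed inequality reduces to $\LossU_i(v)\le G_i$, which is precisely what $\frac{\LossU_i(v)-G_i}{\LossU_i(v)+\GainD_i(v)}\le 0$ gives. Everything else is bookkeeping already done in Step~1 of the proof of Proposition~\ref{thm.eql}.
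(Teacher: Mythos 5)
Your proof is correct and follows essentially the same route as the paper's: both derive the pointwise bound $\LossU_i(v)-\pi_{\Gb}(v)(\LossU_i(v)+\GainD_i(v))\le G_i$ from the definition of $\pi_{\Gb}$, conclude that $U_i^A(\alpha,\pi_{\Gb})\le G_i$ for every $\alpha$, and substitute into the defender's payoff before averaging over the prior. Your write-up is in fact slightly cleaner, as it makes the positivity of $\LossU_i(v)+\GainD_i(v)$ explicit and avoids the paper's typographical slip of writing $\GainD_i(v)$ where $\LossU_i(v)$ is meant in the numerator.
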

\begin{proof}
By definition of $\pi_{\Gb}$ (see \eqref{eq.piG}), we have for all $i\in \llbracket 1, m \rrbracket$ and $v \in \Vset$, $\frac{\GainD_i(v) - G_i}{\LossU_i(v) + \GainD_i(v)} \le \pi_{\Gb}(v)$. This directly implies $U^A_i(v, \pi_{\Gb}) = \GainD_i(v) - (\LossU_i(v) + \GainD_i(v)) \pi_{\Gb}(v) \le G_i$.
We thus have:
\begin{align*}
\min_{\alpha}\sum_i p_i U_i^D(\alpha, \pi_{\Gb}) &= -\pa\max_{\alpha}\sum_i p_i U_i^A(\alpha, \pi_{\Gb}) - (1 - \pa)\sum_v \cfa(v) \Pna(v) \pi_{\Gb}(v)\\
&\ge - \pa\sum_i p_i G_i - (1 - \pa)\sum_v \cfa(v) \Pna(v) \pi_{\Gb}(v) \\
&\ge U^D(\Gb). \qedhere
\end{align*}
\end{proof}

Lemma~\ref{lemma.bound.min} simply shows that any approximate maximum of $U^D(\Gb)$ also gives an approximate min-max strategy. Thus, any stochastic optimization algorithm which yields bounds in expected value for the minimization of $U^D(\Gb)$ also yields the same expected values guarantees about the minimum gain of the defender.

\subsection{Proof of regret bound for the naive online learning algorithm}
\label{App.online_naive}

Assume that the defender uses Algorithm~\ref{algo.OGDstandard} directly with $\Sset$ being the set of probability distributions on $\Vset$ and with functions $c_t$ such that $c_t(\pi_t) = E_{\pi_t}[l(v_t)]$. If at time $t$ the defender faced an attacker of type $i$, we have $c_t(\pi_t) = \max_{v} [(1 - \pi_t(v)) \LossU_i(v) - \pi_t(v) \GainD_i(v)]$. If the defender faced a non-strategic attacker, we have $c_t(\pi_t) = \cfa(v_t) \pi_t(v_t)$. It is easy to verify that $c_1, \cdots, c_T$ satisfy the conditions of Theorem~\ref{thm.OGD} with $L = \max(\max_v \{\cfa(v)\}, \max_{v,i}\{|\LossU_i(v) + \GainD_i(v)|\})$ and $D^2 = |\Vset|$, hence leading to the regret bound of Theorem~\ref{thm.OGD} with those constants.

\subsection{Proof of Theorem~\ref{thm.online}}
\label{proof.online}

\begin{proof}[Proof of Theorem~\ref{thm.online}]
Algorithm~\ref{algo.efficient} corresponds to online gradient descent from Algorithm~\ref{algo.OGDstandard} applied with $\Sset = [\underline{G}_1, \overline{G}_1]\times \cdots \times[\underline{G}_m, \overline{G}_m]$ and with functions $c_t$ defined as follows:  $c_t(\Gb) = G_i$ if the defender faces an attacker of type $i$ at time $t$ and $c_t(\Gb) = \pi_{\Gb}(v_t) \cfa(v_t)$ if the defender faced a non-attacker at time $t$. 
We first show that 
\begin{equation}
\label{eq.regretmaj}
    \sum_t E_{\pi_{\Gb_t}}[l(v_t)] - \min_{\pi} E_{\pi} [\sum_t l(v_t)]\le \sum_t{c_t(\Gb_t)} - \min_{\Gb} \sum_t c_t(\Gb),
\end{equation} 
in two steps.

\begin{itemize}
\item[\underline{Step 1:}]
Let $\pi^* \in \arg \min_{\pi} E_{\pi} [\sum_t l(v_t)]$ and define
\begin{equation}
G^{\pi^*}_i = \max_v\{\LossU_i(v) - \pi^*(v)\cdot (\LossU_i(v) + \GainD_i(v))\}.
\end{equation}

By definition of $G^{\pi^*}_i$, we have for all $i\in \llbracket 1, m \rrbracket$ and $v \in \Vset$, $\pi^*(v) \ge \frac{\GainD_i(v) - G^{\pi^*}_i}{\LossU_i(v) + \GainD_i(v)}$;  thus, $\pi^*(v) \ge \pi_{\Gb^{\pi^*}}(v)$.

Note that we have $E_{\pi^*}[l(v_t)] \ge c_t(\Gb^{\pi^*})$. Indeed, if at time $t$ a non-attacker was encountered with vector $v_t$, we have $E_{\pi^*}[l(v_t)] = \pi^*(v_t)\cfa(v_t) \ge \pi_{\Gb^{\pi^*}}(v_t)\cfa(v_t) \ge c_t(\Gb^{\pi^*})$; and if an attacker of type $i$ was encountered, we have $E_{\pi^*}[l(v_t)] = \max_v\{\LossU_i(v) - \pi^*(v)\cdot (\LossU_i(v) + \GainD_i(v))\} = G^{\pi^*}_i = c_t(\Gb^{\pi^*})$. 

We thus have $\min_{\pi} E_{\pi} [\sum_t l(v_t)] = E_{\pi^*} [\sum_tl(v_t)] \ge \sum_t c_t(\Gb^{\pi^*}) \ge \min_{\Gb} \sum_t c_t(\Gb)$. 
Additionally, it is trivial to verify that for all $t$ and $\Gb$, we have $E_{\pi_{\Gb}}[l(v_t)] \le c_t(\Gb)$. Combined with the previous inequality, we get: 
\begin{equation}
\label{eq.min.l_t}
    \min_\pi E_{\pi}[\sum_t l(v_t)] = \min_{\Gb}c_t(\Gb).
\end{equation}

\item[\underline{Step 2:}]
As stated above, it is trivial to verify that for all $t$ and $\Gb$, we have $E_{\pi_{\Gb}}[l(v_t)] \le c_t(\Gb)$. In particular, this holds for $\Gb = \Gb_t$, which directly implies:
\begin{equation}
\label{ineq.l_t}
    \sum_t E_{\pi_{\Gb_t}}[l(v_t)] \le \sum_t c_t(\Gb).
\end{equation}
\end{itemize}

Combining \eqref{eq.min.l_t} and \eqref{ineq.l_t} immediately gives~\eqref{eq.regretmaj}.
To conclude the proof, we apply the regret bound of Theorem~\ref{thm.OGD} to the right-hand side of~\eqref{eq.regretmaj}, noting that each loss function is convex and that it can be easily verified from the definitions of $D$ and $L$ that the conditions of Theorem~\ref{thm.OGD} are satisfied. 
\end{proof}


\section{Classical online gradient descent algorithm and associated regret bound}
\label{app.gradient.descent}

In this section, we present the online gradient descent algorithm (termed ``greedy projection'' in \cite{zinkevich2003online}) and the associated regret bound from \cite{zinkevich2003online}. 
Let ($c_1, \dots, c_T$) be convex functions defined on a convex set $\Sset$. Let $\Pi_{\Sset}$ to be the Euclidian projection on $\Sset$, and let $\eta_t$, $t=1, \cdots, T$, be a sequence of learning rates. 
Let $\partial c_t(x)$ denote the set of sub-gradients of $c_t$ at point $x$. The online gradient descent algorithm is as follows.
\begin{algorithm}
\caption{Online gradient descent (OGD)}
\label{algo.OGDstandard}
\begin{algorithmic} 
\STATE Initialize $x_1 \in \Sset$ arbitrarily
\FOR{$t = 1 \dots T$}
\STATE Observe $c_t$
\STATE Select $x_{t + 1} = \Pi_{\Sset}(x_t - \eta_t \nabla c_t(x_t))$ for any $\nabla c_t(x_t) \in \partial c_t(x_t)$
\ENDFOR
\end{algorithmic}
\end{algorithm}

Then, \cite{zinkevich2003online} shows that we have the following regret bound.
\begin{theorem}[Zinkevich, 2003]
\label{thm.OGD}
Assume that 
\begin{equation*}
    \max_{x_1, x_2 \in \Sset}||x_2 - x_1||_2 \le D
\end{equation*} 
and 
\begin{equation*}
    ||\nabla c_t(x)||_2 \le L, \quad \forall t \in \{1, \dots, T\}, x \in \Sset, \forall \nabla c_t(x) \in \partial c_t(x).
\end{equation*} 
Let $(x_1, \dots, x_t)$ be vectors in $\Sset$ selected by Algorithm~\ref{algo.OGDstandard} with $\eta_t = \frac{1}{\sqrt{t}}$. Then, the regret accumulated at time $T$, defined as $R(T) \equiv \sum_{t=1}^T c_t(x_t) - \min_{x} \sum_{t=1}^T c_t(x)$, is bounded by:
\begin{equation*}
    R(T) \le \frac{D^2\sqrt{T}}{2}+ \left(\sqrt{T} - \frac{1}{2}\right)L^2.
\end{equation*}
\end{theorem}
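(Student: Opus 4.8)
The plan is to prove Zinkevich's online-gradient-descent regret bound directly, from first principles, without invoking any external regret guarantee. Fix an optimal comparator $x^\star \in \arg\min_{x \in \Sset}\sum_{t=1}^T c_t(x)$ and, for each $t$, abbreviate $g_t \equiv \nabla c_t(x_t) \in \partial c_t(x_t)$ for the subgradient actually used by Algorithm~\ref{algo.OGDstandard}. First I would use convexity of each $c_t$ to linearize the regret: the subgradient inequality gives $c_t(x_t) - c_t(x^\star) \le g_t^\top (x_t - x^\star)$, so that $R(T) \le \sum_{t=1}^T g_t^\top(x_t - x^\star)$, and it suffices to bound this linear regret.

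The key per-step inequality comes from tracking the squared distance to $x^\star$. Since $x^\star \in \Sset$ and the Euclidean projection onto the convex set $\Sset$ is non-expansive, writing $x_{t+1} = \Pi_\Sset(x_t - \eta_t g_t)$ yields $\|x_{t+1}-x^\star\|_2^2 \le \|x_t - \eta_t g_t - x^\star\|_2^2$. Expanding the right-hand side and rearranging gives the standard one-step bound
\[
g_t^\top(x_t - x^\star) \le \frac{\|x_t-x^\star\|_2^2 - \|x_{t+1}-x^\star\|_2^2}{2\eta_t} + \frac{\eta_t}{2}\,\|g_t\|_2^2 .
\]
Summing over $t$ splits the regret into a ``distance'' term and a ``gradient'' term, which I would control separately.

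For the distance term I would perform an Abel (summation-by-parts) rearrangement of $\sum_{t=1}^T \big(\|x_t-x^\star\|_2^2 - \|x_{t+1}-x^\star\|_2^2\big)/(2\eta_t)$. Writing $w_t = 1/(2\eta_t) = \sqrt{t}/2$, the rearrangement produces $w_1\|x_1-x^\star\|_2^2 + \sum_{t=2}^T (w_t - w_{t-1})\|x_t-x^\star\|_2^2 - w_T\|x_{T+1}-x^\star\|_2^2$. Because $\eta_t = 1/\sqrt{t}$ is decreasing, the weights $w_t - w_{t-1}$ are nonnegative, so I can substitute the diameter bound $\|x_t-x^\star\|_2^2 \le D^2$; dropping the nonpositive final term, the remaining weighted sum telescopes to $D^2 w_T = D^2\sqrt{T}/2$. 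For the gradient term, $\|g_t\|_2 \le L$ gives $\tfrac{L^2}{2}\sum_{t=1}^T \eta_t = \tfrac{L^2}{2}\sum_{t=1}^T t^{-1/2}$, and I would bound this harmonic-type sum by an integral comparison, $\sum_{t=1}^T t^{-1/2} \le 1 + \int_1^T t^{-1/2}\,dt = 2\sqrt{T}-1$, yielding $L^2(\sqrt{T}-\tfrac12)$. Adding the two contributions reproduces exactly the claimed bound.

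The main obstacle is the distance term and its handling of the time-varying step size $\eta_t$. With a fixed step the sum telescopes trivially, but here the summation-by-parts must be carried out carefully: one has to verify that the coefficients $w_t - w_{t-1}$ are nonnegative so that replacing $\|x_t-x^\star\|_2^2$ by $D^2$ is a valid upper bound (not an over-correction in the wrong direction), and check that discarding $-w_T\|x_{T+1}-x^\star\|_2^2 \le 0$ only helps. Once these sign checks are in place the telescoping collapses cleanly, and the gradient term is routine given the integral comparison for $\sum_{t=1}^T t^{-1/2}$.
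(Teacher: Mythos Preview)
Your proof is correct and is in fact the standard derivation of Zinkevich's bound: linearize the regret via the subgradient inequality, use non-expansiveness of the projection to obtain the one-step distance recursion, Abel-sum the distance term with the increasing weights $w_t=\sqrt{t}/2$ (the sign check $w_t\ge w_{t-1}$ you flag is exactly what makes the diameter substitution valid), and bound the step-size sum by the integral $\sum_{t=1}^T t^{-1/2}\le 2\sqrt{T}-1$.

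There is nothing to compare against, however: the paper does not give its own proof of this statement. Theorem~\ref{thm.OGD} is recalled in Appendix~\ref{app.gradient.descent} as a classical result of Zinkevich (2003) and is used as a black-box tool in the proofs of the paper's own regret bounds (Theorem~\ref{thm.online} and the naive bound in Appendix~\ref{App.online_naive}). So your write-up supplies a self-contained argument where the paper simply cites the reference.
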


Importantly, note that although the bounds $L$ and $D$ appear in the regret bound of Theorem~\ref{thm.OGD}, it is not necessary to know them to run the online gradient descent algorithm and they are not used in the algorithm. Note also that our functions $c_t$ may not be differentiable. As noted in footnote $3$ of \cite{zinkevich2003online}, the algorithm works also in that case, using sub-gradients as presented above.

\section{Computation of the attacker's equilibrium  strategy}
\label{sec.attackerstrategy}

In this section, we show how to compute the attacker's equilibrium strategy. 

First, as we mentioned in Proposition~\ref{lem.incomplete.defender}, the strategy of the defender can be computed through a linear program. The dual of this linear program, however, does not give the attacker's equilibrium strategy. 
Indeed, the dual is the following:
\begin{equation*}
\begin{array}{ll@{}llll}
\underset{\alpha_v^i, \pi_v}{\text{minimize}} &  \displaystyle\sum_{v \in \Vset} \sum_{i \in \llbracket 1, m \rrbracket} &\alpha^i_v  \LossU_i(v) + \sum_{v \in \Vset} &\pi_v\\
\text{subject to:} & & &\pi_v & \ge 0,& \forall v \in \Vset\\
& & \alpha^i_v & & \le 0, & \forall i \in \llbracket 1, m \rrbracket, \forall v  \in \Vset\\
& \sum_{v \in \Vset} &\alpha^i_v & &= -p_i, &\forall i \in \llbracket 1, m \rrbracket \\
& \sum_{i \in \llbracket 1, m \rrbracket} & \alpha^i_v (\LossU_i(v) + \GainD_i(v)) + & \pi_v &\ge - \frac{1 - \pa}{\pa}\cfa(v)\Pna(v),  & \forall v \in \Vset
\end{array}
\end{equation*}

The second and third constraints could make sense if we  considered the variables $-\alpha_v^i/p_i$, but the first and fourth constraints do not correspond to the problem. Indeed, with these constraints $\pi_v$ is unrestricted so it does not necessarily correspond to a probability of detection function. Additionally, $\alpha^i_v$ does not fit the characterization of the strategy of the attacker at equilibrium given in Lemma~\ref{lem.incomplete-attacker} below.
While it may seem counter-intuitive that the dual of the linear program giving the min-max strategy of the defender does not output the min-max strategy of the attacker, recall that the min-max strategy of the defender was not computed with the standard linear program for min-max problems but through a linear program computing the maximum of a piecewise-linear function.

We now give the following characterization of the attacker's strategy at equilibrium: 
\begin{lemma}
\label{lem.incomplete-attacker}
Let $(\alphaeq, \betaeq)$ be a BNE of $\mathcal{G}$, then:
\begin{align*}
&\forall v \in \Vset \text{ s.t. } 0 < \pi^{\betaeq}(v) < 1:& \sum_i p_i{\alphaeq}^i_v (\LossU_i(v) + \GainD_i(v)) = \frac{1 - \pa}{\pa}\cfa(v)\Pna(v),  \\
&\forall v \in \Vset \text{ s.t. } \pi^{\betaeq}(v) = 0:& \sum_i p_i{\alphaeq}^i_v (\LossU_i(v) + \GainD_i(v)) \le \frac{1 - \pa}{\pa}\cfa(v)\Pna(v), \\
&\forall v \in \Vset \text{ s.t. } \pi^{\betaeq}(v) = 1 \text{ and } v \in \Vset :& \sum_i p_i{\alphaeq}^i_v (\LossU_i(v) + \GainD_i(v)) \ge \frac{1 - \pa}{\pa}\cfa(v)\Pna(v).
\end{align*}
\end{lemma}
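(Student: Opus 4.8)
\textbf{Proof plan for Lemma~\ref{lem.incomplete-attacker}.}
The plan is to exploit the fact that $(\alphaeq, \betaeq)$ is a BNE of the (essentially zero-sum) game $\mathcal{G}$, so that, in addition to Lemma~\ref{lem.minmax}, the attacker's strategy $\alphaeq$ must be a best response to $\betaeq$ and, simultaneously, $\betaeq$ (equivalently $\pi^{\betaeq}$) must be a best response to $\alphaeq$. The three displayed conditions are exactly the complementary-slackness / first-order optimality conditions for the defender's problem of choosing $\pi^{\betaeq}(\cdot) \in [0,1]^{\Vset}$ to minimize $\pa \sum_i p_i U_i^A(\alphaeq, \pi) + (1-\pa)\sum_v \cfa(v)\Pna(v)\pi(v)$ against the \emph{fixed} attacker strategy $\alphaeq$. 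So the first step is to write this defender objective explicitly, using \eqref{eq.pi}, as a sum over $v \in \Vset$ of terms
\[
  \pi(v)\Big[(1-\pa)\cfa(v)\Pna(v) - \pa \sum_i p_i {\alphaeq}^i_v\big(\LossU_i(v) + \GainD_i(v)\big)\Big] + (\text{const in }\pi),
\]
which is separable and linear in each coordinate $\pi(v)$.

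The second step is the coordinatewise minimization. For each $v$, the minimizing value of $\pi(v)\in[0,1]$ depends only on the sign of the bracketed coefficient $\kappa_v := (1-\pa)\cfa(v)\Pna(v) - \pa \sum_i p_i {\alphaeq}^i_v(\LossU_i(v)+\GainD_i(v))$: if $\kappa_v > 0$ the minimizer is $\pi(v)=0$; if $\kappa_v<0$ it is $\pi(v)=1$; and if $\pi^{\betaeq}(v)$ lies strictly in $(0,1)$ then necessarily $\kappa_v=0$. Since $\pi^{\betaeq}$ is a minimizer (by the min-max property from Lemma~\ref{lem.minmax} applied against $\alphaeq$, which is the defender's best response), each of the three cases translates directly into the corresponding inequality or equality on $\sum_i p_i {\alphaeq}^i_v(\LossU_i(v)+\GainD_i(v))$ after dividing by $\pa$ and rearranging, i.e.\ $\pi^{\betaeq}(v)=0 \Rightarrow \kappa_v\ge 0$, $\pi^{\betaeq}(v)=1 \Rightarrow \kappa_v \le 0$, and $0<\pi^{\betaeq}(v)<1 \Rightarrow \kappa_v = 0$. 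This yields exactly the three claimed relations.

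The step I expect to require the most care is justifying that $\pi^{\betaeq}$ genuinely minimizes the separable objective \emph{coordinatewise over all of $[0,1]^{\Vset}$}, rather than merely over the subset of $\pi$-functions realizable by some $\beta \in \Delta(\Cset)$. This is where one invokes the reduction from Section~\ref{sec.results}: every function $\pi:\Vset\to[0,1]$ is attainable as $\pi^{\beta}$ for some mixed strategy $\beta$ (via the uniform-random-threshold construction recalled after \eqref{eq.pi}), so the defender's feasible set in $\pi$-space is the full cube $[0,1]^{\Vset}$, and the minimization is unconstrained-per-coordinate as claimed. A minor additional point is that one must check the objective is indeed linear (not just convex) in each $\pi(v)$ — this is immediate from \eqref{eq.pi} since $U_i^A(\alphaeq,\pi)$ is affine in $\pi$ — so that the argmin over $[0,1]$ is characterized purely by the sign of the slope. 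With these observations in place, the three conditions follow by inspection, and no further computation is needed.
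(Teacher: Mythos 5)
Your proposal is correct and follows essentially the same route as the paper: both reduce to the coordinatewise first-order (complementary-slackness) conditions of the defender's separable, linear best-response problem in $\pi$-space, with the sign of the per-coordinate slope $\kappa_v$ playing exactly the role of the partial derivative computed in the paper's proof, and your explicit justification that every $\pi\in[0,1]^{\Vset}$ is realizable as $\pi^{\beta}$ fills in a step the paper handles implicitly by ``assimilating'' $\beta$ and $\pi^{\beta}$. The only small slip is attributing the optimality of $\pi^{\betaeq}$ against $\alphaeq$ to Lemma~\ref{lem.minmax}; what is actually used is the best-response condition \eqref{eq.BNE-D} from the definition of a BNE, which is how the paper states it.
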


\begin{proof}
Similarly to earlier proofs, in this proof, to simplify the exposition we assimilate $\beta$ and $\pi^{\beta}$ and write by abuse of notation $U^D_i(\alpha, \pi^{\beta})$ to denote the defender payoff \eqref{eq.pi}. Let $(\alpha, \beta)$ be a strategy profile; then we have for all $v \in \Vset$:
\begin{equation}
\label{eq.deriv.pi}
    \frac{\partial \sum_i p_i U_i\left(\alpha, \pi^{\beta}\right)}{\partial \pi(v)} = \sum_i {\alpha}^i_vp_i(\LossU_i(v) + \GainD_i(v)) - \frac{1 - \pa}{\pa}\cfa(v)\Pna(v).
\end{equation}

As $(\alphaeq, \betaeq)$ is a BNE, by definition, $\betaeq \in \arg \max_{\beta} \sum_i p_i U_i^D(\alphaeq, \beta)$. This implies that:
\begin{align*}
    &\forall v \in \Vset \text{ s.t: } \pi^{\betaeq}(v) = 0: &  \frac{\partial \sum_i p_i U_i\left(\alphaeq, \pi^{\betaeq}\right)}{\partial \pi(v)} = \sum_i {\alphaeq}^i_v p_i(\LossU_i(v) + \GainD_i(v)) - \frac{1 - \pa}{\pa}\cfa(v)\Pna(v) \le 0,  \\
    &\forall v \in \Vset \text{ s.t: } 0 < \pi^{\betaeq}(v) < 1: &  \frac{\partial \sum_i p_i U_i\left(\alphaeq, \pi^{\betaeq}\right)}{\partial \pi(v)} = \sum_i {\alphaeq}^i_vp_i(\LossU_i(v) + \GainD_i(v)) - \frac{1 - \pa}{\pa}\cfa(v)\Pna(v) = 0, \\
    &\forall v \in \Vset \text{ s.t: } \pi^{\betaeq}(v) = 1: &  \frac{\partial \sum_i p_i U_i\left(\alphaeq, \pi^{\betaeq}\right)}{\partial \pi(v)} = \sum_i {\alphaeq}^i_vp_i(\LossU_i(v) + \GainD_i(v)) - \frac{1 - \pa}{\pa}\cfa(v)\Pna(v) \ge 0;
\end{align*}
which directly concludes the proof.
\end{proof}

Intuitively, Lemma~\ref{lem.incomplete-attacker} states that the attacker's strategy strikes a balance between the risk $\frac{1 - \pa}{\pa}\cfa(v)\Pna(v)$ the defender takes to detect a vector $v$ and the average gain associated with the detection of vectors $\sum_i p_i \alpha_v^i (\LossU_i(v) + \GainD_i(v))$.
It allows us to find the best-response of the attacker $\alphaeq$ to a min-max strategy $\betaeq$ of the defender, hence allowing us to find a BNE.

\begin{proposition}
\label{attacker.linear}
Let $\betaeq \in \arg \max_{\beta} \min_{\alpha} \sum_i p_iU_i^D(\alpha, \beta)$. Then there exists a solution to the linear program
$\mathrm{find}_{\alpha}(\betaeq)$:
\begin{equation*}
\begin{array}{ll@{}l}
\underset{\alpha_v^i}{\text{maximize}} & 0\\
\text{subject to:} & 0 \le &\alpha_v^i \le 1, \forall i \in \llbracket 1, m \rrbracket, \forall v \in \Vset, \\
&\sum_{v \in \Vset} &\alpha_v^i = 1, \forall i \in \llbracket 1, m \rrbracket,\\
& & \alpha_v^i = 0, \forall i \in \llbracket 1, m \rrbracket, \forall v \in \Vset, s.t.: \LossU_i(v) - (\LossU_i(v) + \GainD_i(v)) \pi^{\beta^*}(v) < G_i,\\
& \sum_{i} p_i &\alpha^i_v (\LossU_i(v) + \GainD_i(v)) = \frac{1 - \pa}{\pa}\cfa(v)\Pna(v), \forall v \in \Vset, s.t.:\pi^{\beta^*}(v) \in (0,1),\\
& \sum_{i} p_i &\alpha^i_v (\LossU_i(v) + \GainD_i(v)) \le \frac{1 - \pa}{\pa}\cfa(v)\Pna(v), \forall v \in \Vset, s.t.:\pi^{\beta^*}(v) = 0,\\
& \sum_{i} p_i &\alpha^i_v (\LossU_i(v) + \GainD_i(v)) \ge \frac{1 - \pa}{\pa}\cfa(v)\Pna(v), \forall v \in \Vset, s.t.:\pi^{\beta^*}(v) = 1,\\
\end{array}
\end{equation*}
where $G_i = \max_{v}(\LossU_i(v) - (\LossU_i(v) + \GainD_i(v)) \pi^{\betaeq}(v))$. 
Additionally, for any solution $\alphaeq$ of $\mathrm{find}_{\alpha}(\betaeq)$, $(\alphaeq, \betaeq)$ is a BNE.
\end{proposition}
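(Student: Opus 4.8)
### Proof proposal for Proposition~\ref{attacker.linear}

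The plan is to prove two things: (i) the feasible set of the linear program $\mathrm{find}_{\alpha}(\betaeq)$ is nonempty, and (ii) every feasible point $\alphaeq$ forms a BNE together with $\betaeq$. Since the objective is the constant $0$, feasibility is the only content of solvability, and (i) will follow by exhibiting an explicit feasible point, namely an equilibrium attacker strategy. By Lemma~\ref{lem.minmax} and the remark following it, $\betaeq$ being a min-max strategy of the defender means there exists $\alphaeq$ with $(\alphaeq,\betaeq)$ a BNE; I would take any such $\alphaeq$ and verify it satisfies every constraint of $\mathrm{find}_{\alpha}(\betaeq)$. The first two constraints (each $\alpha^i_\cdot$ is a probability distribution on $\Vset$) hold by definition of a mixed strategy of the attacker. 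The fourth, fifth and sixth constraints are exactly the content of Lemma~\ref{lem.incomplete-attacker}, applied to the BNE $(\alphaeq,\betaeq)$, once one checks that $G_i$ as defined in the proposition coincides with the attacker's equilibrium payoff $\max_v\{\LossU_i(v)-(\LossU_i(v)+\GainD_i(v))\pi^{\betaeq}(v)\}$ — which it does, by construction. The third constraint ($\alpha^i_v=0$ whenever $v$ is not a best response for type $i$) is just the statement that a best-responding attacker of type $i$ puts mass only on vectors achieving the maximum $G_i$; this is the standard support characterization of a best response to a fixed $\pi^{\betaeq}$.

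For part (ii), I would argue that any feasible $\alphaeq$ is (a) a best response of each attacker type to $\betaeq$, and (b) such that $\betaeq$ is a best response of the defender to $\alphaeq$. For (a): the third constraint forces each type $i$ to place weight only on vectors $v$ with $\LossU_i(v)-(\LossU_i(v)+\GainD_i(v))\pi^{\betaeq}(v) = G_i = \max_v\{\cdots\}$, and combined with the first two constraints (a probability distribution) this means $\alphaeq$ yields payoff exactly $G_i$ against $\pi^{\betaeq}$, which is the maximum possible; hence $\alphaeq$ is a best response for every type, i.e.\ \eqref{eq.BNE-A} holds. For (b): using the expression \eqref{eq.deriv.pi} for the partial derivative of $\sum_i p_i U^D_i(\alpha,\pi)$ with respect to $\pi(v)$, the last three constraints of $\mathrm{find}_{\alpha}(\betaeq)$ say precisely that this derivative is $\le 0$ when $\pi^{\betaeq}(v)=0$, $=0$ when $\pi^{\betaeq}(v)\in(0,1)$, and $\ge 0$ when $\pi^{\betaeq}(v)=1$; since $\sum_i p_i U^D_i(\alphaeq,\pi)$ is linear (affine) in each coordinate $\pi(v)$ separately and the coordinates range independently over $[0,1]$, these first-order conditions are sufficient for $\pi^{\betaeq}$ to maximize $\pi\mapsto\sum_i p_i U^D_i(\alphaeq,\pi)$ over all probability-of-detection functions. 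Because any defender strategy $\beta$ induces some $\pi^\beta$, and the defender's payoff depends on $\beta$ only through $\pi^\beta$, this gives $\betaeq\in\argmax_\beta\sum_i p_i U^D_i(\alphaeq,\beta)$, i.e.\ \eqref{eq.BNE-D} holds. Together, (a) and (b) give that $(\alphaeq,\betaeq)$ is a BNE.

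The main obstacle I anticipate is (b), the sufficiency direction: one must be careful that the per-coordinate first-order conditions encoded by the constraints genuinely certify a global maximum of the defender's payoff, not merely a stationary point. This is true here because $\sum_i p_i U^D_i(\alphaeq,\pi)$ is \emph{affine} in $\pi$ (from \eqref{eq.pi}, it is $-\pa\sum_i p_i\sum_v\alpha^i_v[\LossU_i(v)-\pi(v)(\LossU_i(v)+\GainD_i(v))] - (1-\pa)\sum_v\cfa(v)\Pna(v)\pi(v)$, with no interaction between distinct $\pi(v)$'s), and the feasible region $\{\pi : 0\le\pi(v)\le1\ \forall v\}$ is a box; for a linear function on a box, the sign conditions on each partial derivative at a vertex-or-interior point of each interval are exactly the KKT conditions, which are sufficient. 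I would spell this out carefully, perhaps noting that one may optimize coordinate by coordinate. A secondary minor point is verifying the identification of $G_i$ in the two parts is consistent — straightforward from the definition $G_i = \max_v(\LossU_i(v)-(\LossU_i(v)+\GainD_i(v))\pi^{\betaeq}(v))$. Everything else reduces to reading off Lemma~\ref{lem.incomplete-attacker} and the definition of a BNE.
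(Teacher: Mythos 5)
Your proposal is correct and follows essentially the same route as the paper: exhibit an equilibrium attacker strategy (which exists by the min-max/max-min symmetry noted after Lemma~\ref{lem.minmax}) as a feasible point via Lemma~\ref{lem.incomplete-attacker}, then show any feasible point is a best response for each attacker type (support constraint) and makes $\betaeq$ a best response for the defender (sign constraints). If anything, you are more careful than the paper on the one nontrivial step---the paper invokes the converse of Lemma~\ref{lem.incomplete-attacker} without comment, whereas you correctly justify that the per-coordinate sign conditions suffice for global optimality because the defender's payoff is affine and separable in the $\pi(v)$'s over the box $[0,1]^{\Vset}$.
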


\begin{proof}
Let $\beta^*$ be a min-max strategy for the defender. Then, from the proof of Lemma~\ref{lem.minmax}, for any $\alpha^* \in \arg \max_{\alpha} \min_{\beta} - \sum_i p_i U_i^D(\alpha, \beta)$, $(\alpha^*, \beta^*)$ is a BNE. Thus, it satisfied the conditions of Lemma~\ref{lem.incomplete-attacker}; using those, it is trivial to check that $\alpha^*$ is a solution of $\mathrm{find}_{\alpha}(\beta^*)$. So this linear program admits a solution.

Next, any solution $\alpha^*$ of $\mathrm{find}_{\alpha}(\beta^*)$ satisfies the conditions of Lemma~\ref{lem.incomplete-attacker}, which implies that
\begin{equation}
\label{BNE.part1}
    \beta^* \in \arg\max_{\beta} \sum_i p_i U_i^D(\alpha^*, \beta).
\end{equation}
Additionally, from the third constraint of the linear program, we observe that for any given $i$, by definition of $G_i$, for all $v \in \Vset$ s.t. $v \notin \arg \max_{v^{\prime}}(\LossU_i(v^{\prime}) - (\LossU_i(v^{\prime}) + \GainD_i(v^{\prime})) \pi^{\betaeq}(v^{\prime}))$, we have $\alpha^{*i}_v = 0$. Thus, we have
\begin{equation}
\label{BNE.part2}
    \alpha^* \in \arg \max_{\alpha} \sum_i p_i U_i^A(\alpha, \beta^*).
\end{equation}

Combining \eqref{BNE.part1} and \eqref{BNE.part2}, we conclude that $(\alpha^*, \beta^*)$ is a BNE.
\end{proof}

The linear program finds a valid strategy of the attacker which fits the conditions stated in Lemma~\ref{lem.incomplete-attacker} with the additional condition that each type of attacker must play his most rewarding vectors. 
While we had to resort to a linear program, finding the strategy of the attacker is still done in polynomial time in $m|\Vset|$ as we have a program with $m|\Vset|$ variables and $\mathcal{O} (m|\Vset|)$ constraints.

\section{Continuous game}
\label{sec.continuous}

In the paper we assumed that the set $\Vset$ of possible data is finite, but we make no other assumption on $\Vset$. That leaves a lot of flexibility; in particular it is possible to model situations where the features are categorical or boolean, or discrete numerical values (or a combination of those). Yet, some features are naturally continuous and it can be convenient to model them as such instead of considering a discretization. One of the advantage of our characterization of the BNE is that is naturally extends to continuous feature spaces, as we sketch next.  

To extend the model to continuous feature space, we assume that $\Vset$ is a compact metric set and that the defender's strategy space $\mathcal{D}$ is the set of $M$-Lipschitz continuous functions $\pi$ from $\Vset$ to $[0, 1]$. We use the same parameters and notation as in the finite game with the exception of $\Pna(\cdot)$, which is now a continuous probability distribution. We assume that it has a density and denote it $\fna$. We then have the following payoffs
\begin{align}
    \label{eq.payoffs.cont}
    U_i^A(v, \pi) = & \LossU_i(v)(1 - \pi(v)) - \GainD_i(v)\pi(v), \nonumber\\ 
    U_i^D(v, \pi) = & - \pa U_i^A(v, \pi) - (1 - \pa ) \int_{\Vset}  \cfa(v) \fna(v) \pi(v^\prime) \mathop{dv^{\prime}}.
\end{align}

Let us suppose that all functions are continuous and integrable. This ensures that the game is continuous and well defined, which, thanks to Glicksberg's theorem \cite{Glicksberg}, ensures the existence of a BNE. 
Finally, we define $\underline{\Gb}$, $\overline{\Gb}$ and $\pi_{\Gb}$ as in the finite case. Let us assume that there exists $M \in \mathbb{R}$ such that, for all $\Gb \in [\underline{G}_1, \overline{G}_1]\times \cdots \times[\underline{G}_m, \overline{G}_m]$ the function $\pi_{\Gb}(\cdot)$ is $M$-Lipschitz . This ensures that the optimal strategies defined in the paper are available to the defender. Note that for this assumption to hold, it is sufficient that there exists $C \in \mathbb{R}$ and $\epsilon \in \mathbb{R}$ such that $\forall v \in \Vset$, we have $\frac{1 - \pa}{\pa}\cfa(v)\Pna(v) \le C$ and $\forall v \in \Vset$, we have  $|\LossU_i(v) + \GainD_i(v)| \ge \epsilon$. The first inequality simply implies that false alarm costs should be bounded and the second one that there should always be at least a small difference between the reward an attacker gets with an undetected attack ($\LossU_i(v)$) and a detected attack ($-\GainD_i(v)$).

Note that in this continuous setting, we defined the game with strategies $\pi$ directly for the defender (defining $\Cset$ for the discrete game was useful to get a finite game but this is irrelevant here) and therefore bypass many technical issues on the potential compactness of $\Cset$. 
We also see that the defender need not use mixed strategies as the payoff of any mixed strategy can be attained with a pure strategy $\pi$ being the average of the functions in the mixed strategy. This is explained by the fact that $\pi$ already represents a random classification, removing the need for further randomization.

We can then extend some of our results to the continuous case, in  particular Theorem~\ref{thm.eql} that leads to the form of optimal classifiers. Again for the continuous case, intuitively, considering any equilibrium with attacker utility profile $\Gb$, the defender must have a probability of detection function $\pi(\cdot) = \pi_{\Gb}(\cdot)$ as it is the probability of detection which gives gain $G_i$ for each attacker while minimizing the false alarm cost. This gives the following proposition.
\begin{proposition}
\label{prop.continuous}
For all $\Gb \in [\underline{G}_1, \overline{G}_1]\times \cdots \times[\underline{G}_m, \overline{G}_m]$, let
\begin{equation*}
U^D(\Gb) = -\pa\sum_{i = 1}^m p_i G_i - (1 - \pa)\int_{\Vset} \cfa(v) \fna(v) \pi_{\Gb}(v)\mathop{dv}.
\end{equation*}
For all $\Gbmax \in \argmax_{\Gb \in [\underline{G}_1, \overline{G}_1]\times \cdots \times[\underline{G}_m, \overline{G}_m]}(U^D(\Gb))$, $\ \pi_{\Gbmax}(\cdot)$ is a min-max strategy.
\end{proposition}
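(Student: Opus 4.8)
The plan is to transport the two-step argument behind Proposition~\ref{thm.eql} (Appendix~\ref{proof.thm.eql}) to the continuous setting, replacing sums over $\Vset$ by integrals against $\fna$ and tracking the few places where compactness and continuity enter. Recall that here a defender strategy is a function $\pi\in\mathcal{D}$ (the $M$-Lipschitz maps $\Vset\to[0,1]$), a type-$i$ attacker strategy is a Borel probability measure $\alpha^i$ on $\Vset$ with $U_i^A(\alpha,\pi)=\int_{\Vset}U_i^A(v,\pi)\,d\alpha^i(v)$, and ``$\pi_{\Gbmax}$ is a min-max strategy'' means $\pi_{\Gbmax}\in\argmax_{\pi\in\mathcal{D}}\min_{\alpha}\sum_i p_i U_i^D(\alpha,\pi)$, with $U_i^D$ read off from \eqref{eq.payoffs.cont}. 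As a preliminary, $\pi_{\Gbmax}$ is an admissible strategy: it takes values in $[0,1]$ by \eqref{eq.piG}, it is a finite maximum of continuous functions hence continuous -- in particular Borel measurable, so every integral below is well defined -- and it is $M$-Lipschitz by the standing assumption on the family $\{\pi_{\Gb}\}$.

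\textbf{Step 1: $\pi_{\Gbmax}$ secures at least $U^D(\Gbmax)$ to the defender.} Let $\Gb$ be an arbitrary point of the box $[\underline{G}_1,\overline{G}_1]\times\cdots\times[\underline{G}_m,\overline{G}_m]$. By \eqref{eq.piG}, $\pi_{\Gb}(v)\ge(\LossU_i(v)-G_i)/(\LossU_i(v)+\GainD_i(v))$ for every $i$ and $v$, which rearranges, using positivity of $\LossU_i+\GainD_i$, to $U_i^A(v,\pi_{\Gb})=\LossU_i(v)-\pi_{\Gb}(v)(\LossU_i(v)+\GainD_i(v))\le G_i$. Integrating against any $\alpha^i$ gives $U_i^A(\alpha,\pi_{\Gb})\le G_i$, hence $\max_{\alpha}\sum_i p_i U_i^A(\alpha,\pi_{\Gb})\le\sum_i p_i G_i$, and since in $U_i^D$ the false-alarm term is independent of $\alpha$,
\begin{align*}
\min_{\alpha}\sum_i p_i U_i^D(\alpha,\pi_{\Gb})
&\ \ge\ -\pa\sum_i p_i G_i-(1-\pa)\int_{\Vset}\cfa(v)\fna(v)\pi_{\Gb}(v)\,dv\ =\ U^D(\Gb).
\end{align*}
Specialising to $\Gb=\Gbmax$ yields $\min_{\alpha}\sum_i p_i U_i^D(\alpha,\pi_{\Gbmax})\ge U^D(\Gbmax)$.

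\textbf{Step 2: no $\pi\in\mathcal{D}$ secures more than $U^D(\Gbmax)$.} Fix $\pi\in\mathcal{D}$ and set $G_i^{\pi}=\max_{v\in\Vset}\{\LossU_i(v)-\pi(v)(\LossU_i(v)+\GainD_i(v))\}$, the maximum being attained because $\Vset$ is compact and the bracketed map is continuous ($\pi$ is continuous). As $\LossU_i+\GainD_i>0$, the bracket is affine non-increasing in $\pi(v)\in[0,1]$, so for each $v$ it lies in $[-\GainD_i(v),\LossU_i(v)]$; taking the max over $v$ gives $\underline{G}_i\le G_i^{\pi}\le\overline{G}_i$, i.e. $\Gb^{\pi}$ lies in the box. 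Since $\sum_i p_i U_i^D(\alpha,\pi)$ is affine in each $\alpha^i$ and its $\alpha$-dependent part decouples across types, $\min_{\alpha}$ is attained with each type placing all its mass on a maximizer of $U_i^A(\cdot,\pi)$, whence $\min_{\alpha}\sum_i p_i U_i^D(\alpha,\pi)=-\pa\sum_i p_i G_i^{\pi}-(1-\pa)\int_{\Vset}\cfa(v)\fna(v)\pi(v)\,dv$. The definition of $G_i^{\pi}$ gives $\pi(v)\ge(\LossU_i(v)-G_i^{\pi})/(\LossU_i(v)+\GainD_i(v))$ for all $i,v$, hence $\pi(v)\ge\pi_{\Gb^{\pi}}(v)$ for all $v$; plugging this in and using $\cfa,\fna\ge0$,
\begin{align*}
\min_{\alpha}\sum_i p_i U_i^D(\alpha,\pi)
&\ \le\ -\pa\sum_i p_i G_i^{\pi}-(1-\pa)\int_{\Vset}\cfa(v)\fna(v)\pi_{\Gb^{\pi}}(v)\,dv\\
&\ =\ U^D(\Gb^{\pi})\ \le\ U^D(\Gbmax),
\end{align*}
the last inequality because $\Gbmax$ maximizes $U^D$ over the box.

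Combining the two steps, $\min_{\alpha}\sum_i p_i U_i^D(\alpha,\pi_{\Gbmax})\ge U^D(\Gbmax)\ge\min_{\alpha}\sum_i p_i U_i^D(\alpha,\pi)$ for every $\pi\in\mathcal{D}$, which is exactly the assertion of Proposition~\ref{prop.continuous}. Conceptually this is the finite-case proof verbatim; the only genuinely new work -- and the main (if routine) obstacle -- is the continuous bookkeeping: checking that $\pi_{\Gbmax}$ belongs to the admissible class $\mathcal{D}$, which is precisely why the Lipschitz hypothesis (and its sufficient condition, bounded false-alarm cost and $|\LossU_i+\GainD_i|\ge\epsilon$) is imposed; verifying that the attacker's best response is attained (compactness of $\Vset$ together with continuity of $U_i^A(\cdot,\pi)$, itself continuous because $\pi$ is); and ensuring all integrals are meaningful ($\pi_{\Gb}$, a finite max of continuous functions, is continuous and bounded, and $\cfa\fna$ is integrable). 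I would also note, since the statement quantifies over $\Gbmax\in\argmax U^D$, that such a maximizer exists: $\Gb\mapsto\pi_{\Gb}(v)$ is continuous and uniformly bounded for each $v$, so dominated convergence makes $\Gb\mapsto U^D(\Gb)$ continuous on the compact box.
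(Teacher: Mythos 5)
Your proof is correct and follows the same two-step max--min sandwich used for Proposition~\ref{thm.eql}, which is exactly how the paper justifies the continuous extension (it gives no separate detailed argument for Proposition~\ref{prop.continuous}). The extra bookkeeping you supply---admissibility of $\pi_{\Gbmax}$ in $\mathcal{D}$, attainment of the attacker's best response by compactness and continuity, and the check that $\Gb^{\pi}$ lies in the box so that $U^D(\Gb^{\pi})\le U^D(\Gbmax)$ actually applies---is welcome and fills in details the paper leaves implicit.
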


Proposition~\ref{prop.continuous} states that, as in the discrete case, finding a BNE in the continuous case amounts to finding the maximum of a concave function (function $U^D(\cdot)$ defined above). This problem can be solved using classical convex optimization tools, assuming that one can efficiently evaluate the function, in particular the integral that appears in the function. If one is unable to derive an exact formula for the integral or to devise an efficient Monte-Carlo approximation, it is always possible to use discretization of $\Vset$ with sufficient precision, but with the caveat of scalability issues due to the curse of dimensionality. 
Note, however, that our results on online learning remain valid in the continuous setting as the proof of Theorem~\ref{thm.online} does not rely on the assumption that $\Vset$ is finite.

\section{Omitted details and additional results from numerical experiments}
\label{app.experiments}

\subsection{Hardware and software used for experiments}
\label{app.reproducibility}

All experiments were run on a Dell xps-13 laptop with a Quad core Intel Core i7-8550U (-MT-MCP-) CPU under Ubuntu 18.04.
Experiments were made using Python 3 code which is publicly available at \url{https://gitlab.inria.fr/broussil/adversarial-classification-under-uncertainty}.

Random experiments are all seeded and random numbers are generated with numpy.random. They can be reproduced with the Python 3 code.

The main libraries used are ($i$) pulp for linear programming, ($ii$) numpy for algebra, and ($iii$) matplotlib for plots.

\subsection{Additional numerical results}

In this section, we present the results of additional numerical experiments that further illustrate our model and theoretical results and consolidate the observations of Section~\ref{sec.experiments} but did not fit in the body of the paper. To that end, we use three different games with different characteristics. Game 1 is a very simple game with a single numerical feature and with simple and smooth payoffs.  Game 2 is an artificial a bank fraud scenario (similar to the one used for the illustration in the body of the paper, Section~\ref{sec.experiments}). Game 3 is a completely random game with binary features, which is in some sense worst-case as there is no structure to exploit:
\begin{itemize}

\item[\bf Game 1: One-feature game]

Game 1 is a game with two possible types of attacker in which classification is based on a single feature. The attackers' strategy space consists of $101$ attack $v_0, \cdots, v_{100}$. For Attacker 1, an undetected attack yields a utility $\LossU_1(v_r) = r$.  A detected attack incurs a cost $\GainD_1(v_r) = 30*(r \mod(10))$, see  Figure~\ref{fig.parameters}. Attacker 2, whose strategy space is the same as Attacker 1, has $\LossU_2(v_r) = 100 - r$ and bears a cost $\GainD_2(v_r) = 300 - 30 (r \mod(10))$ in case of detected attack. Their gain and cost functions mirror that of Attacker 1, being interested in low vectors while Attacker 1 is interested in high vectors. Hence, the defender faces two attackers with different interests.
There is a proportion $\pa = 0.2$ of attackers. The defender bears a constant false alarm cost $\cfa = 140$.
A non-attacker follows a binomial distribution, they play the vector $v_r$ with probability $\Pna(v_r) = \binom{100}{r}\theta_0^r(1 - \theta_0)^{100 - r}$ with $\theta_0 = 0.2$.

\item[\bf Game 2: Bank fraud game]

Game 2 is a game similar to the one presented in Section~\ref{sec.experiments}, but with features following an artificial (controlled) distribution. We consider vectors $v$ of the form $v = \amount$ where $\amount$ is the amount of the transaction discretized on integer amounts. We define as in Section~\ref{sec.experiments}, $\LossU(v) = \amount$, $\GainD(v) = 0$ and $\cfa(v) = \ell \amount$ for some $\ell>0$. We prescribe the following non-attacker distribution: $\Pna(v) =p(\amount)$ where the amount of the transaction of a user follows a binomial distribution between $0$ and $25691$ with a mean of $88$.

\item[\bf Game 3: Random game]
Game 3 aims to illustrate learning where there is no correlation between vectors and costs as well as the presence of multiple attackers. It is a game with  four types of attackers with vectors of $k$ binary features, for $k$ up to 19 to be able to compute the exact optimum for comparison. For each attacker type $i$ and vector $v$, $\LossU_i(v)$, $\GainD_i(v)$ and $\cfa(v)$ are assigned a random value uniformly between $10$ and $20$. We set $\pa = 0.1$ and generate $p_i$'s randomly. Distribution $\Pna$ is uniform on the $|\Vset|$-dimensional unit simplex.
\end{itemize}

\subsubsection{Illustrations of the BNE}

We first provide here basic illustrations of our results concerning the structure of the BNE. We use Game 1 as it is the easiest to interpret the results. 
Figure~\ref{fig.BNE} illustrates the behavior of both players at NE when Attacker 2 is not present ($p_2 = 0$). The attacker wants to play high vectors but must follow the distribution of the non-attacker over the vector they deem rewarding enough in regards to the defender's strategy to remain stealthy as stated in Lemma~\ref{lem.incomplete-attacker}. 
The defender detects vectors with some spikes in the probability of detection function corresponding to the spikes in the cost incurred by detection. Indeed, at the equilibrium they make the attacker indifferent between some vectors and in order to do so, vectors which suffer from a sudden increase in cost incurred by detection can be detected less.

\begin{figure}
     \centering
     \begin{subfigure}[b]{0.33\textwidth}
         \centering
         \includegraphics[width=\textwidth]{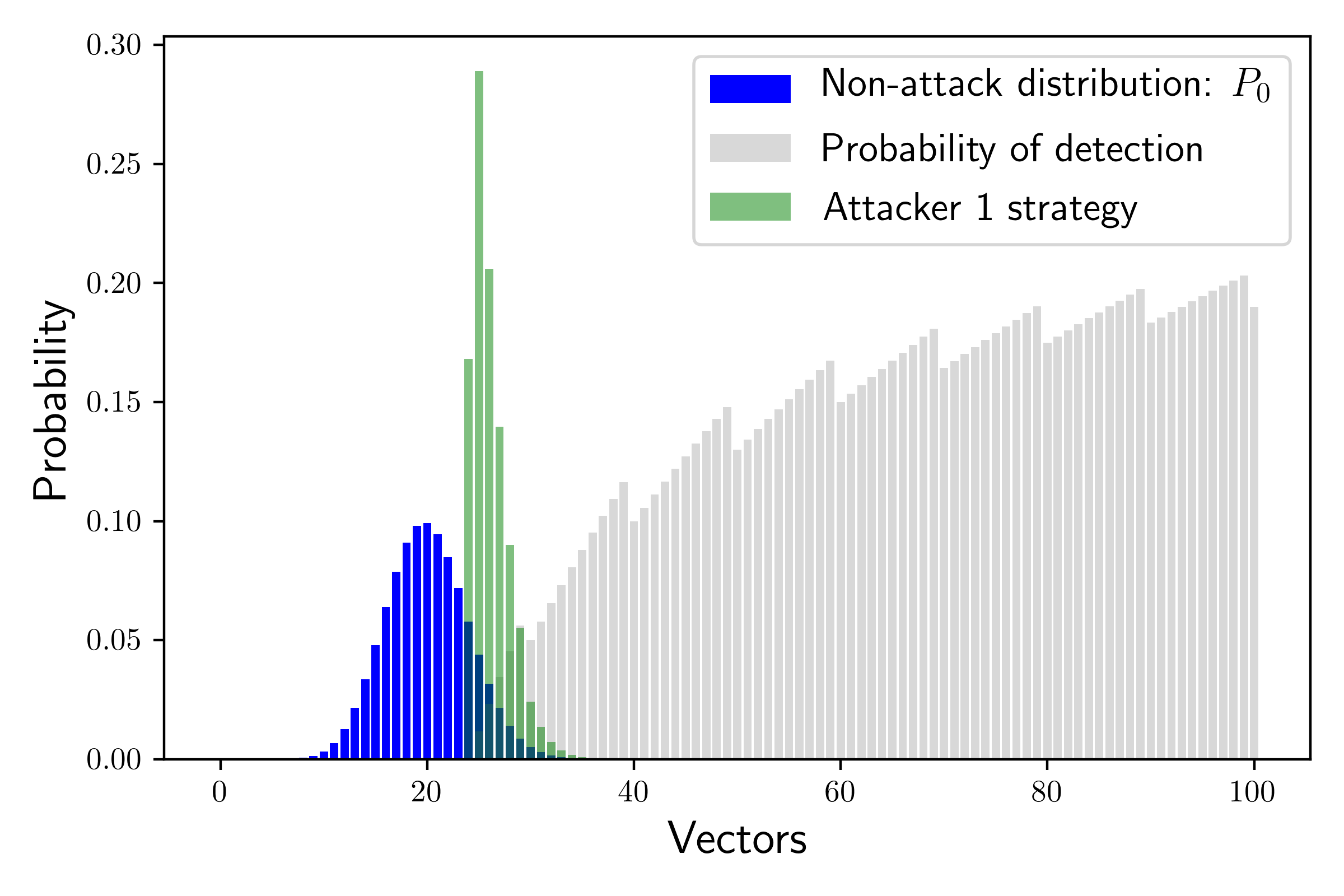}
         \caption{}
         \label{fig.BNE}
     \end{subfigure}
     \begin{subfigure}[b]{0.33\textwidth}
         \centering
         \includegraphics[width=0.9\textwidth]{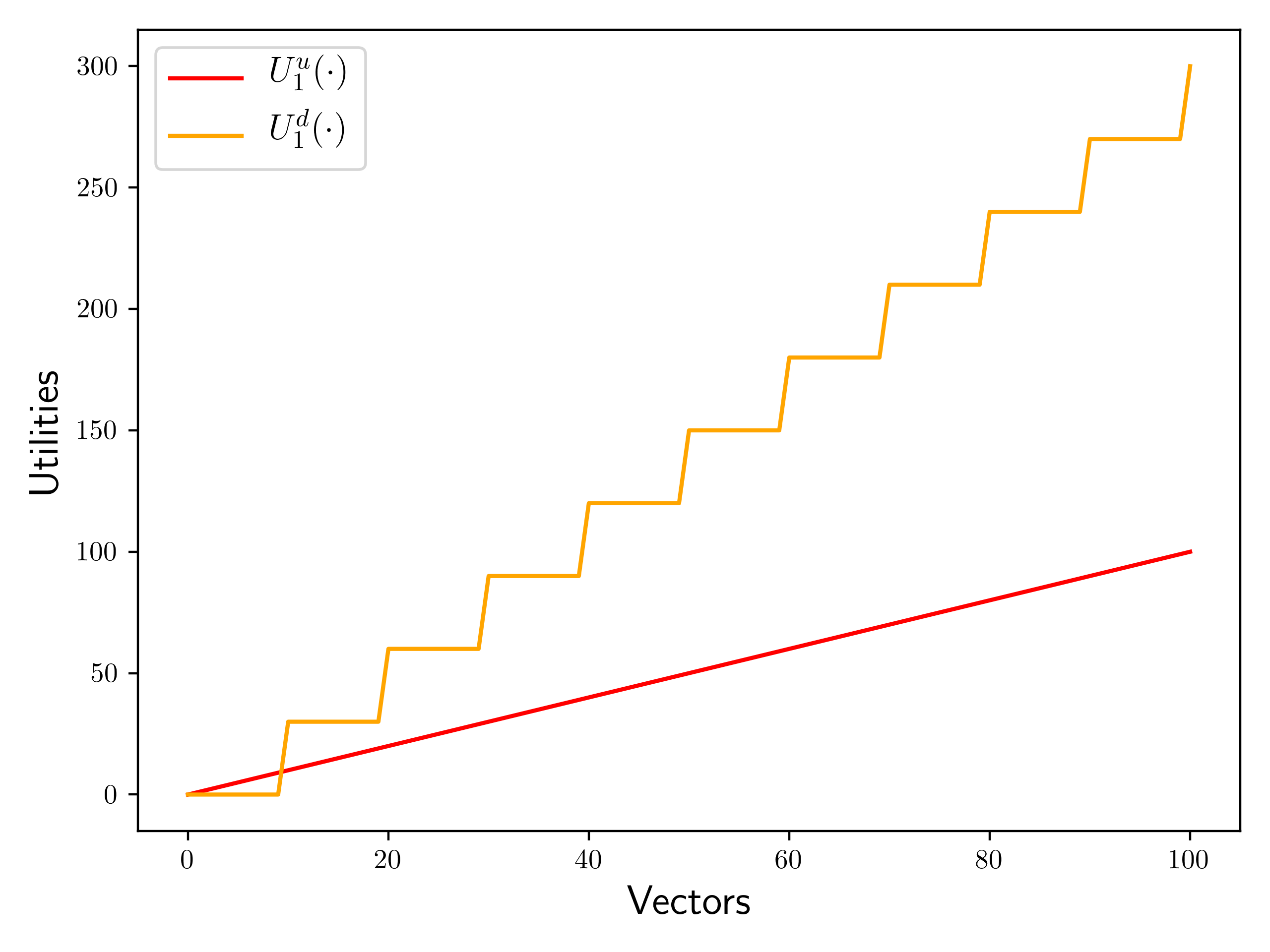}
         \caption{}
         \label{fig.parameters}
     \end{subfigure}
        \caption{Game 1 illustration with only Attacker 1: (a) NE strategies; (b) parameters.}
        \label{fig.complete}
\end{figure}

Figure~\ref{fig.incomplete} illustrates the impact of the presence of more than one type of attacker. In Figure~\ref{fig.BNE.equal} where both attackers are equally likely we observe that, compared to Figure~\ref{fig.BNE}, Attacker 1 benefits from not being the only type of attacker as they play more rewarding vectors than when they were alone because the defender has less interest in detecting him.
However, in Figure~\ref{fig.BNE.non-equal} where the Attacker 2 becomes less likely to appear, the situation of Attacker 1 gets closer to when they were the only attacker and they are reduced to playing less rewarding vectors. On the contrary, Attacker 2 benefits from being less likely as it is less interesting for the defender to detect them so they can play more rewarding vectors.
Note that this situation is much better for Attacker 2 than for Attacker 1. Some of the most rewarding vectors for  Attacker 2 are also used by non-attackers often so they can play them and remain stealthy while the most rewarding vectors of Attacker 1 are almost never used by the non-attackers so they are reduced to playing much less rewarding vectors to remain stealthy.

\begin{figure}
     \centering
     \begin{subfigure}[b]{0.33\textwidth}
         \centering
         \includegraphics[width=\textwidth]{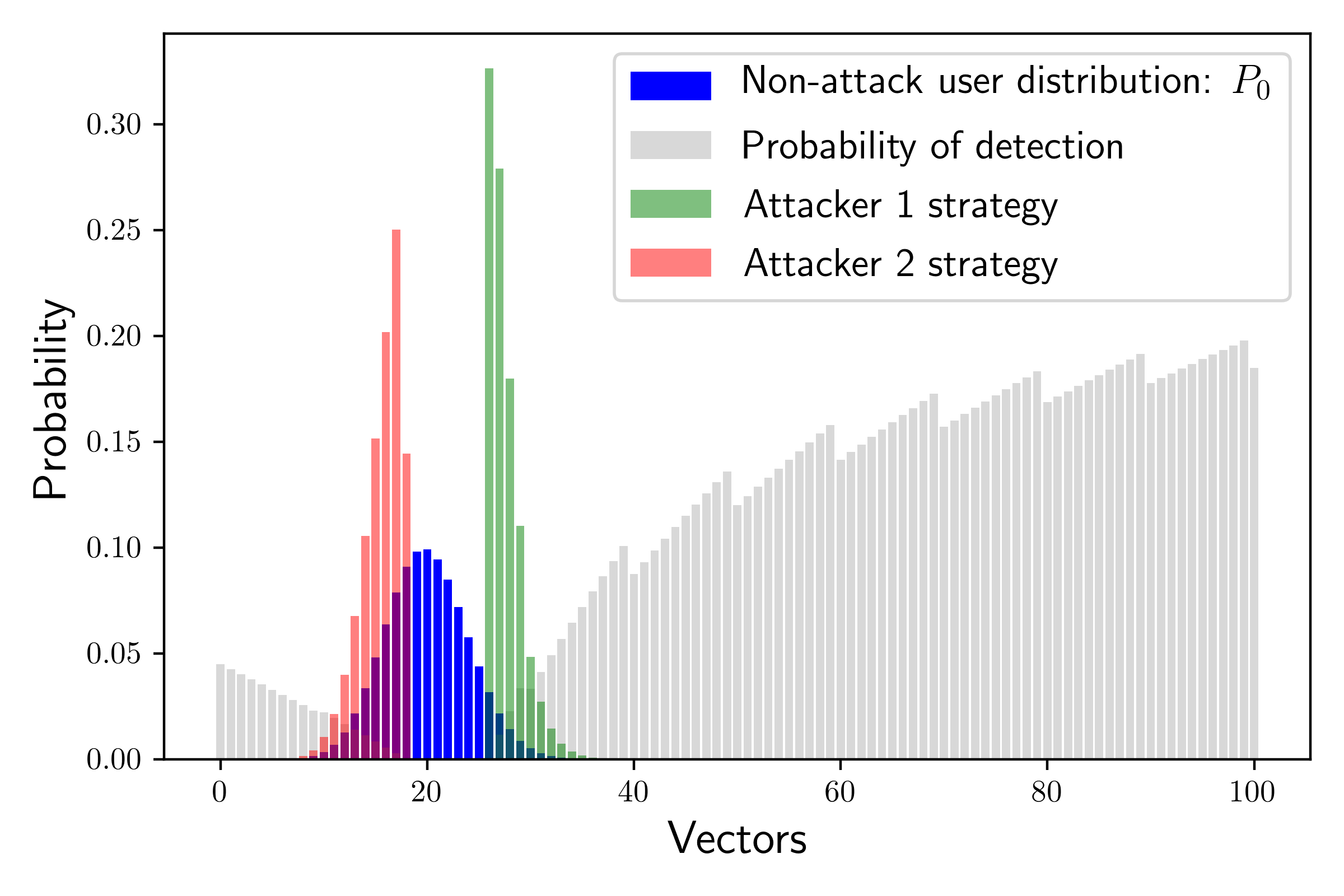}
         \caption{}
         \label{fig.BNE.equal}
     \end{subfigure}
     \begin{subfigure}[b]{0.33\textwidth}
         \centering
         \includegraphics[width=\textwidth]{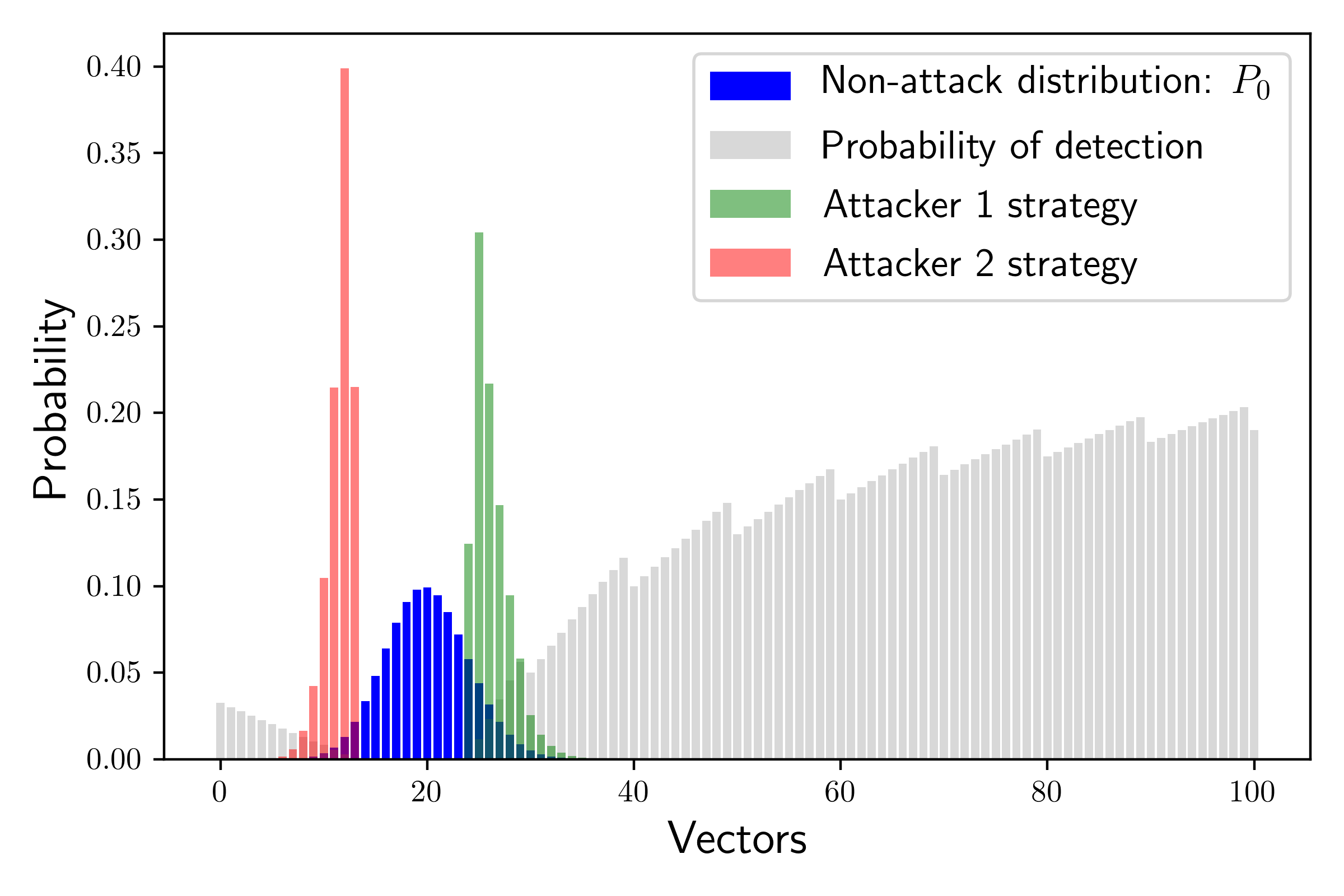}
         \caption{}
         \label{fig.BNE.non-equal}
     \end{subfigure}
        \caption{Game 1 with both attackers: (a) BNE strategies $p_1 = p_1 = 0.5$; (b) BNE strategies $p_1 = 0.95, p_2 = 0.05$.}
        \label{fig.incomplete}
\end{figure}

\subsubsection{Additional illustrations of the training process} 

We now provide illustrations of the training process on Game 2 and Game 3. We use Game 2 to validate our experiments on Section~\ref{sec.experiments} by showing that near-optimal defenses can be found on similar games with small data sets and that large data sets systematically yield a very good approximation. We use Game 3 to illustrate training in a setting with multiple attacker and no simple correspondence between features and costs.

First, Figure~\ref{G_saa} shows the parameter $G$ trained by the defender depending $\ell$ on the setup of Section~\ref{sec.experiments} (using the credit card fraud data set with $N = 284,807$). Recall that $G$ corresponds to the gain of the attacker acting according to its best-response. Thus, a higher $G$ means that the defender is willing to let the attacker gain more. This is compensated by the fact that when $G$ increases, $\pi_G(v)$ decreases and so do the false alarms. We observe that the parameter $G$ increases with the value of $\ell$ which corresponds to the fact that a defender facing higher false alarm costs is less willing to detect non-attacks. We also show computation time for the training on the data set for completeness. These are averaged over $10$ run and plotted with error bars corresponding to one standard deviation and simply show that the training process can be applied with a medium sized data set with reasonable computation times.

\begin{figure}
     \centering

         \centering
\includegraphics[width =0.33\columnwidth]{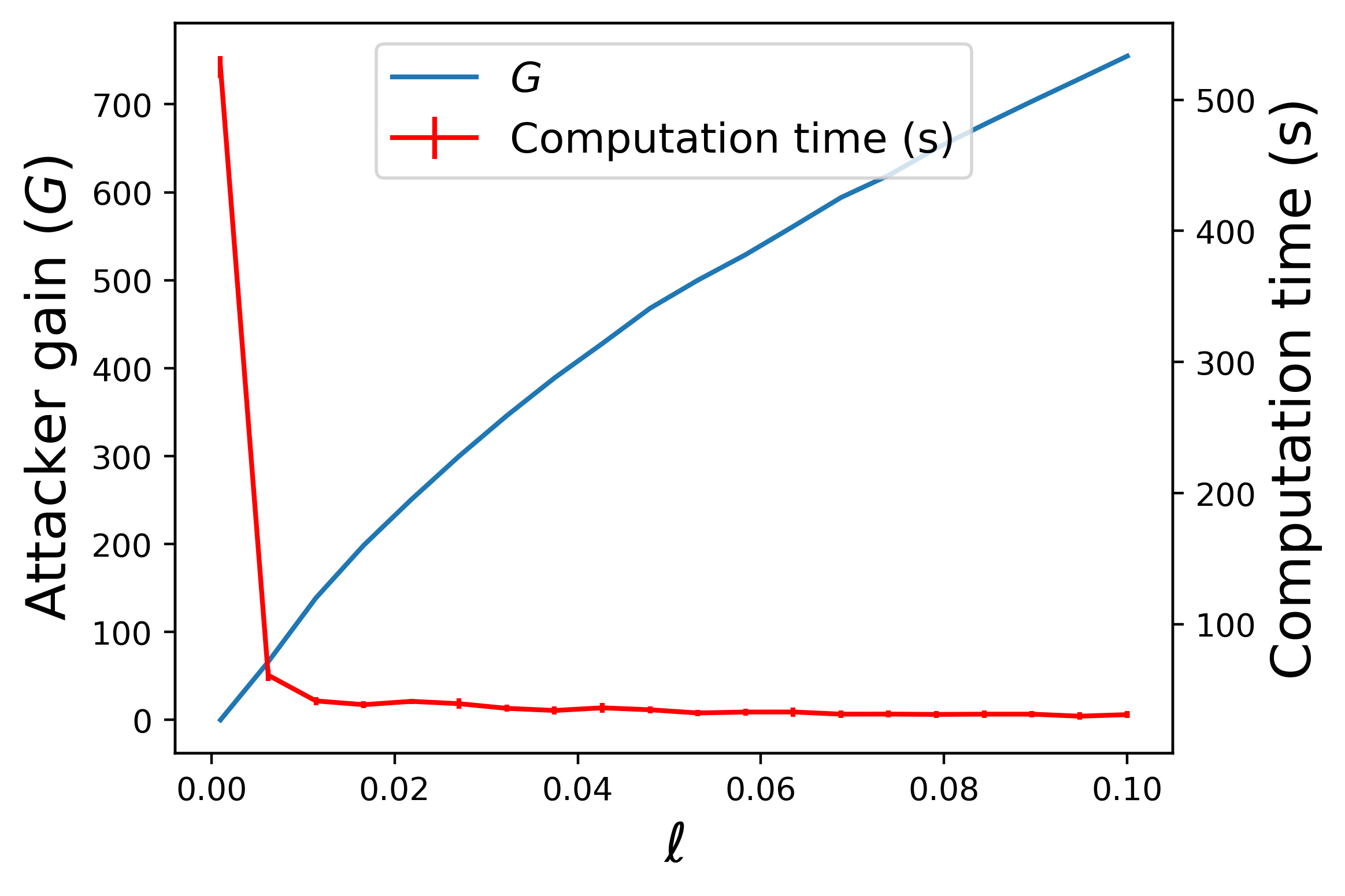}
     \caption{Strategy of the defender training on the data set and computation times}
     \label{G_saa}
\end{figure}

In the following figures, we evaluate the efficiency of our training process using a metric we call approximation ratio which is simply the ratio $100*U^D(\Gbmax)/U^D(\tilde{\Gb})$ (expressed in percentage) where $\Gbmax$ is a maximizer of $U^D$ and $\tilde{\Gb}$ is the outcome of the sample average approximation algorithm. This value is always comprised between $0$ and $100$ with $100$ corresponding to a perfect training. 

Figure~\ref{approx_fraud} illustrates training on Game 2 for different values of the parameter $\ell$ ($0.001$ and the parameters for which we plotted the corresponding probability of detection function in Figure~\ref{empirical}). We plot the ratio between the loss of the defender using the optimal solution and the loss of the defender using the trained solution as well as the probability to obtain the optimal solution through training depending on the size of the data set. These are obtained by running our training algorithm on $300$ different random training set (each training set is generated i.i.d. with replacement). The approximation ratio is the average over the training set and is plotted with error bars corresponding to one standard deviation. The probability to obtain the optimal solution is computed on these random training set. We observe that $\ell = 0.001$ is a best-case for $p_N$. This is caused by the fact that the equilibrium in this setting is trivial. The defender experiences such low false alarm that at equilibrium they classify all vectors as attacks. On the contrary, this is a worse case for the approximation ratio as a slight difference in the proportion of attackers in the training set can lead to a drastic change in strategy in this setting. We also observe that while $p_N$ stays relatively low (but still significant around $0.2$) in all other experiments, the approximation ratio also reaches near $100\%$ on average for data sets of size $5000$. For comparison, we remind that Game 2 is defined with $|\Vset| = 25692$ vectors.
\begin{figure}
\centering
\includegraphics[width = 0.7\columnwidth]{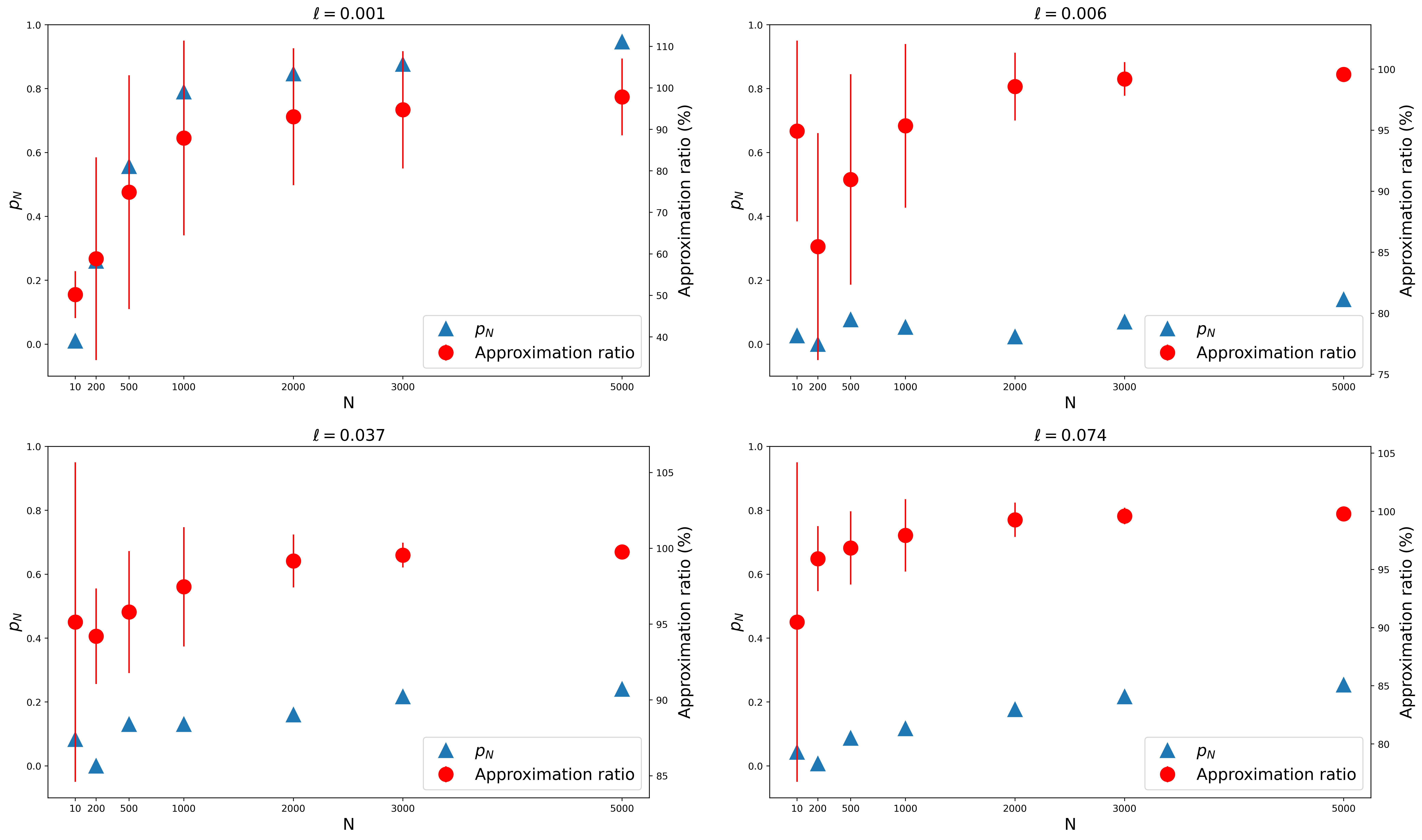}
\caption{Approximation ratio for Game 2.}
\label{approx_fraud}
\end{figure}

Figure~\ref{approx_full} shows the efficiency of the training process through the same metrics when we vary $\ell$ with a training set of the same size as the data set used in Section~\ref{sec.experiments} ($N = 284807$). For each $\ell$ we performed experiments on $10$ random data sets of size $N = 284807$. As previously we plot the averaged approximation ratio with error bars corresponding to one standard deviation. We observe that we systematically obtain the optimal solution, suggesting that a data set of this size is sufficient to correctly learn.
\begin{figure}
\centering
\includegraphics[width = 0.4\columnwidth]{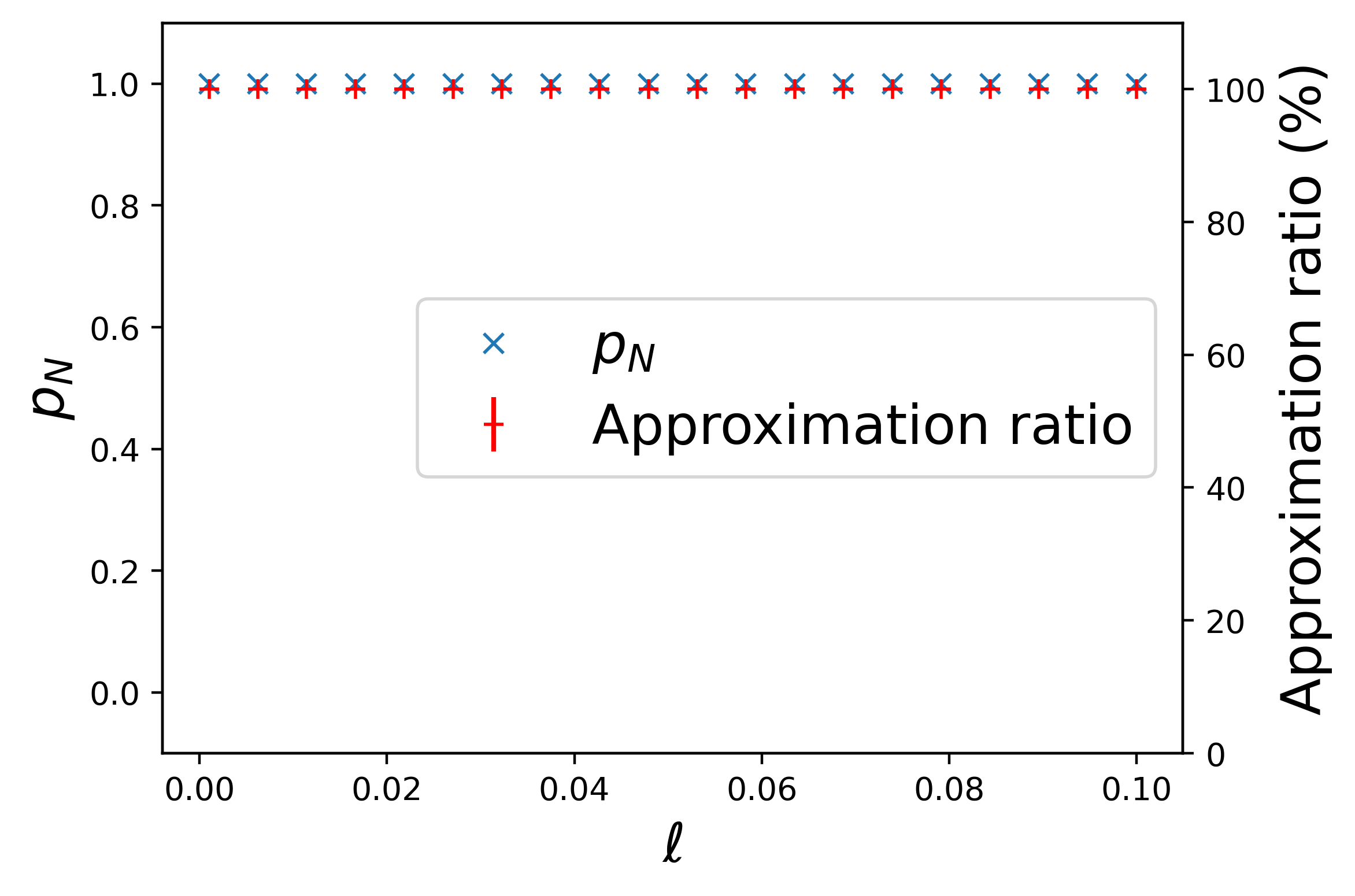}
\caption{Approximation ratio for $N = 284807$}
\label{approx_full}
\end{figure}

Finally, Figure~\ref{approx_random} shows training on Game 3 where for each training set size, we perform experiments $10$ different random game with $4$ types of attackers and $k = 19$. For each game, we perform experiments on $20$ randomly generated training sets (generated iid with replacement) for a total of $200$ experiments per data set size. We observe that the probability to obtain the optimal solution is always null. This is due to the linearity of the problem which makes the optimal values of $\Gb$ being equal to $\LossU(v)$ for some $v$. As there is no relation between the costs of different vectors, with such small data sets, the probability that the vectors corresponding to the optimal parameters are present in the data set are very small and the linear program coming from the sample average approximation procedure cannot find the true optimal. We observe, however, that the approximation ratio is very good even for very small training sets. Note that the efficiency of the approximation is particularly striking in this case as we are able to obtain near-perfect results with training sets of size only $100$ while the number of possible vectors is $|\Vset| = 2^{19}$. This illustrates well the independence in $|\Vset|$ of the training efficiency. Also note that our discussion in Section~\ref{sec.stochastic} about convergence of expected value in at least $N^{-1/2}$ rate translate directly to the approximation ratio. Our graphs suggest, however, that in many of our settings convergence happens at a faster rate.
\begin{figure}
\centering
\includegraphics[width =0.4 \columnwidth]{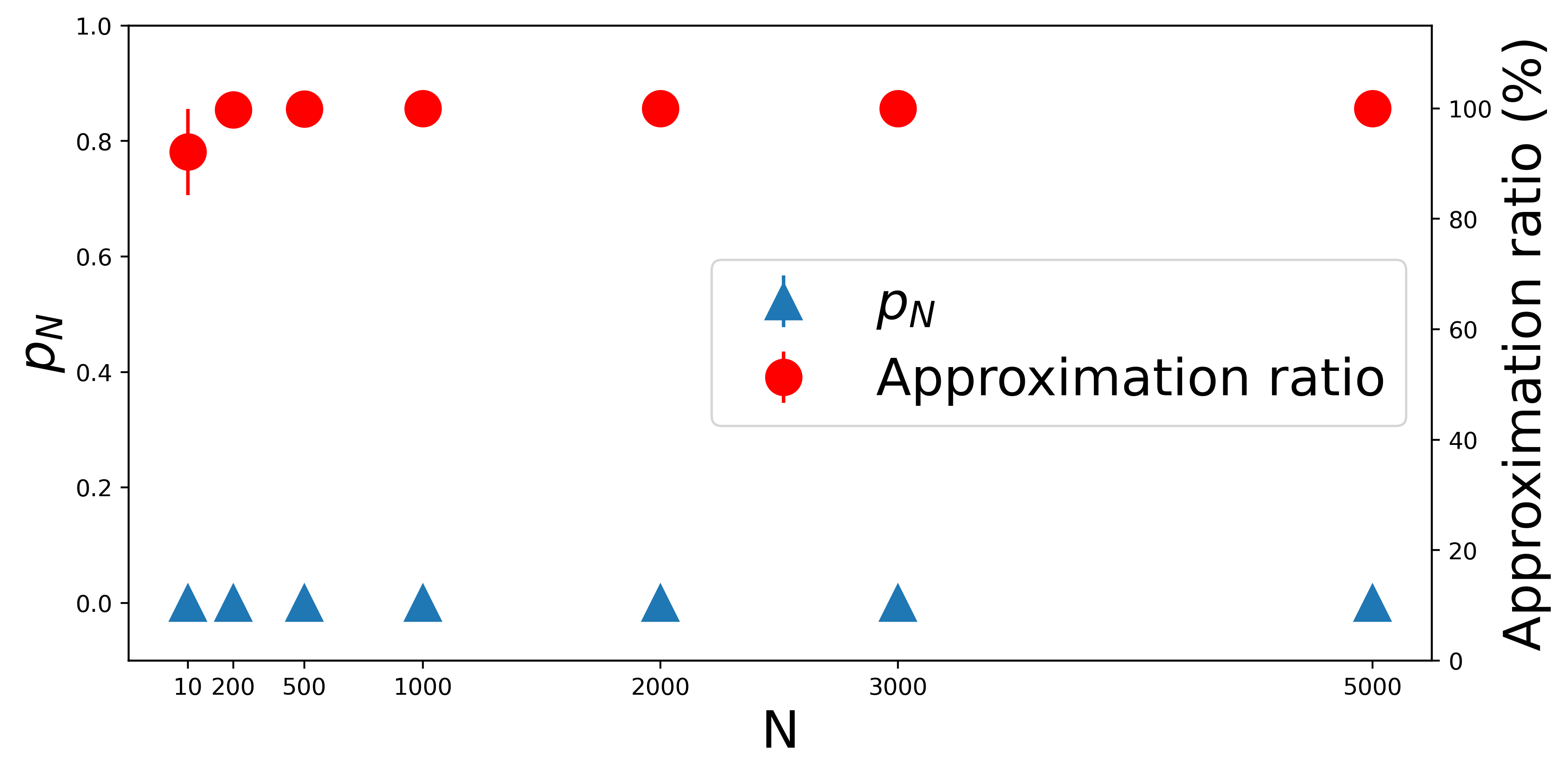}
\caption{Approximation ratio and $p_N$ for Game 3.}
\label{approx_random}
\end{figure}

\subsubsection{Illustrations of the online learning process}

We now illustrate Theorem~\ref{thm.online} using Game 3. We consider the online setting described in Section~\ref{sec.online}:  at each step $t$, the defender faces an attacker of type $i$ with probability $\pa p_i$ and a non-attacker with probability $1 - \pa$. Attackers act according to their best response and non-attacker act according to $\Pna(\cdot)$.

Figure~\ref{online.regret} displays the regret the defender accumulates when learning with Algorithm~\ref{algo.efficient} for different values of $k = \log_2(|\Vset|)$, at time $T=50,000$. For each $k$, we average over $10$ random games. In this experiment, $D$ is close to $40$ in all games which makes the regret bound of Theorem~\ref{thm.online} at least $150,000$; the observed regret is significantly smaller (below $5,000$). We observe that increasing the number of features does not significantly impact the regret, in agreement with the bound of Theorem~\ref{thm.online} that does not depend on the dimensionality of the problem, which illustrates the strength of our parametrization of the defender's strategy.

Figure~\ref{online.distance} displays the distance between the parameters learned by Algorithm~\ref{algo.efficient} and the optimal $\Gbmax$ over time. For this experiment, we run the online algorithm on each game $10$ different times with random starting point for the strategy of the defender. First, we observe that $\Gb_t$ converges towards $\Gbmax$, hence the online strategy converges to the min-max strategy. This is interesting as it is not implied by the no-regret property. Second, it is remarkable to see that the convergence is fast (in $10,000$ steps even when $k=19$). This can seem counterintuitive as one would be unable to learn $\Pna$ in so few steps. However, what we need to learn is only the average false alarm cost associated to a strategy, this is learned fast through the update of the parameters $\Gb$.

\begin{figure}
     \centering
     \begin{subfigure}[b]{0.33\textwidth}
         \centering
       \includegraphics[width=\columnwidth]{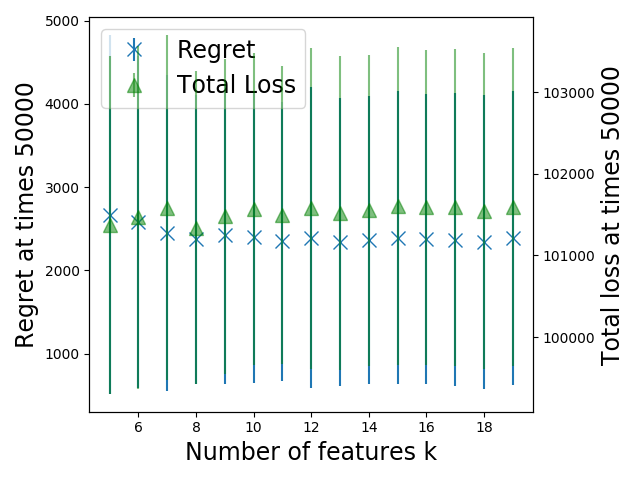}
\caption{Regret for different numbers of features.}
\label{online.regret}
     \end{subfigure}
     \begin{subfigure}[b]{0.33\textwidth}
         \centering
        \includegraphics[width=\columnwidth]{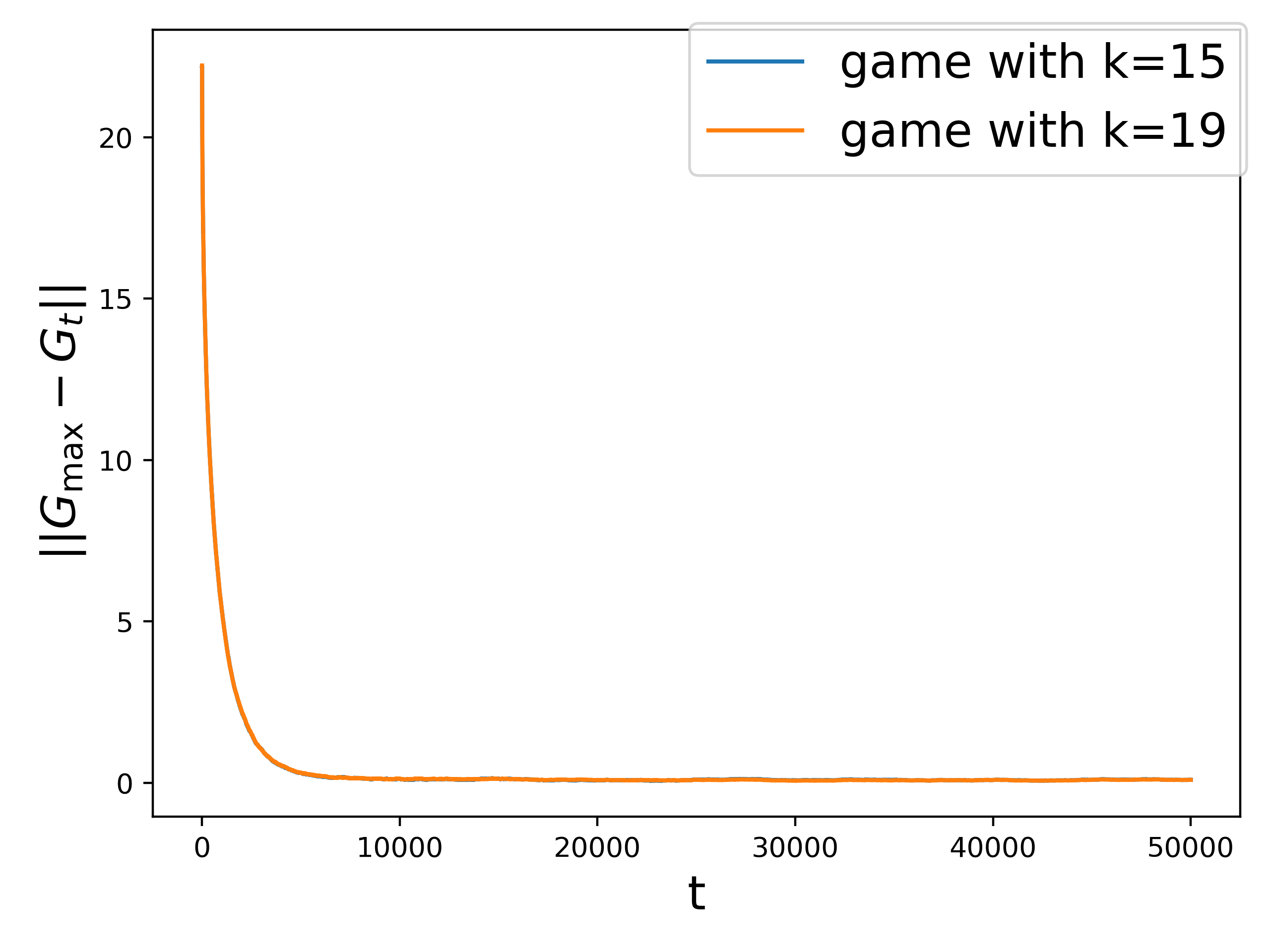}
\caption{Comparison BNE-online strategy.}
\label{online.distance}
     \end{subfigure}
\caption{Regret and distance to the equilibrium}
\end{figure}

Figure~\ref{fig.regret_and_distance} illustrates online learning on Game 3 with $k = 19$ and a number of possible attackers varying from $1$ to $4$. In Figure~\ref{fig.regret_m}, we observe that the regret accumulated by the defender increases with the number of attacker types. This illustrates the fact that our strategy is parametrized by the number of attackers and increasing this number increases the complexity of what we need to learn.
Similarly, Figure~\ref{fig.distance_m} illustrates how far the learned strategy is from the equilibrium during learning.
These two graphs can be contrasted with Figures~\ref{online.regret} and \ref{online.distance} respectively, in which we observed that the regret and distance to equilibrium were not varying with the number of features. 
Thus, this illustrates the fact that our characterization is indeed independent from the number of vectors considered but depends only on the complexity of the characterization, i.e., the number of possible types of attackers.
Finally, Figure~\ref{fig.distance_game_3_err} shows the same plot as Figure~\ref{online.distance} (or Figure~\ref{fig.distance_m} with $m=4$) with error bars that were omitted in the previous plots for readability.

\begin{figure}
     \centering
     \begin{subfigure}[b]{0.33\textwidth}
         \centering
         \includegraphics[width=\textwidth]{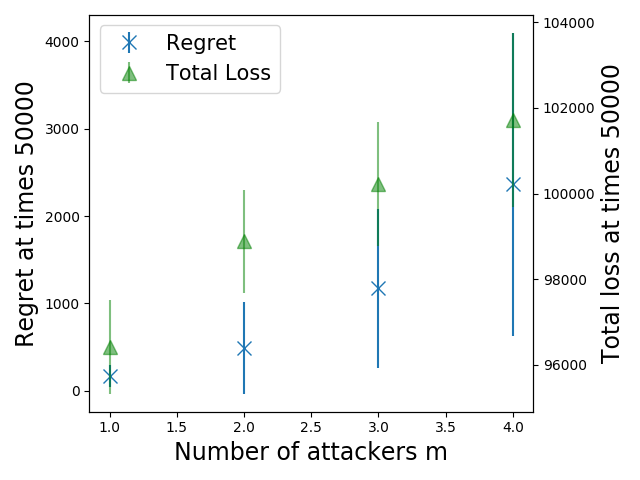}
         \caption{}
         \label{fig.regret_m}
     \end{subfigure}
     \begin{subfigure}[b]{0.33\textwidth}
         \centering
         \includegraphics[width=\textwidth]{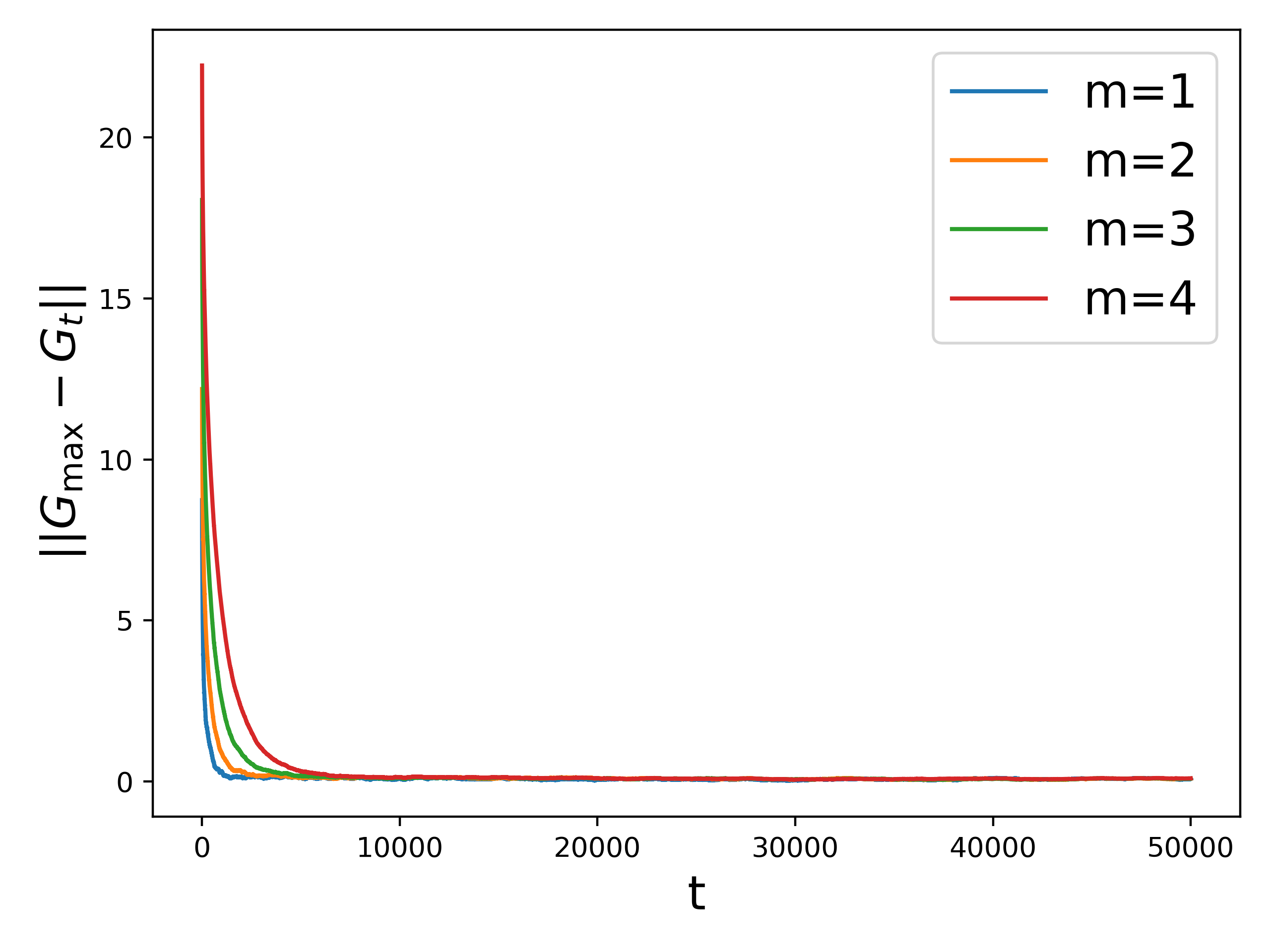}
         \caption{}
         \label{fig.distance_m}
     \end{subfigure}
     \begin{subfigure}[b]{0.33\textwidth}
         \centering \includegraphics[width=\textwidth]{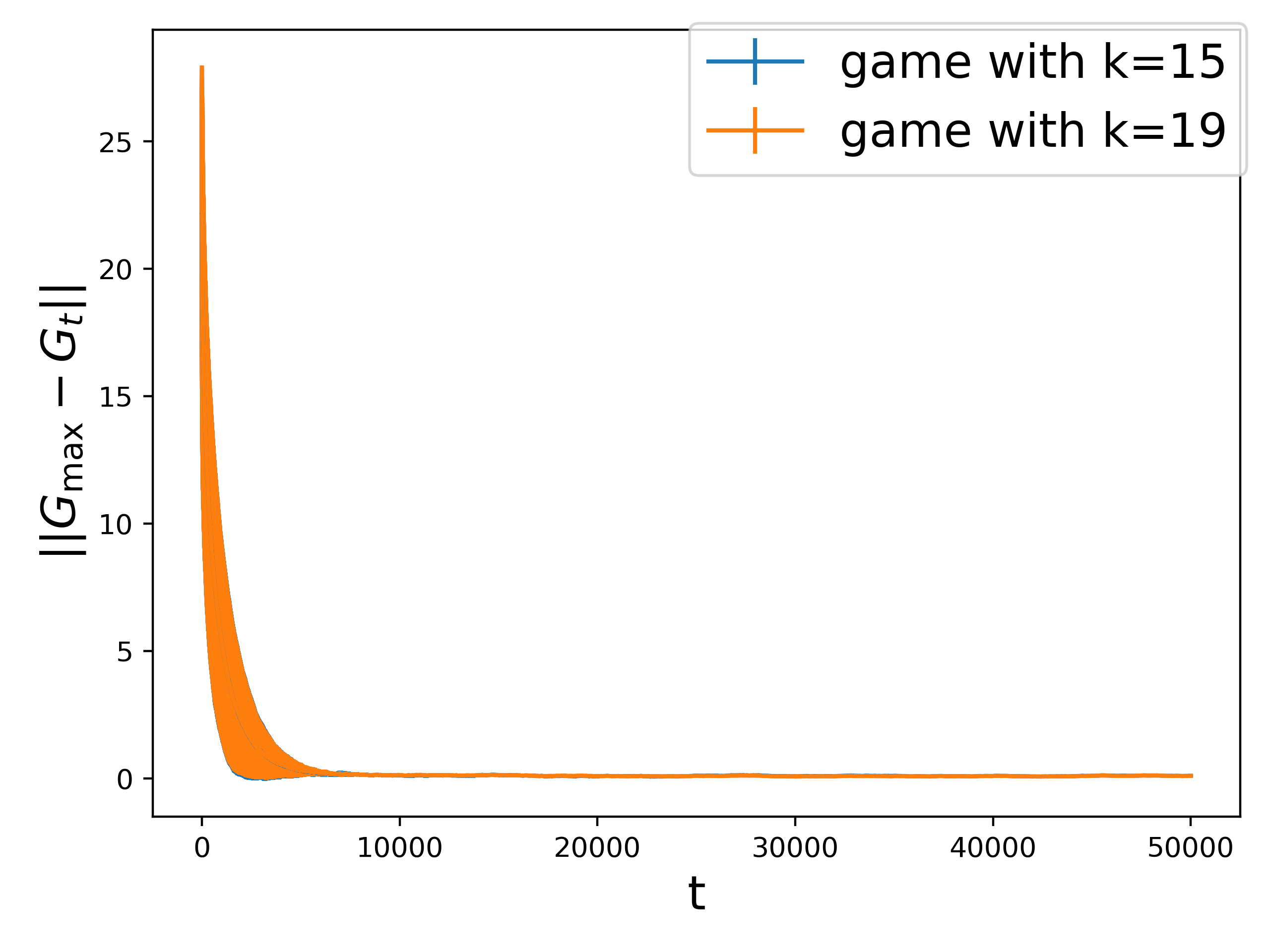}
         \caption{} 
         \label{fig.distance_game_3_err}
     \end{subfigure}
        \caption{Illustration of online learning on Game 2: (a) Regret for different numbers of attackers; (b) Distance to equilibrium for different numbers of attackers; (c) Distance to equilibrium for $m=4$ with error bars.}
        \label{fig.regret_and_distance}
\end{figure}

\end{document}